\newcommand{\cellshade}{\cellcolor{lightgray}}
\newcommand{\efx}{EFFX\xspace}
\newcommand{\efOne}{EFF1\xspace}
\DeclareMathOperator{\aaa}{\mathcal{A}}
\DeclareMathOperator{\bbb}{\mathcal{B}}
\DeclareMathOperator{\pplus}{\cup}
\DeclareMathOperator{\mminus}{\setminus}
\title{\textsc{Fairness under Equal-Sized Bundles:}\\ Impossibility Results and Approximation Guarantees}
\author[1]{Alviona Mancho}
\author[1,2,3]{Evangelos Markakis}
\author[2]{Nicos Protopapas}
\affil[1]{Athens University of Economics and Business, Greece}
\affil[2]{Archimedes, Athena Research Center, Greece}
\affil[3]{Input Output Global (IOG), Greece}
\definecolor{customgray}{HTML}{d9d9d9}
\theoremstyle{plain}
\newtheorem{theorem}{Theorem}[section]
\newtheorem{lemma}[theorem]{Lemma}
\newtheorem{corollary}[theorem]{Corollary}
\newtheorem{definition}[theorem]{Definition}
\theoremstyle{remark}
\newtheorem{remark}[theorem]{Remark}
\newtheoremstyle{nonitalic}{3pt}{3pt}{}{0pt}{\bfseries}{.}{ }{}
\theoremstyle{nonitalic}
\newtheorem{example}{Example}
\algnewcommand\algorithmicforeach{\textbf{for each}}
\definecolor{AegeanBlue}{RGB}{1, 59, 150} 
\begin{document}
\date{}
\maketitle

\begin{abstract}
    We study the fair allocation of indivisible goods under cardinality constraints, where each agent must receive a bundle of fixed 
    size. This models practical scenarios--such as assigning shifts or forming equally sized teams. 
    Recently, variants of envy-freeness up to one/any item (EF1, EFX) were introduced for this setting, based on flips or exchanges of items. Namely, one can define envy-freeness up to one/any flip (\efOne, \efx), meaning that an agent $i$ does not envy another agent $j$ after performing one or any one-item flip between their bundles that improves the value of $i$.
    
    We explore algorithmic aspects of this notion, and our contribution is twofold: we present both algorithmic and impossibility results, highlighting a stark contrast between the classic EFX concept and its flip-based analogue. First, we explore standard techniques used in the literature and show that they fail to guarantee \efx approximations. On the positive side, we show that we can achieve a constant factor approximation guarantee when agents share a common ranking over item values, based on the well-known envy cycle elimination technique. This idea also leads to a generalized algorithm with approximation guarantees when agents agree on the top $n$ items and their valuation functions are bounded. Finally, we show that an algorithm that maximizes the Nash welfare guarantees a 1/2-\efOne allocation, and that this bound is tight.
\end{abstract}

\thispagestyle{empty}
\newpage

\setcounter{page}{1}
\section{Introduction}

Our work falls under the agenda of fair division with indivisible resources. Fair division has grown considerably in the last years in terms of both theoretical foundations, as can be seen by recent surveys such as \cite{DBLP:journals/ai/AmanatidisABFLMVW23}, but also in terms of motivating applications, including course allocation algorithms \cite{budish2011combinatorial}, food donation programs \cite{Mertzanidis0V24}, and many others.
In the context of indivisible resources, the driving force has been the well known by now fact that more traditional fairness notions, such as envy-freeness and proportionality fail to exist. This gradually led to a quest for defining new solution concepts, that are more appropriate for allocating indivisible items to a set of agents. 

Among the plethora of criteria that have been studied in the literature, our work is mostly related to the prominent notions of EF1 (\emph{envy-freeness up to one good}), defined by~\cite{budish2011combinatorial}, and EFX (\emph{envy-freeness up to any good}) defined in~\cite{gourves2014near,caragiannis2019unreasonable}. 
Both notions are defined with respect to a thought experiment for the agents. More precisely, EF1 demands that if an agent $i$ envies another agent $j$, then she stops being envious of $j$, after removing one item from the bundle of agent $j$. EFX is a stronger notion and demands that an agent $i$ stops being envious of another agent $j$, after removing any item from the bundle of agent $j$. 
An EF1 allocation always exists and can be computed efficiently \cite{lipton2004approximately}, but on the other hand the existence of EFX allocations is guaranteed only for some special cases. In fact it forms one of the greatest open problems in the field of fair division. As a result, and as a way to alleviate the absence of exact EFX allocations till now, there has been a steady stream of works that have focused on algorithms for deriving approximately EFX assignments. 

The above thought experiments for EF1 and EFX are meaningful for defining such relaxations of envy-freeness in settings without any further constraints on the allowed allocations. There are scenarios however where different relaxations of envy-freeness could be more appropriate. In particular, the focus of our work is on instances where all agents have to receive a bundle of the same size. This is naturally applicable for example in settings where a set of employees need to be assigned shifts or rotations (e.g., in a hospital), or in cases where one has to pick teams of equal size (with the items here being the candidate team members). 

The first study of this constrained model within fair division was by \citet{FGM14}, which focused on the concept of maxmin fairness. To our knowledge, the recent work of~\citet{Bog2024} is the first to systematically consider a variety of fairness criteria tailored to this setup. In particular, \cite{Bog2024} adapted the definitions of EF1 and EFX to {\it envy-freeness up to one (resp. any) flip}. Essentially, the thought experiment now is that an agent $i$ is happy if the envy towards an agent $j$ is eliminated after performing an exchange of one item from each other's bundles. This seems more suitable for such constrained problems, because simply removing an item from someone's bundle does not yield a feasible bundle. However, the existence of allocations under this new definition of EFX remains open. Even further, no approximation 
algorithms are yet known for this new concept. 

\subsection{Contribution}

In this work, we explore algorithmic aspects of the recently introduced notions of envy-freeness up to one (resp. any) flip, which we denote by \efOne and \efx respectively, to distinguish them from the standard EF1 and EFX criteria. We are particularly interested in the existence of efficient algorithms for exact and approximate \efx allocations for additive valuation functions. 

We start with some warm-up results in Sections~\ref{sec:comparison} and~\ref{sec:RR}. We first compare the notion of \efx with the EFX criterion and demonstrate that they are generally incomparable. 
We continue then in Section~\ref{sec:RR}, where we focus on whether the Round-Robin algorithm and some of its generalizations, i.e., using different picking sequences, can provide any guarantees. One can easily see that an \efOne allocation can be computed efficiently by the Round-Robin algorithm. Moving to the \efx notion, we prove that such allocations can be computed by such algorithms, when the bundles are of size two each. 
For bundles of higher size however, we show a severe negative result, that any algorithm within the class of generalized Round-Robin algorithms (where the order of agents in each round may differ) cannot guarantee any approximation for \efx. This holds even for identical valuations. 

In Section~\ref{sec:ece}, we then study variations of the Envy Cycle Elimination (ECE) algorithm, a dominant tool used in algorithms with indivisible items. In contrast to the standard EFX concept where this algorithm computes a $1/2$-EFX approximation, here we show that its natural adaptation to our constrained setting cannot guarantee any approximation at all for general additive valuations. 
On the positive side, we derive approximation guarantees for three classes of instances, as follows.
\begin{itemize}[noitemsep,leftmargin=*]
    \item {\bf Ordered valuations.} First, we show that the adaptation of the ECE algorithm in our setting achieves  a $1/2$-\efx guarantee for instances where all the agents agree on the ranking of the items, from the most valuable to the least valuable one. This is a commonly studied special case of the problem and our result comes in contrast to the unconstrained setting, where the ECE algorithm produces an exact  EFX allocation, as shown by~\cite{plaut2020almost}.
    \item {\bf Agreement on the top $n$ items.} We then relax the common ranking assumption and focus on the case where all agents only agree on what is the set of the top $n$ most valuable items, where $n$ is the number of agents. For the unconstrained model, a $2/3$-EFX guarantee is known by \cite{markakis23improved}. Our constrained model makes this case more challenging as well. We significantly modify the ECE algorithm by allowing certain agents to give away previously acquired items and by also allowing certain envied agents to also receive new items. Our main result is that this new algorithm achieves a $\min\{1/3,1/(\rho+1)\}$-EFFX allocation, where $\rho$ is the ratio among the maximum and minimum value within the top $n$ items, over all agents. 
    \item {\bf Bounded ratio within the top $n$ items.} Our last guarantee concerns instances where we only have a bound of $\rho$ for the maximum ratio between any two items among the $n$ most valuable items of each agent. We show that the same algorithm as before obtains a $1/(\rho+2)$-approximation.
\end{itemize} 
 
Finally, in Section~\ref{sec:PO}, we consider the combination of attaining some form of efficiency together with \efOne or \efx. We examine three predominant efficient methods, namely, (i) computing a social welfare maximizing allocation, (ii) a leximin order based allocation and (iii) computing a Nash welfare maximizing allocation. 
We expand a result from~\cite{Bog2024} and show that the first two methods cannot guarantee better than $O(1/k)$-\efOne allocations, while the third one cannot guarantee better than $1/2$-\efOne allocations. Our main positive result is that the latter is tight, i.e., any Nash welfare optimal solution is also $1/2$-\efOne. 
This again reveals a difference with the unconstrained setting where a Nash welfare optimal solution is also EF1 \cite{caragiannis2019unreasonable}. 
Furthermore, we also show that the leximin method cannot yield an \efx allocation for instances with ordered valuations, filling a gap left by~\cite{Bog2024}. 

Overall, our results demonstrate that despite the similarity in the definitions, there exist major differences between \efx and the standard notion of EFX. The constraint of equal cardinality bundles introduces technical challenges that make it more intriguing (and currently elusive) to have approximation algorithms for general additive valuations. Nevertheless, one can still 
obtain guarantees for some well-studied families of valuations as outlined above.
 
\subsection{Further related work}

The most closely related work to ours is \cite{Bog2024}. Unlike the more standard models, that do not impose constraints on the size of individual bundles, their work explored fairness when each agent is allocated a bundle of size exactly $k$. They adapted many of the popular fairness notions to incorporate the flip of items, such as proportionality and envy-freeness up to one or up to any flip. Notably, they proved that an envy-free up to any flip allocation always exists under one of the following conditions: (i) the agents have identical valuations, (ii) the agents have binary utilities, or (iii) there are two agents.  However, their work did not study the approximability of these fairness concepts.

In the unconstrained setting, there has been a surge of works on existence and algorithmic results for fairness with indivisible items, especially regarding the EFX notion, which was introduced in \citet{gourves2014near,caragiannis2019unreasonable}.
The first set of positive results were provided by \citeauthor{plaut2020almost}, proving that EFX allocations exist when all agents have identical, not necessarily additive, valuation functions. They also established existence and efficient computation, when all agents agree on the ranking of the items w.r.t. their value. In \cite{ChaudhuryGM24}, existence was established for three agents with additive valuations. A simpler proof, which also allows for some further generalizations, was more recently obtained by \citet{AkramiACGMM25}.
Existence has also been established for instances with three distinct values for the goods \cite{AFS24}, improving on the previous result for bivalued instances \cite{amanatidis2021maximum}.
In light of the challenges of satisfying exact EFX, there is also a stream of works on approximation algorithms, starting with \citeauthor{plaut2020almost}, who showed a $1/2$-EFX approximation (albeit in exponential time) for subadditive valuations. This was later improved to a polynomial-time algorithm in \cite{10.5555/3367032.3367053}. Currently, the best-known approximation is $\phi - 1 \approx 0.618$, due to \cite{amanatidis2020multiple}. Moreover, there have been already a few improved approximation guarantees for several special cases. In \cite{markakis23improved} a $2/3$-EFX algorithm is proposed for a scenario in which agents agree on the top $n$ items, where $n$ is the number of agents.
Further special cases that attain a $2/3$-approximation are also established in \cite{AFS24}.
In addition to these results, a general framework for constructing approximation algorithms for EFX is discussed in \cite{markakis23improved} and \cite{farhadi2021almost}. 

Similar questions arise in scenarios where items are viewed as chores (negatively valued by the agents). Chores have been proven to behave differently than goods. In \cite{DBLP:journals/corr/abs-2406-10752}, the question of the existence of EFX allocations was answered negatively for chores when agents have superadditive valuation functions. The state of the art result for EFX is a 4-EFX guarantee, as presented in \cite{garg2024fairdivisionindivisiblechores}, where the first constant-factor approximation for EFX was introduced, while their work also extended to other fairness notions. The existence of EFX allocations for chores has been established in special cases, notably when agents have additive valuations and the number of chores does not exceed twice the number of agents, as demonstrated in \cite{kobayashi2023efxallocationsindivisiblechores}. For an extensive discussion on the fair division of indivisible chores, we refer the reader to \cite{math11163616}. In scenarios where items may be perceived as both goods and chores--commonly referred to as \emph{mixed manna}--it has been shown that an \textsc{EFX} allocation does not necessarily exist. This impossibility has been established for two agents with identical, non-additive, non-monotone valuation functions in~\cite{Berczi20} and for agents with additive (specifically, lexicographic) preferences in~\cite{hosseini2022fairlydividingmixturesgoods}.

Moreover, the fairness notion of \emph{envy-freeness up to transferring any good or chore} (tEFX) has been introduced in~\cite{barman2023parameterizedguaranteesenvyfreeallocations} and~\cite{yin2022envyfreeallocationchores}, respectively. This concept is similar to \efx in that the underlying thought experiment involves removing an item from one agent’s bundle and transferring it to another. However, it differs from our work, and from the notion of rational flips that we use, in the sense that tEFX considers only the transfer of a single item to an agent’s bundle, rather than the exchange of item pairs. In the context of chores, \cite{afshinmehr2024approximateefxexacttefx} shows that a tEFX allocation is attainable for three agents when one agent has an additive valuation function that is $2$-ratio bounded, and the remaining agents have general monotone valuation functions.

Apart from EF1 and EFX, there is a broader range of fairness concepts that have been considered, such as Maximin Share (MMS), Pairwise Maximin Share (PMMS), and Groupwise Maximin Share (GMMS) fairness, introduced in \cite{budish2011combinatorial}, \cite{caragiannis2019unreasonable} and \cite{barman2018groupwise}, respectively. Additionally, relaxations of proportionality have been proposed, such as proportionality up to one (Prop1) and up to any good (PropX), defined in \cite{conitzer2017fair} and \cite{DBLP:journals/orl/AzizMS20}, respectively. For a detailed discussion of further significant fairness notions and unresolved questions, we refer the reader to the survey in \cite{DBLP:journals/ai/AmanatidisABFLMVW23}.

Finally, regarding the use of constraints in fair division, \citet{FGM14} also considered the same model as ours but with the different objective of maximin fairness, for which they obtain approximation and exact algorithms.
Other models of cardinality constraints have also been considered in fair division. For example \cite{BB18} study the case where the available items are grouped into categories and the bundle of each agent should respect a bound on items from each category. This setting is incomparable to ours. For an overview of further problems involving constrained fair division we refer to \cite{Suksompong21}.

\section{Preliminaries}\label{sec:prel}
For any $z \in \mathbb{N}_{> 0}$ we use $[z]$ to denote the set $\{1,2,\dots,z\}$.
We consider a set of agents $N = [n]$ and a set $M=[kn]$ of $kn$ indivisible items for some $k,n \in \mathbb{N}_{\geq 2}$.
An allocation $\aaa$ in our model is any ordered partition of the items into $n$ subsets, $\aaa = (A_1, \dots, A_n)$, where $A_i$ is the bundle of agent $i$ and such that each agent must receive exactly $k$ items, i.e., $|A_i|=k$. 

We consider agents with cardinal valuation functions. 
In particular, we assume that valuation functions are non-negative\footnote{In this paper we only care about goods.} and additive, i.e., each agent $i \in [n]$ associates a value $v_i(\{g\})\geq 0$ for each item $g \in [kn]$, and for a given bundle $A$, $v_i(A)=\sum_{g \in A} v_i(\{g\})$. From this point and further, we will use the abbreviation $g$ to denote a singleton set $\{g\}$, so that e.g., $v_i(g)=v_i(\{g\})$ or $v_i(A\cup g \setminus g') = v_i(A \cup \{g\} \setminus \{g'\})$, for the sake of simplicity.

An ideal solution concerning fairness is that no agent prefers another agent's bundle to their own. Formally,

\begin{definition}[Envy freeness-EF]
An allocation $\aaa$ is envy-free (EF) if for every pair of agents $i,j$, it holds that $v_i(A_i) \ge v_i(A_j)$.
\end{definition}

It is well-known that envy-free allocations do not always exist. 
Therefore several relaxations have been considered as alternative solutions. Among these, the two most related to our work are the well known criteria of EF1 and EFX, defined as follows.

\begin{definition}\label{def:ef1}
An allocation $\aaa$ is 
\begin{itemize}
    \item envy-free up to one good (EF1) if for every pair of agents $i,j$, there exists a good $g\in A_j$, such that $v_i(A_i) \ge v_i(A_j \setminus g)$.
    \item envy-free up to any good (EFX) if for every pair of agents $i,j$, and for every $g\in A_j$, it holds that $v_i(A_i) \ge v_i(A_j \setminus g)$.
\end{itemize}
\end{definition}

The intuition behind the EF1 notion, which was defined by \citet{budish2011combinatorial}, is that the agents cannot be too envious, in the sense that there always exists a single item whose removal can eliminate envy from one agent to another. 
The EFX notion (defined in \citet{caragiannis2019unreasonable} and also in \citet{gourves2014near}) is stronger since the difference w.r.t. EF1 is the switch of the quantifiers, so that envy can be eliminated by the removal of any single item.

The above relaxations follow the thought experiment that one can discard one item from another agent's bundle. 
In our scenario however, where every agent has to receive exactly $k$ items in her bundle, this may not be suitable when an agent wants to compare her assignment against other agents. 
As a result, \citet{Bog2024} proposed similar relaxations but using the idea of a {\it flip} instead of item removals. 

To become more precise, given agents $i,j$ and bundles $A_i$ and $A_j$ from the same allocation $\aaa$, we say that the pair $(a,b)$, such that $a \in A_i$, $b \in A_j$ is a \emph{rational flip} w.r.t. $i$ if $v_i(b) > v_i(a)$. Intuitively, this notion implies that it makes sense for $i$ to exchange $a$ with $b$. It is not hard to see that under additive valuations, when an agent envies another, there exists at least one rational flip between them. The natural adjustment of the EF1 criterion to incorporate flips instead of a single item removal, results to a new fairness notion of \emph{envy-freeness up to one flip}, defined as follows.

\begin{definition}[\efOne]%\label{def:ef1}
An allocation $\aaa$ is envy-free up to one flip (\efOne) if for every pair of agents $i,j$, either $v_i(A_i) \geq v_i(A_j)$, or there exists a rational flip $a \in A_i$, $b \in A_j$, such that $v_i(A_i \pplus b \mminus a) \geq v_i(A_j \pplus a \mminus b)$.
\end{definition}

The intuition behind \efOne allocations is that there always exists a pair of items whose exchange can eliminate envy from one agent to another. This is an efficiently computable fairness notion, as we will demonstrate later on.

Towards coming closer to envy-freeness, the analogue of EFX with the use of flips can also be similarly defined, and this is the main notion of interest for our work. The difference with \efOne is that we require that envy can be eliminated by {\it any} rational flip.

\begin{definition}[\efx]
An allocation $\aaa$ is envy-free up to any flip (\efx) if for every pair of agents $i,j$, either $v_i(A_i) \geq v_i(A_j)$ or for every pair of items $(a, b)$, with $a \in A_i$, $b\in A_j$, that forms a rational flip w.r.t. agent $i$, it holds that $v_i(A_i \pplus b \mminus a) \ge v_i(A_j \pplus a\mminus b)$.
\end{definition}

It is still unknown if \efx allocations exist, beyond some special cases. For this reason, we will also study approximate versions of \efx. Although there are multiple ways of defining an approximation notion, we will stick to the multiplicative version, similar to the approximation notion for EFX, defined by~\cite{plaut2020almost} and used in most previous works as well: 

\begin{definition}[$\gamma$-\efx]
An allocation $\aaa$ is $\gamma$-\efx for $\gamma \in [0,1]$ if for every pair of agents $i,j$ either $v_i(A_i) \geq v_i(A_j)$, or for every pair of items $(a, b)$, with $a \in A_i$, $b\in A_j$, that forms a rational flip w.r.t. agent $i$, it holds that $v_i(A_i \pplus b \mminus a) \ge \gamma v_i(A_j \pplus a\mminus b)$.
\end{definition}

Hence, our goal is to obtain $\gamma$-\efx allocations, with $\gamma$ as close to 1 as possible. In the same manner, one can also define approximate versions for other concepts (e.g. $\gamma$-EF, $\gamma$-EFX or $\gamma$-\efOne).

\subsection{Warming-up: Observations and comparisons}\label{sec:comparison}
Before presenting our main results, we begin with a series of basic yet insightful observations. The following theorem establishes a trivial yet informative benchmark, highlighting the limits of what we can hope to achieve. The second part, showing that for $k = 2$ the \efOne property holds trivially, is already known from~\cite{Bog2024}; a short proof is included for the sake of completeness.

\begin{theorem}[partially known from \cite{Bog2024}]
    Any allocation is either EF or $1/k$-EFF1. Furthermore, when $k=2$, any allocation is \efOne.
\end{theorem}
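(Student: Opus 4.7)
The plan is to handle the two parts separately, as each reduces to a short pigeonhole-style argument.

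For the first statement, I assume $\aaa$ is not EF and fix any envying pair $(i,j)$, i.e., $v_i(A_j) > v_i(A_i)$. I would then designate $b \in A_j$ as the item that $i$ values most highly in $A_j$, and $a \in A_i$ as the item that $i$ values least in $A_i$. By additivity and since $|A_j|=|A_i|=k$, we have $v_i(b) \geq v_i(A_j)/k$ and $v_i(a) \leq v_i(A_i)/k < v_i(A_j)/k \leq v_i(b)$, so $(a,b)$ is a rational flip for $i$. To conclude $1/k$-\efOne, observe that on the one hand $v_i(A_i \pplus b \mminus a) \geq v_i(b) \geq v_i(A_j)/k$ (using that $v_i(A_i) \geq v_i(a)$ by non-negativity), and on the other hand $v_i(A_j \pplus a \mminus b) \leq v_i(A_j)$ (since the flip is rational, so $v_i(b) \geq v_i(a)$). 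Combining these two gives $v_i(A_i \pplus b \mminus a) \geq (1/k)\,v_i(A_j \pplus a \mminus b)$, which is exactly the $1/k$-\efOne condition.

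For the second statement with $k=2$, let $A_i = \{a_1,a_2\}$ and $A_j = \{b_1,b_2\}$, ordered so that $v_i(a_1) \leq v_i(a_2)$ and $v_i(b_1) \leq v_i(b_2)$. If $i$ does not envy $j$ there is nothing to prove, so assume $v_i(b_1)+v_i(b_2) > v_i(a_1)+v_i(a_2)$. I would show that $(a_1,b_2)$ is a rational flip witnessing \efOne. Rationality is immediate: if we had $v_i(b_2) \leq v_i(a_1)$, then $v_i(A_j) \leq 2v_i(b_2) \leq 2v_i(a_1) \leq v_i(A_i)$, contradicting envy. For the \efOne inequality, the difference $v_i(A_i \pplus b_2 \mminus a_1) - v_i(A_j \pplus a_1 \mminus b_2) = (v_i(a_2)-v_i(a_1)) + (v_i(b_2)-v_i(b_1))$ is non-negative by the chosen ordering.

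There is no real obstacle here; this is essentially a benchmark observation. The only subtlety worth flagging is that the \efOne (and $\gamma$-\efOne) definition compares the post-flip bundles, not the original ones, but rational flips weakly decrease $v_i(A_j)$, so the bound on the original $v_i(A_j)$ transfers directly to the post-flip bundle and no additional work is needed.
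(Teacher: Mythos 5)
Your proof is correct and follows essentially the same route as the paper's: for the first part, both arguments flip agent $i$'s least-valued item for the item of $A_j$ she values most, use the averaging bound $v_i(g_{\max})\ge v_i(A_j)/k$, and note that a rational flip only decreases $v_i(A_j)$; for the $k=2$ part, both perform the same swap of $i$'s worse item for $j$'s better item and compare term by term. The only (cosmetic) difference is that you verify the rationality of the chosen flips slightly more explicitly than the paper does.
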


\begin{proof}
    Consider an allocation $\aaa$. If $\aaa$ is EF, then the claim holds trivially. Otherwise, there exists a pair of agents $i$ and $j$ such that agent $i$ envies agent $j$, i.e., $v_i(A_i) < v_i(A_j)$. Let $g_{\text{max}}$ be the item in $A_j$ that agent $i$ values the most. By an averaging argument, we have:
    \begin{align*}
        v_i(g_{\text{max}}) &\geq \frac{1}{k} \cdot v_i(A_j).
    \end{align*}

    Since agent $i$ envies agent $j$, there must exist an item $g \in A_i$ such that $v_i(g) < v_i(g_{\text{max}})$. Performing the rational flip $(g, g_{max})$, yields:
    \begin{align*}
        v_i(A_i \cup g_{\text{max}} \setminus g) &\geq \frac{1}{k} \cdot v_i(A_j) \geq \frac{1}{k} v_i(A_j \setminus g_{\text{max}} \cup g).
    \end{align*}

  \noindent  The latter implies that the allocation satisfies $1/k$-EFF1.

    For the special case where $k = 2$, consider the bundles $A_i = \{a, b\}$ and $A_j = \{c, d\}$. Assume that $v_i(a) \geq v_i(b)$ and $v_i(c) \geq v_i(d)$. Then, under the assumption that $i$ envies $j$, performing the rational flip $(b,c)$ yields new bundles $A_i' = \{a,c\}$ and $A_j' = \{b,d\}$. We have 
    \begin{align*}
        v_i(A_i') = v_i(a) + v_i(c) \geq  v_i(b) + v_i(d) = v_i(A_j').
    \end{align*}
    
    Therefore, the allocation satisfies \efOne in this case.
\end{proof}

Next, we present simple examples to demonstrate that the fairness notions \efOne and \efx, as studied in this work, are incomparable with the standard EF1 and EFX criteria, respectively. As shown in \cite{Bog2024}, an EF1 allocation is not necessarily \efOne, demonstrated through an instance with $n = 2$, $k = 3$ and identical valuations. They also construct a similar instance with $n = 2$ and $k = 3$ to show that an \efOne allocation is not necessarily EF1. To reinforce this point, we provide an even smaller example with $n = 2$ and $k = 2$ that illustrates the latter direction. In addition, we examine the relationship between EFX and \efx using analogous examples. In the valuation tables that follow, $C > 1$ denotes a large constant, and $0 < \epsilon \ll 1$ represents a small positive value.

\begin{table}[tbp]
\centering

\begin{subtable}{0.3\textwidth}
\centering
\begin{tabular}{l| c c c c}
     & $g_1$ & $g_2$ & $g_3$ & $g_4$ \\
    \hline
    $v(\cdot)$ & \cellshade$2C$ & \cellshade$1+2\epsilon$ & 1 & $\epsilon$ \\
    \phantom{$v_1(\cdot)$} & \phantom{$x$} & \phantom{$x$}  & \phantom{$x$}  & \phantom{$x$} 
\end{tabular}
\caption{Example~\ref{exm:eff1-to-ef1}}\label{tb:example1}
\end{subtable}
\hspace{0.5cm}
\begin{subtable}{0.3\textwidth}
\centering
\begin{tabular}{l| c c c c}
     & $g_1$ & $g_2$ & $g_3$ & $g_4$ \\
    \hline
    $v_1(\cdot)$ & \cellshade10 & 6 & 4 & \cellshade1 \\
    $v_2(\cdot)$ & $10+\epsilon$ & \cellshade10 & \cellshade1 & 2
\end{tabular}
\caption{Example~\ref{exm:efx-to-efxf}}\label{tb:example2}
\end{subtable}
\hspace{0.5cm}
\begin{subtable}{0.3\textwidth}
\centering
\begin{tabular}{l| c c c c}
     & $g_1$ & $g_2$ & $g_3$ & $g_4$ \\
    \hline
    $v(\cdot)$ & \cellshade$2C$ & $1+\epsilon$ & 1 & \cellshade$\epsilon$ \\
    \phantom{$v_1(\cdot)$} & \phantom{$x$}  &\phantom{$x$}  & \phantom{$x$}  & \phantom{$x$} 
\end{tabular}
\caption{Example~\ref{exm:efxf-to-efx}}\label{tb:example3}
\end{subtable}

\caption{The valuations in Examples~\ref{exm:eff1-to-ef1}, \ref{exm:efx-to-efxf}, and \ref{exm:efxf-to-efx}.}
\label{tab:three}
\end{table}

\begin{example}[An \efOne allocation is not necessarily EF1]\label{exm:eff1-to-ef1}
    Assume there are $4$ items with identical values for both agents. Let $v(\cdot)$ be the common valuation function. The valuation is shown in Table~\ref{tb:example1}.

    Consider the allocation $\aaa = (A_{1},A_{2})$, where $A_1 = \{g_1, g_2\}$ and $A_2 = \{g_3, g_4\}$, shown shaded. This is an \efOne allocation, since agent~$1$ does not envy agent~$2$ ($v(A_1) = 2C+ 1 + 2\epsilon > v(A_2) = 1+\epsilon$), whereas agent~$2$ envies agent~$1$, but any rational flip that involves item $g_1$ eliminates the envy. However, this allocation is not EF1: No item $g \in A_1$ can be removed to eliminate envy, since
    \begin{align*}
        v(A_2) = 1 + \epsilon < v(A_1 \mminus g_1) = 1 + 2\epsilon \quad \text{and}\quad
        v(A_2) 
        = 1 + \epsilon < v(A_1 \mminus g_2) = 2C.
    \end{align*}
\end{example}

\begin{example}[An EFX allocation is not necessarily \efx]\label{exm:efx-to-efxf}
 Table~\ref{tb:example2} shows the valuations over $4$ items for agents $1$ and $2$.  
 
    Consider the allocation $\aaa=(A_{1},A_{2})$, where $A_1=\{g_1, g_4\}$ and $A_2=\{g_2, g_3\}$, shown shaded. This is an EFX allocation, since agent $1$ does not envy agent $2$ ($v_1(A_1) = 11 > v_1(A_2) = 10$), whereas agent $2$ envies agent $1$ ($v_2(A_2) = 11 < v_2(A_1) = 12+\epsilon$), but removing either $g_1$ or $g_4$ from $A_1$ eliminates the envy. However, this allocation is not \efx. There exists a rational flip, namely ($g_2, g_1$), where $g_1 \in A_1$, $g_2 \in A_2$ and $v_2(g_1) > v_2(g_2)$ violating \efx, since
    \begin{align*}
        v_2(A_2 \pplus g_1 \mminus g_2) = 11 + \epsilon < v_2(A_1 \mminus g_1 \pplus g_2) = 12.
    \end{align*}
\end{example}

\begin{example}[An \efx allocation is not necessarily EFX]\label{exm:efxf-to-efx}
  Assume there are $4$ items with identical values for both agents, shown in Table~\ref{tb:example3}. Let $v(\cdot)$ be the common valuation function.
  
    Consider the allocation $\aaa=(A_{1},A_{2})$, where $A_1=\{g_1, g_4\}$ and $A_2=\{g_2, g_3\}$, shown shaded. This is an \efx allocation, since agent $1$ does not envy agent $2$ ($v(A_1) = 2C+\epsilon > v(A_2) = 2+\epsilon$), whereas agent $2$ envies agent $1$, but any rational flip must involve item $g_1$, and hence, once performed, it eliminates the envy. However, this allocation is not EFX: There exists an item, namely $g_4 \in A_1$, such that
    \begin{align*}
        v(A_2) = 2+\epsilon < v(A_1 \mminus g_4) = 2C.
    \end{align*}
\end{example}

Observe that in Example~\ref{exm:efxf-to-efx}, there exists only one \efx allocation, specifically the one examined earlier (and its symmetric counterpart). To see why this is the case, suppose w.l.o.g. that agent $1$ receives item $g_1$. Then, agent $2$ will envy agent $1$ regardless of how the remaining items are allocated. The only way for the allocation to satisfy \efx is if all rational flips for agent $2$ involve item $g_1$. This can only occur if agent $2$ is allocated $\{g_2, g_3\}$, which implies that agent $1$ must receive $\{g_1, g_4\}$.
    
\noindent Moreover, observe that
\begin{align*}
    \frac{v(A_2)}{v(A_1)} = \frac{2+\epsilon}{2C+\epsilon} \to \frac{1}{C},
\end{align*}
as $\epsilon$ reaches 0. Hence, this example supports an even stronger conclusion: In this setting, \efx is incompatible with $\gamma$-EFX for any $\gamma \in (0,1]$, meaning that no meaningful approximation of EFX can align with \efx. 

\section{Generalized Round-Robin algorithms}
\label{sec:RR}

We start the technical part of our contribution with analyzing a simple family of algorithms.  

As a warm-up, we show that \efOne allocations can be computed in polynomial time using the well-known Round-Robin algorithm. The Round-Robin rule fixes an ordering of the agents and runs in $k$ rounds; in each round, the agents pick their favorite available item, according to the predefined ordering, which is the same for all rounds. The result is already known by~\cite{Bog2024} and here we provide an alternative proof for the sake of completeness. The proof is deferred to Appendix~\ref{app:RR}.

\begin{theorem}[follows by \cite{Bog2024}]\label{thm:RR:Eff1}
    The allocation computed by the Round-Robin algorithm satisfies \efOne.
\end{theorem}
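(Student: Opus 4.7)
The plan is to analyze the per-round pairing of picks, which is the standard tool for Round-Robin style algorithms. Fix two agents $i,j$ and let $a_r, b_r$ be the items picked by $i,j$ in round $r$, respectively. Split into two cases based on which of the two agents comes first in the ordering. If $i$ precedes $j$, then in every round $r$ the item $b_r$ is still available when $i$ picks $a_r$, so $v_i(a_r) \ge v_i(b_r)$; summing over $r$ gives $v_i(A_i) \ge v_i(A_j)$ and envy-freeness holds, so there is nothing to prove. The interesting case is when $j$ precedes $i$.

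In the case where $j$ precedes $i$, I will use the shifted pairing: when $i$ picks $a_r$ in round $r$, the item $b_{r+1}$ is still available (it will be taken by $j$ in round $r+1$), which yields $v_i(a_r)\ge v_i(b_{r+1})$ for $r=1,\dots,k-1$. I will also use the simple observation that, because $i$ always picks her favorite remaining item, $a_k$ is the least-valued item in $A_i$ according to $v_i$, i.e. $v_i(a_k)=\min_{a\in A_i} v_i(a)$. The natural candidate flip is then $(a_k, b_1)$: the least attractive item in $A_i$ against the first pick of $j$.

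I will split the analysis into two sub-cases depending on whether $(a_k, b_1)$ is rational. If $v_i(b_1)\le v_i(a_k)$, then $v_i(b_1)\le v_i(a_k)$ and $v_i(b_r)\le v_i(a_{r-1})$ for $r\ge 2$, so summing gives $v_i(A_j)\le v_i(A_i)$ and there is no envy, leaving nothing to verify. Otherwise $v_i(b_1)>v_i(a_k)$, so the flip is rational. Rewriting the pairing sum gives
\begin{equation*}
v_i(A_i) - v_i(A_j) \;=\; \sum_{r=1}^{k-1}\bigl(v_i(a_r)-v_i(b_{r+1})\bigr) + v_i(a_k) - v_i(b_1) \;\ge\; v_i(a_k) - v_i(b_1),
\end{equation*}
and therefore
\begin{equation*}
v_i(A_i \cup b_1 \setminus a_k) - v_i(A_j \cup a_k \setminus b_1) \;=\; \bigl(v_i(A_i)-v_i(A_j)\bigr) + 2\bigl(v_i(b_1)-v_i(a_k)\bigr) \;\ge\; v_i(b_1)-v_i(a_k) \;>\; 0,
\end{equation*}
which gives \efOne.

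The main technical point, and the only place where something subtle happens, is identifying the right flip: the pairing that underlies the usual EF1 proof does not directly give \efOne because an \efOne-style guarantee requires the envy gap to be absorbed by \emph{twice} the flip's value swing. Pairing $a_k$ with $b_1$ is what makes this work, because the telescoping of the shifted pairing leaves exactly the residual term $v_i(a_k)-v_i(b_1)$, and this residual is precisely what is needed to dominate the doubled term. The case split on whether $(a_k,b_1)$ is rational is essential, since the flip has to be rational to be a valid witness in the definition.
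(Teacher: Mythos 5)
Your proof is correct and follows essentially the same route as the paper's: the same shifted pairing $v_i(a_r)\ge v_i(b_{r+1})$, the same witness flip $(a_k,b_1)$, and the same case split on whether that flip is rational (equivalently, whether $i$ envies $j$). The only difference is presentational -- you make the telescoping residual explicit, whereas the paper adds the two inequalities $v_i(A_i\setminus a_k)\ge v_i(A_j\setminus b_1)$ and $v_i(b_1)>v_i(a_k)$ directly.
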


We now move to the more demanding case of \efx allocations. 
As with the standard EFX notion, it is unknown if such allocations always exist for additive valuations, and this already looks like a challenging open problem. 
They exist however for the special cases of (i) $n=2$, (ii) when all agents have identical valuations, and (iii) for binary valuation functions, as established in~\cite{Bog2024}.

Here we show that \efx allocations always exist when $k=2$, for additive valuations and any number of agents, and can be computed efficiently using a simple adaptation of Round-Robin.

The algorithm fixes an order for the agents and runs in $2$ rounds. In the first round, each agent picks their favorite available item. In the second round agents pick again their favorite available item, but in reverse order. Observe that we arrive at this guarantee only using \emph{ordinal} information about the items -- agents are not required to disclose their exact valuations.

\begin{theorem}\label{theorem: existence-k-2}
    \efx allocations are guaranteed to exist when $k=2$ for additive valuations.
\end{theorem}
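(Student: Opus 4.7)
The algorithm I propose is the two-round serpentine (reverse) Round-Robin: fix an agent order $1,2,\dots,n$; in round~1 agents pick a favourite available item in the order $1,\dots,n$; in round~2 agents pick a favourite available item in the reverse order $n,\dots,1$. For each agent $i$, let $a_i$ denote the item picked in round~1 and $b_i$ the item picked in round~2, so $A_i=\{a_i,b_i\}$.

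The first step is to record three ordinal observations that follow directly from the picking order. (i)~When agent $i$ picks $a_i$ in round~1, every item outside $\{a_1,\dots,a_{i-1}\}$ is still on the table; in particular every $b_k$ (picked later, in round~2) is available, so $v_i(a_i)\ge v_i(b_k)$ for all $k\neq i$, and $v_i(a_i)\ge v_i(a_k)$ whenever $k>i$. (ii)~When agent $i$ picks $b_i$ in round~2, every $b_k$ with $k<i$ is still available, so $v_i(b_i)\ge v_i(b_k)$ whenever $k<i$. These are the only facts I will use.

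Next, fix any pair $i,j$ such that $i$ envies $j$, and enumerate the possible rational flips $(x,y)$ with $x\in A_i$, $y\in A_j$ and $v_i(y)>v_i(x)$. I split on whether $i<j$ or $i>j$.

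\emph{Case $i<j$.} Observation~(i) gives both $v_i(a_i)\ge v_i(a_j)$ and $v_i(a_i)\ge v_i(b_j)$, so no flip involving $a_i$ on the left can be rational. The only candidate rational flips are $(b_i,a_j)$ and $(b_i,b_j)$. For $(b_i,b_j)$: the new bundles are $\{a_i,b_j\}$ and $\{a_j,b_i\}$; since $v_i(a_i)\ge v_i(a_j)$ and $v_i(b_j)>v_i(b_i)$, the flipped bundle dominates. For $(b_i,a_j)$: the new bundles are $\{a_i,a_j\}$ and $\{b_i,b_j\}$; rationality yields $v_i(a_j)>v_i(b_i)$ and observation~(i) gives $v_i(a_i)\ge v_i(b_j)$, so $v_i(a_i)+v_i(a_j)\ge v_i(b_i)+v_i(b_j)$.

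\emph{Case $i>j$.} Observation~(i) gives $v_i(a_i)\ge v_i(b_j)$, ruling out the flip $(a_i,b_j)$, and observation~(ii) gives $v_i(b_i)\ge v_i(b_j)$, ruling out $(b_i,b_j)$. The only candidate rational flips are thus $(a_i,a_j)$ and $(b_i,a_j)$. For $(a_i,a_j)$: the new bundles are $\{a_j,b_i\}$ and $\{a_i,b_j\}$; rationality gives $v_i(a_j)>v_i(a_i)$ and observation~(ii) gives $v_i(b_i)\ge v_i(b_j)$. For $(b_i,a_j)$: the new bundles are $\{a_i,a_j\}$ and $\{b_i,b_j\}$; rationality gives $v_i(a_j)>v_i(b_i)$ and observation~(i) gives $v_i(a_i)\ge v_i(b_j)$. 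In both sub-cases the desired inequality holds.

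The main obstacle I anticipate is pinning down exactly which flips can be rational in each case, since the two ``halves'' of the bundle (round-1 pick vs.\ round-2 pick) have very different comparative guarantees and one must ensure the rational-flip hypothesis itself supplies the inequality that observations~(i)--(ii) leave open. Once the case split above is in place, each verification is a one-line additive comparison, and the theorem follows. Since all arguments rely only on ordinal comparisons of items by each agent, the algorithm requires only ordinal information and runs in polynomial time, which also justifies the remark made in the paragraph preceding the theorem.
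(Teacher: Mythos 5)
Your proposal is correct and follows essentially the same route as the paper's proof: the same two-round reverse-order Round-Robin, the same case split on whether $i$ picks before or after $j$ in round one, the same elimination of non-rational flips via ordinal picking-order observations, and the same one-line additive comparisons for the two surviving flips in each case. The only cosmetic difference is that you name the ordinal observations up front and enumerate all four candidate flips explicitly, which the paper does more tersely.
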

\begin{proof}
    Suppose we run one round of the Round-Robin algorithm with a given order of the agents, and then we run one more round with the reverse order. Consider a pair of agents $i, j$, and assume that $i$ envies $j$.
    Since $k=2$ we can write $A_i=\{b_1,b_2\}$ and $A_j=\{a_1,a_2\}$ as the bundles of the two agents produced by the algorithm, where $a_1, b_1$ are allocated in the first round and $a_2, b_2$ in the second round.
    
    We will break the analysis into two cases: whether $i$ picks \emph{before} $j$ in the first round or not.
    \begin{description}[leftmargin=0pt]
        \item[Case 1:] $i$ picks \emph{before} $j$ in the first round. Then $v_i(b_1) \geq v_i(a_\ell)$ for $\ell \in \{1,2\}$, hence $b_1$ cannot be in any rational flip. 
        Hence, we are left with two possible rational flips: $(b_2,a_1)$ or $(b_2,a_2)$. Consider the first of them, $(b_2, a_1)$. If this is indeed a rational flip, then after we perform the flip, agent $i$ has the bundle $\{b_1, a_1\}$, and she likes $b_1$ at least as much as $a_2$ and similarly $a_1$ at least as much as $b_2$. Therefore, she cannot envy $j$ after the flip. 
        The same argument also holds if the flip $(b_2,a_2)$ is rational w.r.t. $i$.
        \item[Case 2:] $i$ picks \emph{after} $j$ in the first round. Hence $v_i(b_1) \geq v_i(b_2) \geq v_i(a_2)$. Clearly, $a_2$ cannot participate in any rational flip. The two possible flips are $(b_1,a_1)$ or $(b_2,a_1)$. Suppose that $(b_1,a_1)$ is a rational flip. Then, after performing the flip, agent $i$ will have the bundle $\{a_1, b_2\}$. Observe now that she likes $a_1$ at least as much as $b_1$ and $b_2$ at least as much as $a_2$, therefore she cannot be envious. The same argument also holds for the other flip. This concludes the proof.
    \end{description}
\end{proof}

\begin{remark}
    Interestingly, Theorem~\ref{theorem: existence-k-2} proves the existence of an \efx allocation for $2n$ items, whereas, for the standard EFX notion, existence is known only for up to $n+3$ items by \cite{mahara2021extension}. 
\end{remark}

We consider now a more general family of algorithms, referred to in \cite{Bog2024} as \textit{generalized Round-Robin}. 
This is a class of algorithms that still work in $k$ rounds and they allocate exactly one item to each player per round. However, it is allowed to use a different order of agents in each round. 
The algorithm in Theorem~\ref{theorem: existence-k-2} falls into this class, but only for $k=2$. One would wonder if we could have further existence guarantees for higher values of $k$. We answer this in the negative, providing below a severe inapproximability result for all these algorithms, even for $k=3$ and $n=2$ agents.

\begin{theorem}\label{theorem: gen-RR-inapprox}
    Any generalized Round-Robin algorithm fails to guarantee $\gamma$-\efx for any $\gamma \in (0,1]$ when $k>2$, even for $n=2$ agents with identical valuations.
\end{theorem}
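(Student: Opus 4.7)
The plan is to exhibit, for every $\gamma \in (0,1]$ and every $k\geq 3$, a single instance with $n=2$ agents sharing an identical valuation that defeats every generalized Round-Robin algorithm. I use $2k$ items with common valuation
\[
v(g_1)=L,\quad v(g_2)=v(g_3)=\cdots=v(g_{2k-2})=2,\quad v(g_{2k-1})=1,\quad v(g_{2k})=0,
\]
where $L\geq 2$ is chosen at the end in terms of $\gamma$ and $k$. Since valuations are identical, in each round $t$ the agent picking first takes $g_{2t-1}$ and the other takes $g_{2t}$, so the output is completely determined by the picking order across the $k$ rounds. Moreover, the middle items $g_3,\dots,g_{2k-2}$ all share value $2$ and are therefore interchangeable for any bundle-value comparison; combined with the agent-swap symmetry of the instance, this reduces the analysis to just two cases.

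\textbf{Case (a):} the same agent picks first in rounds $1$ and $k$; WLOG that is agent $1$. Then $v(A_1)=L+2k-3$ and $v(A_2)=2k-2$, so agent $2$ envies agent $1$. The pair $(g_{2k},g_{2k-1})$ is a rational flip for agent $2$ (since $1>0$), and after the flip $v(A_2')=2k-1$ while $v(A_1')=L+2k-4$.

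\textbf{Case (b):} different agents pick first in rounds $1$ and $k$; WLOG $g_1\in A_1$ and $g_{2k-1}\in A_2$. Then $v(A_1)=L+2k-4$ and $v(A_2)=2k-1$, so for $L>3$ agent $2$ envies agent $1$. Since $k\geq 3$, $A_1$ contains at least one middle item $s$ of value $2$, and the pair $(g_{2k-1},s)$ is a rational flip for agent $2$; after the flip $v(A_2')=2k$ and $v(A_1')=L+2k-5$.

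In both cases the witness ratio $v(A_2')/v(A_1')$ tends to $0$ as $L\to\infty$. Taking $L\geq 2k/\gamma+5$ drives both ratios strictly below $\gamma$, contradicting $\gamma$-\efx of the returned allocation. The main subtlety I expect is justifying the reduction to these two cases; it rests on the middle rounds being value-irrelevant (all middle items share value $2$, so permuting them among agents preserves every bundle-value comparison that matters) and on the agent-swap symmetry of the instance. Once this reduction is accepted, everything collapses to the two ratio computations above.
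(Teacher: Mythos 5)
Your proof is correct. The reduction to two cases is sound: for $L>2$ the first picker in round~$1$ takes the $L$-item, all picks in rounds $2$ through $k-1$ are value-$2$ items (there are exactly $2k-4$ of them left after round~$1$), and round~$k$ allocates the value-$1$ and value-$0$ items to the first and second picker respectively; hence the value profile of the output depends only on whether the same agent picks first in rounds $1$ and $k$, and agent-swap symmetry disposes of the rest. The envy conditions, the rational flips $(g_{2k},g_{2k-1})$ and $(g_{2k-1},s)$, and the post-flip ratios all check out, and your choice of $L$ drives both ratios below $\gamma$.

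Your route differs from the paper's in the construction, though not in spirit. The paper fixes $k=3$ and uses six items with values $3C$, $1+\epsilon$, $1$, $1-\epsilon$, $\epsilon$, $0$; because all values are distinct, the picks are forced without any tie-breaking discussion, but one must then enumerate all four allocations reachable when agent~$1$ picks first and exhibit a violating flip in each, and finally extend to $k>3$ by padding with zero-valued dummy items. You instead parametrize the instance by $k$ directly and deliberately introduce a large block of equal-valued middle items so that the middle rounds become value-irrelevant, collapsing the case analysis from four cases (plus a padding step) to two. The trade-off is that your argument must briefly justify that ties among the interchangeable middle items do not affect any bundle-value comparison, which you do; in exchange you get a single uniform construction for all $k\geq 3$. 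Both proofs are valid; yours is arguably cleaner for the general-$k$ statement, while the paper's concrete six-item instance is also reused afterwards to make a separate point about using generalized Round-Robin as an initialization phase.
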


\begin{proof}
    We construct an instance with 2 identical agents and 6 items. Table~\ref{tab:gen-RR-inapprox} shows the valuations of the 6 items for agents $1$ and $2$, where $C>1$ can be set to be a large constant, and $0 < \epsilon < 1$. Since the agents are identical, let $v(\cdot)$ be their common valuation function.

    \begin{table}[tbp]
        \centering
        \begin{tabular}{l|c c c c c c c c c}
            & $g_1$ & $g_2$ &$g_3$ & $g_4$ & $g_5$ & $g_6$\\
            \hline
             $v(\cdot)$&$3C$ &$1+\epsilon$  &1 &$1-\epsilon$ &$\epsilon$ &0\\
        \end{tabular}
        \caption{The valuations for the proof of Theorem~\ref{theorem: gen-RR-inapprox}}
        \label{tab:gen-RR-inapprox}
    \end{table}
    
    Consider a particular instantiation of generalized Round-Robin, i.e., an order between the two agents in each of the three rounds. Suppose that in the first round, agent 1 picks first. Let $A_1$ and $A_2$ be the bundles produced for the two agents. We will show that none of the possible allocations satisfies \efx.
        
    \textbf{Case 1:} $A_{1} = \{g_1, g_3, g_5\}$ $A_{2} = \{g_2, g_4, g_6\}$. In this case,
    \begin{align*}
        v(A_{1}) = 3C + 1 + \epsilon \text{ and }
        v(A_{2}) = 2.
    \end{align*}
     Hence, agent 1 has no envy towards $2$, whereas 2 envies 1 even after the rational flip $(g_6, g_5)$ w.r.t. $2$. Observe that $v(A_{1} \mminus  g_5 \pplus  g_6) = 3C + 1$ and $v(A_{2} \pplus  g_5 \mminus  g_6) = 2 + \epsilon$.
    
    \textbf{Case 2:} $A_{1} = \{g_1, g_3, g_6\}$ $A_{2} = \{g_2, g_4, g_5\}$. In this case,
    $
        v(A_{1}) = 3C + 1 \text{ and }
        v(A_{2}) = 2 + \epsilon.
    $
     Hence, agent $1$ has no envy towards $2$, whereas $2$ envies $1$, even after the rational flip $(g_4, g_3)$. Observe that $v(A_{1} \mminus  g_3 \pplus  g_4) = 3C + 1 - \epsilon$ and $v(A_{2} \pplus  g_3 \mminus  g_4) = 2 + 2\epsilon$.
    
    \textbf{Case 3:}
    $A_{1} = \{g_1, g_4, g_5\}$ $A_{2} = \{g_2, g_3, g_6\}$. In this case,
    $
        v(A_{1}) = 3C + 1 \text{ and }
        v(A_{2}) = 2 + \epsilon.
    $
     Hence, agent $1$ has no envy towards $2$, whereas $2$ envies $1$ even after the rational flip $(g_6, g_5)$. Observe that $v(A_{1} \mminus  g_5 \pplus  g_6) = 3C + 1 - \epsilon$ and $v(A_{2} \pplus  g_5 \mminus  g_6) = 2 + 2\epsilon$. 
    
    \textbf{Case 4:} $A_{1} = \{g_1, g_4, g_6\}$ $A_{2} = \{g_2, g_3, g_5\}$. In this case,
    $
        v(A_{1}) = 3C + 1 - \epsilon \text{ and }
        v(A_{2}) = 2 + 2\epsilon.
    $
     Again, agent $1$ has no envy towards $2$, but $2$ envies $1$ even after the rational flip $(g_5, g_4)$. Observe that $v(A_{1} \mminus  g_4 \pplus  g_5) = 3C + \epsilon$ and $v(A_{2} \pplus  g_4 \mminus  g_5) = 3$. 
  
    As demonstrated in the above cases, there always exists at least one rational flip which does not eliminate the envy. Therefore, none of the above allocations is \efx. Note that the instance does admit an \efx allocation, namely $A_1 = \{g_1, g_5, g_6\}$ and $A_2 = \{g_2, g_3, g_4\}$, but this cannot correspond to any allocation derived by the generalized Round-Robin algorithms. 

    Furthermore, let $A_{1}'$ and $A_{2}'$ be the bundles of agents $1$ and $2$ respectively, after performing the rational flip suggested in each of the above cases. Since in each case $v(A_{2}') < 3$ and $v(A_{1}') > 3C$, then $v(A_2')/v(A_1') < 1/C$.
    Symmetrically, due to the identical valuations, the same arguments hold if we had assumed that agent 1 picked second in the first round, in which case the role of agents $1$  and $2$ would be switched in the above analysis. Therefore, we have established that generalized Round-Robin algorithms cannot guarantee better than $1/C$-\efx allocations. Since $C$ is a parameter that we can make arbitrarily large, this means that we cannot have any approximation guarantee, for any $\gamma\in (0, 1]$.

    Finally, it is straightforward to generalize this result to instances with any $k>3$ by adding dummy items of value zero.
\end{proof}

\noindent {\bf Using generalized Round-Robin as an initial step in approximation algorithms.} Embedded within the proof of Theorem~\ref{theorem: gen-RR-inapprox} is an even stronger negative indication for the use of Round-Robin to obtain worst-case approximation guarantees for \efx. In unconstrained fair division, several algorithms for approximately EFX allocations are based on using an initial phase where a subset of items (usually at most $2n$ in total) are allocated via generalized Round-Robin, so as to start with a partial $\gamma$-EFX allocation for some $\gamma$. This can be seen in the approximation framework discussed in \cite{markakis23improved} and \cite{farhadi2021almost}.
However, when it comes to \efx, consider the instance of Theorem~\ref{theorem: gen-RR-inapprox} with 6 items, and assume that we assign the four most valuable goods to agents $1$ and $2$ to have a partial \efx allocation. Due to Theorem~\ref{theorem: existence-k-2}, such a partial allocation can be found, namely $A_1 = \{ g_1, g_4\}$ and $A_2 = \{g_2, g_3\}$. 
However, regardless of how we allocate the remaining two items, the final allocation is doomed to lack any approximation guarantee. If we allocate items $g_5$ and $g_6$ to agents $1$ and $2$ respectively, then the full allocation becomes $A_1 = \{g_1, g_4, g_5\}$ and $A_2 = \{g_2, g_3, g_6\}$, which coincides with Case 3 in the proof of Theorem~\ref{theorem: gen-RR-inapprox}. Alternatively, in the opposite case, the full allocation becomes $A_1 = \{g_1, g_4, g_6\}$ and $A_2 = \{g_2, g_3, g_5\}$, which coincides with Case 4 of the same proof. 
In either scenario, no approximation ratio for \efx is guaranteed.
\section{Algorithms based on the envy graph}
\label{sec:ece}

We now move to more powerful algorithms, based on a graph-theoretic representation of allocations.
Given a (possibly partial) allocation $(A_1,...,A_n)$ of the items to the $n$ agents, the \emph{envy graph} is defined as the graph $G=(N,E)$ where a directed edge $(i,j)$ exists in $E$ iff $v_i(A_i) < v_i(A_j)$. 

The ECE procedure, introduced in \cite{lipton2004approximately} is based on the notion of the envy graph and can compute an EF1 allocation in the standard model in polynomial time, even for more general valuation classes than additive. 
We describe briefly the algorithm first in the unconstrained model. The algorithm starts from an empty allocation and runs in rounds. In each round, an unenvied node (i.e., a source node in $G$ associated with the partial allocation at that round) is allocated her most valuable item among the ones that are still available. Observe that we can always guarantee the existence of an unenvied node in each round: whenever no such agent exists, the graph $G$ contains a directed cycle in the form $i_1 \rightarrow i_2 \rightarrow \ldots \rightarrow i_1$. By allocating the bundles backwards along this cycle, the cycle can be eliminated, and by doing the same process for as many times as needed, all the cycles of the graph are eliminated, and an unenvied agent is guaranteed to exist.  For the sake of completeness, the formal description of the ECE algorithm in the unconstrained model can be seen in Appendix \ref{app:ece}.

\begin{theorem}[from \cite{10.5555/3367032.3367053,markakis23improved}]
\label{thm:standard}
The envy cycle elimination algorithm, where in each iteration, the selected agent picks her favorite item, computes in polynomial time an allocation that is both EF1 and $1/2$-EFX for additive valuations.
\end{theorem}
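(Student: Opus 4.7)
The plan is to prove both claims by induction on the steps of the ECE algorithm, maintaining after every step the joint invariant that the current partial allocation is both EF1 and $1/2$-EFX. The base case is the empty allocation, which trivially satisfies both. Each step is either a cycle elimination --- rotating bundles backward along a directed cycle of the envy graph, so that every participating agent swaps to a strictly preferred bundle --- or a pick step, where an unenvied source agent $i$ takes her favorite available item $g$. Polynomial running time follows from the standard facts that each cycle elimination strictly reduces the number of envy edges and each item is picked exactly once.

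The EF1 half is the easier part. For a pick step with source $i$ adding $g$, every $j \neq i$ satisfies $v_j(A_j) \geq v_j(A_i) = v_j((A_i \cup \{g\}) \setminus \{g\})$ by the source condition, which preserves EF1 of $j$ toward $i$. Envy toward other agents is unchanged, and $i$'s view of her own bundle only improves. A cycle rotation moves each participant to a bundle she previously envied --- strictly increasing her own value --- while outside agents see the same multiset of bundles, so EF1 relations to the rotated positions are preserved under relabeling.

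For $1/2$-EFX the pick step is again the central case. Fix $j \neq i$ and let $h^{*}$ be a minimum-$v_j$-valued element of $A_i \cup \{g\}$. If $h^{*} = g$, then $v_j((A_i \cup \{g\}) \setminus \{g\}) = v_j(A_i) \leq v_j(A_j)$ by the source condition, which even gives full EFX for this choice of $h$. Otherwise $h^{*} \in A_i$ and, since $g \in A_i \cup \{g\}$, we have $v_j(h^{*}) \leq v_j(g)$. I would then write
\[
v_j\bigl((A_i \cup \{g\}) \setminus \{h^{*}\}\bigr) = v_j(A_i) + v_j(g) - v_j(h^{*}) \leq v_j(A_j) + \bigl(v_j(g) - v_j(h^{*})\bigr),
\]
using the source condition, and close by establishing $v_j(g) - v_j(h^{*}) \leq v_j(A_j)$. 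Cycle rotations preserve $1/2$-EFX because each participating agent's own value only grows while outside agents see a relabeling of the same multiset of bundles.

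The main obstacle is precisely the auxiliary bound $v_j(g) - v_j(h^{*}) \leq v_j(A_j)$ at the pick step: since $g$ was chosen to maximize $v_i$ rather than $v_j$, $v_j(g)$ could a priori be large relative to $v_j(A_j)$. The standard resolution is a history-based argument exploiting the fact that whenever $j$ was herself a source she picked her favorite available item; combined with the monotonicity of $v_j(A_j)$ across cycle rotations (which only move $j$ to strictly preferred bundles), this controls the $v_j$-value of any item picked subsequently by another source relative to what $j$ has already accumulated. Closing the induction with this history argument is the technical heart of the proof and is the reason $1/2$-EFX, rather than exact EFX, is the guarantee delivered by the algorithm.
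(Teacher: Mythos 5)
First, a point of reference: the paper does not prove this theorem at all --- it is imported verbatim from the cited works, so there is no in-paper proof to compare against. Judged on its own terms, your proposal has a genuine gap, and you say so yourself: the entire $1/2$-EFX claim rests on the auxiliary bound $v_j(g) - v_j(h^{*}) \leq v_j(A_j)$ at a pick step, which you identify as ``the technical heart'' and then leave open. A proof that defers its only nontrivial step is not a proof. Worse, the bound is not merely unproven but actually false for the algorithm as stated in the paper's Appendix~\ref{app:ece}, where ties among sources break arbitrarily. Consider three agents and goods $a,b,c,d,e,f$ with $v_1(a)=0$, $v_1(b)=100$, $v_1(c)=\dots=v_1(f)=1$, and $v_3(a)=200$, $v_3(b)=150$, $v_3(\cdot)=1$ otherwise. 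Agent $3$ may legally be selected twice in a row (after taking $a$ she is still unenvied, since everyone else values $\{a\}$ at $0$), ending up with $\{a,b\}$; agent $1$ then can never accumulate more than $3$, yet $v_1(\{a,b\}\setminus\{a\})=100$, so no constant-factor EFX guarantee holds. This also kills your proposed inductive invariant: immediately after agent $3$ picks $b$, the partial allocation is not $\gamma$-EFX for any constant $\gamma$, and it never recovers. So the invariant ``every partial allocation is $1/2$-EFX'' cannot be maintained under arbitrary source selection.

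What is actually needed, and what your ``history-based argument'' gestures at without pinning down, is the availability issue: $v_j(A_j)\geq v_j(g)$ follows from ``$j$'s first item was her favorite available item'' \emph{only if $g$ was still available when $j$ made her first pick}. That in turn requires the selection rule to guarantee that no agent receives a second item before every agent has received a first one --- exactly the ``ties break in favor of smaller bundles'' rule the paper builds into its own Algorithm~\ref{alg:ece-k}, and a hypothesis your proof never invokes. With that rule in place the argument does close: for the last item $g$ added to $A_i$, the source condition gives $v_j(A_j)\geq v_j(A_i\setminus g)$, the first-pick-plus-availability argument (together with Lemma~\ref{lemma: agents-improve}) gives $v_j(A_j)\geq v_j(g)$, and summing yields $2v_j(A_j)\geq v_j(A_i\setminus h)$ for every $h$. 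Note also that the standard proofs do not maintain a per-step EFX invariant at all; they reason directly about the final allocation via the last item added to each final bundle, precisely because the intermediate invariant is fragile. You should either adopt that structure or explicitly add and use the tie-breaking hypothesis; as written, the proposal both omits the decisive lemma and rests on an invariant that the stated algorithm does not preserve.
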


In this section, we first adapt the ECE algorithm so that we can enforce that all agents receive bundles of size exactly $k$. Essentially, this is done by running the algorithm as usual and kicking agents out when they are allocated $k$ items. This version is presented as Algorithm~\ref{alg:ece-k}. 

\begin{algorithm}[tbh]
\caption{Envy cycle elimination algorithm for allocating bundles of size $k$}\label{alg:ece-k}
\begin{algorithmic}[1] 
\Procedure{EnvyCycleElimination}{$N,M$}

\For{$i \in N$}
    \State {$A_i \gets \emptyset$}
\EndFor

\While{there exist unallocated items}
    \State Let $G$ be the envy graph, i.e., $G = (N, \{(i, j) \mid v_i(A_i) < v_i(A_j)\})$
    \State $j \gets$ \Call{FindUnenviedAgent}{$G$} \Comment {Ties break in favor of smaller bundles}
    \State {Agent $j$ adds to $A_j$ her favorite unallocated item}
    \If {$|A_j| = k$} 
        \State {Agent $j$ becomes inactive} \Comment {Do not consider $j$ again}
    \EndIf
    \If {the envy graph has no source vertex}
        \State{Remove envy cycles till a source vertex is created}
    \EndIf
\EndWhile

\State \Return {$(A_1, A_2,\ldots,A_n)$}
\EndProcedure
\end{algorithmic}
\end{algorithm}

We exhibit that the performance of envy cycle elimination w.r.t. the \efx criterion is significantly different from its performance w.r.t. the EFX notion. 
In particular, in stark contrast to Theorem~\ref{thm:standard}, we show that for general additive valuations, Algorithm~\ref{alg:ece-k} cannot guarantee any approximation. 
Before we proceed, it is helpful to state the following simple yet useful facts.
The first one is a well-known property of envy cycle elimination and we include it for the sake of completeness.

\begin{lemma}[\cite{lipton2004approximately}] \label{lemma: agents-improve}
The value of an agent for the bundle allocated to her at the end of each iteration of Algorithm~\ref{alg:ece-k}, can only increase or remain the same.
\end{lemma}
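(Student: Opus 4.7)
The plan is to inspect the two kinds of updates that can occur in a single iteration of Algorithm~\ref{alg:ece-k} and verify that neither decreases any agent's value for her own bundle.

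First, consider the update in which the chosen source vertex $j$ appends her favourite unallocated item $g$ to $A_j$. Every other agent's bundle is unchanged, so their values are unaffected. For agent $j$ herself, additivity together with the non-negativity of the valuations gives $v_j(A_j \cup \{g\}) = v_j(A_j) + v_j(g) \ge v_j(A_j)$.

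Second, consider the cycle-elimination step. If $i_1 \to i_2 \to \cdots \to i_\ell \to i_1$ is a directed cycle of the envy graph, the algorithm rotates the bundles backwards along the cycle, so that agent $i_t$ receives the bundle previously held by $i_{t+1}$ (indices modulo $\ell$). By definition of the envy graph, the edge $(i_t, i_{t+1})$ exists precisely because $v_{i_t}(A_{i_t}) < v_{i_t}(A_{i_{t+1}})$, hence after the rotation each agent in the cycle strictly increases her value. Agents outside the cycle keep their bundles, and so their values remain the same.

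Since these are the only two types of changes performed in an iteration, the statement follows by combining the two cases above. The main thing to be careful about is that the cycle-elimination step may be iterated several times within the same while-loop iteration before a source reappears, but the argument above applies verbatim to each individual rotation, so the monotonicity is preserved throughout.
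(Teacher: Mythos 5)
Your argument is correct and is exactly the standard one: item additions weakly improve the receiving agent by non-negativity and additivity, and each bundle rotation along an envy cycle strictly improves every agent on the cycle while leaving everyone else untouched, with the observation that repeated cycle removals within one iteration compose these monotone steps. The paper itself states this lemma without proof, citing \cite{lipton2004approximately} as the source of this well-known property, so there is nothing to compare against beyond noting that your write-up supplies the argument the paper leaves implicit.
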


\begin{lemma} \label{lemma: ef-implies-efx}
    For any $\gamma \in [0,1]$, if an allocation 
    $\aaa =  (A_1,\ldots, A_n)$ is $\gamma$-EF, then it is also $\gamma$-\efx.  
\end{lemma}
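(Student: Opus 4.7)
The plan is to unwind the definitions and exploit the monotonic effect of a rational flip on the two sides of the required inequality. Concretely, $\gamma$-EF means that for every pair of agents $i,j$ we have $v_i(A_i) \geq \gamma \cdot v_i(A_j)$, while $\gamma$-\efx requires that for every pair $i,j$, either $v_i(A_i) \geq v_i(A_j)$ or every rational flip $(a,b)$ with $a \in A_i$, $b \in A_j$ satisfies $v_i(A_i \pplus b \mminus a) \geq \gamma \cdot v_i(A_j \pplus a \mminus b)$.

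First I would fix an arbitrary pair of agents $i,j$. If $v_i(A_i) \geq v_i(A_j)$, the required disjunct for $\gamma$-\efx is already satisfied and there is nothing to check. Otherwise, I would pick any rational flip $(a,b)$ with $a \in A_i$, $b \in A_j$, so that by definition $v_i(b) > v_i(a)$, and write $\delta := v_i(b) - v_i(a) > 0$. Using additivity, I would express the two post-flip values as
\begin{align*}
v_i(A_i \pplus b \mminus a) &= v_i(A_i) + \delta, \\
v_i(A_j \pplus a \mminus b) &= v_i(A_j) - \delta.
\end{align*}

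The key step is then the following chain, which uses $\delta > 0$, $\gamma \in [0,1]$, and the $\gamma$-EF hypothesis:
\begin{align*}
v_i(A_i \pplus b \mminus a) = v_i(A_i) + \delta \;\geq\; v_i(A_i) \;\geq\; \gamma\, v_i(A_j) \;\geq\; \gamma \bigl(v_i(A_j) - \delta\bigr) = \gamma\, v_i(A_j \pplus a \mminus b).
\end{align*}
Since $i,j$ and the rational flip were arbitrary, this establishes $\gamma$-\efx.

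There is no real obstacle here: the argument is essentially a one-line calculation, and the only point that requires any care is making sure the direction of the inequality is preserved when passing from $v_i(A_j)$ to $v_i(A_j) - \delta$, which relies on $\gamma \geq 0$ together with $\delta > 0$ (guaranteed by the rationality of the flip). No properties of $\aaa$ beyond $\gamma$-EF and additivity of $v_i$ are invoked, so the lemma holds in full generality for the valuation class considered in the paper.
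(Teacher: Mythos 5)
Your proof is correct and follows essentially the same route as the paper's: both arguments sandwich the $\gamma$-EF inequality $v_i(A_i)\geq\gamma\,v_i(A_j)$ between the observations that a rational flip strictly increases $v_i(A_i)$ and weakly decreases $\gamma\,v_i(A_j)$. If anything, your version is phrased slightly more carefully, since you explicitly quantify over \emph{every} rational flip rather than asserting the existence of one.
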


\begin{proof}
    Let $\aaa = (A_1, A_2, \ldots, A_n)$ be a $\gamma$-EF allocation for some $\gamma \in [0, 1]$, which means that $v_i(A_i) \geq \gamma \cdot v_i(A_j)$. 
    Consider any pair of agents $i$ and $j$, and suppose that agent $i$ envies agent $j$. If not, then $i$ trivially satisfies the $\gamma$-\efx condition w.r.t. agent $j$. 
    Due to the envy, there must exist at least one rational flip involving $a \in A_i$ and $b \in A_j$, such that $v_i(b) > v_i(a)$. Then
    \begin{align}
        v_i(A_i \pplus  b \mminus  a) > v_i(A_i) \geq \gamma \cdot v_i(A_j) > \gamma \cdot v_i(A_j \pplus  a \mminus  b). \nonumber
    \end{align}
    Hence, $\aaa$ is also a $\gamma$-\efx allocation.
\end{proof}

\begin{theorem}
\label{thm:ece-negative}
For any $\gamma\in(0, 1]$, Algorithm~\ref{alg:ece-k} fails to guarantee a $\gamma$-\efx allocation, even for $n=3$ agents.
\end{theorem}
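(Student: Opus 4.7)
My plan is to exhibit a family of instances with $n = 3$ agents, parameterized by an arbitrarily large constant $C$, such that on each such instance the algorithm is forced to output an allocation containing a rational flip of ratio $O(1/C)$. Letting $C \to \infty$ then rules out every fixed $\gamma \in (0,1]$.

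Concretely, I would aim for the following target allocation, viewed from the perspective of one designated agent (say agent~$3$): $A_3$ consists of items that agent~$3$ values only at $M$ or $0$, while $A_2$ contains one item of value $\Theta(C)$ (to agent~$3$) plus $k-1$ items of small positive value $\epsilon$ (to agent~$3$). In such an allocation, the rational flip that exchanges a zero-valued item of $A_3$ with an $\epsilon$-valued item of $A_2$ is rational because $\epsilon > 0$, and moves only $\epsilon$ of value, so the post-flip bundles satisfy $v_3(A_3 \cup b \setminus a) = M + \epsilon$ while $v_3(A_2 \cup a \setminus b) = \Theta(C)$. The ratio is $O(1/C)$, witnessing the failure of $\gamma$-\efx for any $\gamma > 1/C$.

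To steer Algorithm~\ref{alg:ece-k} toward this allocation, the three agents' valuations would be chosen in coordinated layers. In the very first iteration, all bundles are empty and all three agents are sources; adversarial tie-breaking among them sends the pick to agent~$2$, who grabs the $\Theta(C)$-valued item and from then on is permanently envied by agent~$3$, and hence never again a source while agent~$3$ remains active. Agent~$1$ is assigned a valuation concentrated on a disjoint block of items that nobody else cares about, so she safely collects them as a source through the middle rounds without disturbing the envy structure between agents~$2$ and~$3$. Agent~$3$'s valuation is calibrated so that while she alternates as a source with agent~$1$, she first picks her one $M$-valued item and is then forced into picking zero-valued items during her remaining turns, with the $\epsilon$-valued items left over for agent~$2$ to sweep up in the very last rounds when she is the only active agent.

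The main obstacle is controlling the envy graph round by round: one must verify that the intended unique active source at each iteration is consistent with the rule ``smaller bundle first'' together with adversarial tie-breaking, that no unintended envy cycle between agents~$2$ and~$3$ arises and triggers the cycle-elimination subroutine (which would swap bundles and likely destroy the construction), and that agent~$2$'s late picks indeed collect the $\epsilon$-valued items rather than any of agent~$3$'s intended items. I expect this to require a careful layered choice of magnitudes (with, schematically, $C \gg M \gg \epsilon$ and small tie-breaking perturbations separating items that would otherwise be tied for an agent), and possibly allowing $k$ slightly larger than $3$ to obtain enough rounds for the trajectory to unfold unambiguously. Once the trajectory is pinned down, Lemma~\ref{lemma: agents-improve} rules out any intermediate cancellation effects, and the promised rational flip together with its ratio bound follow by direct computation.
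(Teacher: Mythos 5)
There is a concrete gap in the trajectory you describe, and it is fatal to the target allocation. You want agent~$3$ to end with one item of value $M$ plus items she values at $0$, while the $k-1$ items she values at $\epsilon>0$ stay in the pool until the very last rounds, to be swept up by agent~$2$. But whenever agent~$3$ is selected by Algorithm~\ref{alg:ece-k}, she adds her \emph{favorite} unallocated item to her bundle; as long as any $\epsilon$-valued item is still unallocated she will take it in preference to a $0$-valued one. So either agent~$3$ absorbs the $\epsilon$-items herself (in which case the only item of $A_2$ that she values positively is the $\Theta(C)$ item, every rational flip must involve it, and the resulting allocation is exactly \efx from her perspective --- the construction collapses), or the $\epsilon$-items must already be out of the pool whenever she picks, which forces them into some other agent's bundle rather than into $A_2$, since agent~$2$ by design only picks at the very end. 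Either way, the promised cheap rational flip between a $0$-valued item of $A_3$ and an $\epsilon$-valued item of $A_2$ never materializes. This obstruction is structural: in any run where every agent acquires items only by greedily picking from the pool, each item an agent holds is at least as valuable to her as anything that was still available at the time she picked it, which is precisely the mechanism behind the positive guarantee in the ordered-valuations case.

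The idea you are missing --- and in fact explicitly trying to avoid --- is the envy-cycle resolution step. The paper's construction deliberately \emph{engineers} a two-cycle between agents~$2$ and~$3$ so that agent~$2$ receives, via a bundle swap, a set of $k-1$ items each worth only $\frac{1}{k-1}$ to her: items she would never have chosen while better ones were available. She then completes her bundle with one more item and becomes inactive, after which the items she values at $1$ each go to agents~$1$ and~$3$. This is what makes it possible for the final envied bundle to contain simultaneously a huge item (the $C$-item causing the envy) and a flip partner yielding only a tiny gain, while the envious agent's own bundle stays cheap. Lemma~\ref{lemma: agents-improve} is not where the difficulty lies; the difficulty is that without a cycle swap the greedy pick rule protects the envious agent too well, and your stated goal of keeping the cycle-elimination subroutine from firing removes the very mechanism the lower bound needs.
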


\begin{proof}
    Consider an instance with $3$ agents and $3k$ items, denoted as $g_1,...,g_{3k}$. Agent $1$ has the following valuation function: $v_{1}(g_1)=1$ and $v_1(g_x)=\epsilon$ for $x = 2,\ldots,3k$, where $\epsilon < \frac{1}{k-1}$. 
    
    Agent $2$ has the following valuation function: $v_2(g_1)=C$ where $C>1$ is a large constant, $v_2(g_2)=1$,  $v_2(g_3) = \frac{1}{k-1} + \epsilon$, $v_2(g_x) = \frac{1}{k-1}$ for $x=4,\ldots,k+1$ and $v_2(g_x) = 1$ for $x=k+2,\ldots,3k$.
    
    Finally, agent $3$ has the following valuation function: $v_3(g_1)=v_3(g_2)=1$, $v_3(g_x) = \frac{1}{k-1}$ for $x=3,\ldots,k$, $v_3(g_{k+1}) = \frac{1}{k-1} - \epsilon$ and $v_3(g_{k+2}) = \epsilon$, where $\epsilon < \frac{1}{k-1}$. Note that the valuation of agent $3$ for items $g_{k+3},\ldots,g_{3k}$ can be arbitrary. 

    Consider a run of Algorithm~\ref{alg:ece-k}, where each iteration of the while-loop is referred to as a round. Initially, all agents are unenvied. Suppose agent $1$ is selected to receive the first item, $g_1$, followed by agent $2$, who chooses the second item, $g_2$. Agent $3$ is then the unique unenvied agent and selects item $g_3$. Note that agent $3$ remains unenvied until she selects all items up to $g_{k+1}$. Let $\mathcal{B} = (B_1, B_2, B_3)$ denote the partial allocation at the end of round $k+1$, before any envy cycles are resolved. At this point, we have $B_1 = \{g_1\}$, $B_2 = \{g_2\}$, and $B_3 = \{g_3, \ldots, g_{k+1}\}$. Figure \ref{figures: inapprox_general_fig1} depicts the corresponding envy graph $G$ for the partial allocation $\mathcal{B}$. Notice that there is no source vertex (i.e., no unenvied agent). But we can get one by removing the envy cycle between 2 and 3, according to the algorithm. Since at this point $v_2(B_2)=1 < 1 + \epsilon = v_2(B_3)$ and $v_3(B_3)=1 -\epsilon< 1 = v_3(B_2)$, agents $2$ and $3$ will exchange their bundles. Figure \ref{figures: inapprox_general_fig2} demonstrates the updated envy graph $G'$ corresponding to the new allocation $\mathcal{B}' = (B_1, B_3, B_2)$, which is depicted with the shaded area in Table~\ref{tab:ECEfailsGeneral}.

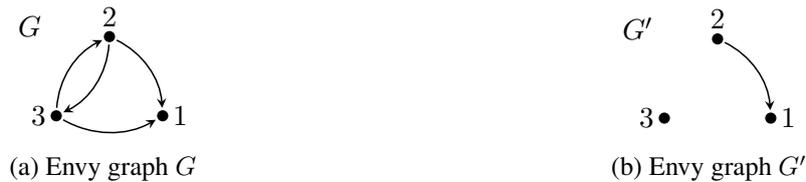
\begin{figure}[thbp]
\centering
\begin{subfigure}{0.49\columnwidth}
  \centering
  \begin{tikzpicture}[scale=0.7]
  \coordinate (i_3) at (0,0); 
  \coordinate (i_1) at (2,0);   
  \coordinate (i_2) at (1,1.5);     
  \node at (-0.5, 1.7) {$G$}; 
  
  \fill (i_3) circle (3pt) node[left] {$3$};
  \fill (i_1) circle (3pt) node[right] {$1$};
  \fill (i_2) circle (3pt) node[above] {$2$};
  
  \draw[->, bend left=30, >=stealth, line width=0.2mm, shorten >=3pt, shorten <=3pt] (i_2) to (i_3);
  \draw[->, bend left=30, >=stealth, line width=0.2mm, shorten >=3pt, shorten <=3pt] (i_3) to (i_2);
  \draw[->, bend right=30, >=stealth, line width=0.2mm, shorten >=3pt, shorten <=3pt] (i_3) to (i_1);
  \draw[->, bend left=30, >=stealth, line width=0.2mm, shorten >=3pt, shorten <=3pt] (i_2) to (i_1);
  \end{tikzpicture}
  \caption{Envy graph $G$}
  \label{figures: inapprox_general_fig1}
\end{subfigure}
\begin{subfigure}{0.5\columnwidth}
  \centering
  \begin{tikzpicture}[scale=0.7]
  \coordinate (i_3) at (0,0); 
  \coordinate (i_1) at (2,0);   
  \coordinate (i_2) at (1,1.5);     
  \node at (-0.5, 1.7) {$G'$}; 
  
  \fill (i_3) circle (3pt) node[left] {$3$};
  \fill (i_1) circle (3pt) node[right] {$1$};
  \fill (i_2) circle (3pt) node[above] {$2$};
  
  \draw[->, bend left=30, >=stealth, line width=0.2mm, shorten >=3pt, shorten <=3pt] (i_2) to (i_1);
  \end{tikzpicture}
  \caption{Envy graph $G'$}
  \label{fig:envy_graph_G_prime}
\end{subfigure}
\caption{Envy graphs for the proof of Theorem~\ref{thm:ece-negative}.}
\label{figures: inapprox_general_fig2}
\end{figure}

For the next step of the algorithm, assume that among the unenvied agents (i.e., the source vertices in Figure \ref{fig:envy_graph_G_prime}), agent $2$ is selected to receive the next item, $g_{k+2}$. Upon receiving this item and completing a bundle of $k$ items, agent $2$ becomes inactive. Following this, agents $1$ and $3$ each select $k-1$ additional items, all of which succeed item $g_{k+2}$ in Table~\ref{tab:ECEfailsGeneral} and hold a value of $1$ for agent $2$. Let $\aaa = (A_1, A_2, A_3)$ denote the final allocation. In this case, $v_2(A_2) = 2 + \epsilon$, and either $v_2(A_1) = k-1 + C$ or $v_2(A_3) = k-1 + C$, depending on how agents $1$ and $3$ share the items beyond $g_{k+2}$. W.l.o.g., assume that $v_2(A_1)=k-1 + C$. Observe that in this case there exists the rational flip $(g_3, g_x),\; x\geq k+3$, w.r.t. agent 2, where $g_3 \in A_2$, $g_x \in A_1$ and $v_2(g_x) > v_2(g_3)$, such that 
\begin{align}
    v_2(A_2 \pplus g_x \mminus g_3) = (2 + \epsilon) - \left(\frac{1}{k-1} + \epsilon\right) +1 = 3 - \frac{1}{k-1} < 3,
\end{align}
while
\begin{align}
    v_2(A_1 \mminus g_x \pplus g_3) = (k-1 + C) - 1 + \left(\frac{1}{k-1} + \epsilon\right) > C.
\end{align}

Therefore, we have established that Algorithm~\ref{alg:ece-k} cannot guarantee better than $3/C$-\efx allocations. Since $C$ is a parameter that we can make arbitrarily large, this means that we cannot have any approximation guarantee, for any $\gamma\in (0, 1]$.

This instance can be extended to an arbitrary number of agents $n$ by introducing multiple copies of agent $3$. For each such copy, we add a corresponding item that only these agents value at $1$. These agents will always receive an item before agent $2$,  and they will get any new item after agent  $2$.

\begin{table}[tbh]
\centering
    \renewcommand{\arraystretch}{1.3}
\begin{tabular}{c | c c c c c c c}
    & $g_1$ & $g_2$ &$g_3$ & $g_4$ & $\cdots$ & $g_{k+1}$  & $g_{k+2}$ \\

    \hline
    $v_1(\cdot)$ & \cellshade 1 & $\epsilon$ & $\epsilon$ & $\epsilon$ & $\cdots$  & $\epsilon$ & $\epsilon$   \\
        
    $v_2(\cdot)$ & $C$ & 1 &  \cellshade $\frac{1}{k-1} + \epsilon$ & \cellshade $\frac{1}{k-1}$ & \cellshade $\cdots$ &  \cellshade $\frac{1}{k-1}$  & $1$   \\
    
    $v_3(\cdot)$ & $1$ & \cellshade $1$ & $\frac{1}{k-1} $ & $\frac{1}{k-1}$ & $\cdots$ & $\frac{1}{k-1} - \epsilon$ & $\epsilon$  \\
\end{tabular}
\caption{The valuations for the proof of Theorem~\ref{thm:ece-negative} for the first $k+2$ items.}
\label{tab:ECEfailsGeneral}
\end{table}

\end{proof} 

Given the negative result for Algorithm~\ref{alg:ece-k} under general additive valuations, in the next subsections we focus on certain special cases that have been of interest in the fair division literature.

\subsection{Ordered valuations}

We start with a natural special case, where all agents agree on the ranking of the items, from the most valuable to the least valuable one. Such instances are often referred to as instances with ordered valuations, and have been often studied within fair division, e.g., \cite{bouveret2016characterizing,plaut2020almost}.
We show that Algorithm~\ref{alg:ece-k} does achieve a constant approximation in the following theorem. We again note the difference with the EFX notion, where, as proved in \cite{plaut2020almost}, the envy cycle elimination finds an exact EFX allocation.

\begin{theorem}
\label{thm:1/2-common ranking}
    For additive valuations and when all agents agree on the ranking of the items in terms of their value, Algorithm~\ref{alg:ece-k} returns a $1/2$-\efx allocation.
\end{theorem}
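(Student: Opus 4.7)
The plan is to show that Algorithm~\ref{alg:ece-k} returns a $1/2$-EF allocation, and then invoke Lemma~\ref{lemma: ef-implies-efx} to upgrade this to $1/2$-\efx. A key preliminary observation is that under a common ranking $g_1 \succ g_2 \succ \cdots \succ g_{kn}$, every unenvied agent selected by the algorithm has the same ``favorite available item'' (namely the smallest-indexed unallocated one), so items are allocated in the fixed order $g_1, g_2, \ldots, g_{kn}$ regardless of who is chosen at each step. I will also view bundles as \emph{persistent containers}: an item joins a container when the current owner picks it, and envy-cycle resolutions permute the ownership of whole containers without altering their contents, so once an item enters a container it stays there.

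Fix agents $i$ and $j$ and let $B$ be the container whose final owner is $j$ (so $B = A_j$). Let $b^* \in A_j$ be the item picked at the latest iteration $\tau$ among those that ever entered $B$. At iteration $\tau$, the (then) owner $z$ of $B$ was unenvied, so for every other agent $y$ we have $v_y(A_y^{\tau-1}) \geq v_y(B^{\tau-1}) = v_y(A_j \setminus b^*)$. Setting $y = i$ (if $z = i$, the conclusion $v_i(A_i) \geq v_i(A_j)$ is immediate from monotonicity) and combining with Lemma~\ref{lemma: agents-improve}, which asserts $v_i(A_i^t) \geq v_i(A_i^{t-1})$ at every iteration, I obtain
\begin{equation*}
v_i(A_i) \;\geq\; v_i(A_i^{\tau-1}) \;\geq\; v_i(A_j) - v_i(b^*).
\end{equation*}

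Since items are picked in the common ranking order and $b^*$ is the latest pick inside $B$, $b^*$ is the lowest-ranked item of $A_j$ under every valuation, in particular under $v_i$. By an averaging argument across the $k$ items of $A_j$, $v_i(b^*) \leq \tfrac{1}{k} v_i(A_j) \leq \tfrac{1}{2} v_i(A_j)$ because $k \geq 2$. Plugging back gives $v_i(A_i) \geq \tfrac{1}{2} v_i(A_j)$, so the allocation is $\tfrac{1}{2}$-EF and hence $\tfrac{1}{2}$-\efx by Lemma~\ref{lemma: ef-implies-efx}.

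The main subtlety is making the ``persistent container'' picture rigorous in the presence of envy-cycle resolutions and in the handling of inactive agents: one needs to verify that the last-pick iteration $\tau$ into the container that becomes $A_j$ is well defined, that the unenvied-owner property at that iteration does apply to $i$, and that the monotonicity guarantee of Lemma~\ref{lemma: agents-improve} survives cycle swaps. Since cycle resolutions only permute bundles as sets and the lemma is stated for every iteration of the algorithm, these properties go through; the rest of the argument is a direct calculation and does not require any structural claim beyond what the common-ranking assumption already gives.
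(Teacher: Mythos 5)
There is a genuine gap at the central step of your argument: the claim that at the iteration $\tau$ when the last item $b^*$ enters the container that becomes $A_j$, the current owner of that container is unenvied \emph{by $i$}. In Algorithm~\ref{alg:ece-k}, once an agent has received $k$ items she becomes inactive and is no longer considered; in particular, her envy no longer prevents other agents from being selected (this is essential for the algorithm to make progress, and is exactly how the run in the proof of Theorem~\ref{thm:ece-negative} proceeds). So if agent $i$ fills her bundle \emph{before} the container of $A_j$ receives its last item, the selection of the owner of that container at iteration $\tau$ tells you nothing about $i$'s envy, and the inequality $v_i(A_i)\geq v_i(A_j\setminus b^*)$ need not hold. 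Concretely, take $n=2$, $k=3$, and common-ranking valuations $v_1=(10,1,1,1,1,1)$, $v_2=(10,3,3,3,3,3)$ on $g_1,\dots,g_6$. A valid run gives $g_1$ to agent $1$; agent $2$ then envies and picks $g_2,g_3,g_4$ in succession and becomes inactive; agent $1$ (who never envies $2$) then collects $g_5,g_6$. The final allocation is $A_1=\{g_1,g_5,g_6\}$, $A_2=\{g_2,g_3,g_4\}$ with $b^*=g_6$, yet $v_2(A_2)=9 < 13 = v_2(A_1\setminus g_6)$, and also $9<\tfrac{2}{3}\cdot 16=\tfrac{k-1}{k}v_2(A_1)$, so both your intermediate inequality and the $(k-1)/k$-EF bound it implies are false. (The allocation is still $9/16$-EF, consistent with the theorem, but not via your chain.)

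Your argument is essentially Case 1 of the paper's proof (agent $i$ becomes inactive after agent $j$), where it is correct and indeed yields the stronger $\tfrac{k-1}{k}$-EF guarantee. The missing content is the complementary case, where $i$ becomes inactive first: there the paper splits $A_j$ into the items $X_j$ received before the last pick $g$ made while $i$ was still active, the item $g$ itself, and the items $X_j'$ received afterwards. The bound $v_i(A_i)\geq v_i(X_j)$ still comes from the unenvied condition, but $v_i(X_j'\cup g)$ must be controlled differently, via item-by-item domination under the common ranking ($|X_j'|\leq k-1 = |A_i\setminus h^*|$ and every item of $X_j'$ is ranked below every item of $A_i$), giving only a $1/2$-EF guarantee; and when $X_i=\emptyset$ one cannot even get a constant EF ratio and must instead argue directly that every rational flip involves $g$ and restores the exact \efx condition. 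In the example above, $A_2$ is only $9/16$-EF toward $A_1$, and the theorem's conclusion is rescued precisely by this flip argument, not by an EF bound plus Lemma~\ref{lemma: ef-implies-efx}.
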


\begin{proof}
    The case $k\leq 2$ is covered by Theorem~\ref{theorem: existence-k-2} because under the common ranking assumption, and with $k=1$ or $k=2$, Algorithm~\ref{alg:ece-k} boils down to the generalized Round-Robin algorithm described in Theorem~\ref{theorem: existence-k-2}.
     
    Consider the case of $k \geq 3$. Let $A_i$ and $A_j$ denote the final bundles assigned to agents $i$ and $j$, respectively, at the end of the algorithm. Note that $|A_i| = |A_j| = k$. We will examine the envy of agent $i$ towards $j$. We will show that either (a) $i$ is $1/2$-EF w.r.t. $j$ or (b) that $i$ satisfies the \efx condition w.r.t. $j$. 

    Suppose that $v_i(A_i) < v_i(A_j)$, otherwise, we are done. We say that an agent is \emph{active} at a given iteration of the algorithm, if she still has not received $k$ items, and therefore she is still considered by the algorithm. Otherwise, once the agent is allocated her $k$-th item, she is considered \emph{inactive}. At this point, we distinguish between two cases.\\

\noindent    \textbf{Case 1: }\textit{Agent $i$ became inactive after agent $j$.}

    Let $g_k$ be the last item added to the bundle $A_j$ by the algorithm. After the allocation of $g_k$, agent $j$ became inactive, since she received $k$ items. Furthermore, since the owner of the bundle $A_j \mminus  g_k$ was unenvied just before $g_k$ was allocated, it follows by Lemma~\ref{lemma: agents-improve} that
    \begin{align}\label{thm:commonranking:ineq1}
        v_i(A_i) \geq v_i(A_j \mminus  g_k).
    \end{align}
    Note now that $|A_j \setminus g_k| = k-1 \geq 2$, and due to the common ranking assumption, all the items in $A_j \setminus g_k$ are at least as preferred as $g_k$, for all agents. Consequently, we have the following inequalities:
\begin{align}\label{thm:commonranking:ineq2}
        v_i(A_i) \geq v_i(A_j \setminus g_k) \geq (k-1) \cdot v_i(g_k).
    \end{align}
    By multiplying $k-1$ times inequality (\ref{thm:commonranking:ineq1}) and adding inequality (\ref{thm:commonranking:ineq2}), we obtain,
    \begin{align*}
        v_i(A_i) \geq \left(\frac{k-1}{k}\right) (v_i(A_j \setminus g_k) + v_i(g_k)) \geq \left(\frac{k-1}{k}\right)v_i(A_j).
    \end{align*}
    Since $k \geq 3$, this guarantees a $2/3$-EF approximation. By Lemma~\ref{lemma: ef-implies-efx}, this also satisfies the $2/3$-\efx condition, which is even better than the $1/2$-bound claimed in our theorem.
    
    We now turn to the complementary and more involved case, where the best we can get is a $1/2$-approximation.\\
    
    \noindent \textbf{Case 2:} \textit{Agent $i$ became inactive before agent $j$.}
    
    Let $g$ be the last item added to bundle $A_j$ before agent $i$ became inactive. We partition $A_j$ into three subsets: the set $X_j$, which contains all items added to $A_j$ before $g$; $X_j'$, containing items added after $g$; and the singleton set $\{g\}$. Similarly, it is convenient to also partition $A_i$ into two subsets: the set $X_i$, containing all items added to $A_i$ before the iteration where $g$ was added to $A_j$, and $X_i'$, consisting of all items added afterwards.
    
    Now, let $B_i$ represent the bundle held by agent $i$ at the time $g$ was allocated. It is important to note that $B_i$ may differ from $X_i$, as the bundle could have been transferred to another agent due to the resolution of an envy cycle. Nonetheless, by Lemma~\ref{lemma: agents-improve}, we know that $v_i(A_i) \geq v_i(B_i)$.
  Since the owner of the bundle $X_j$ was unenvied just before $g$ was allocated, we have the following:
\begin{align}\label{thm:commonranking:ineq3}
        v_i(A_i) \geq v_i(X_j). 
    \end{align}

    \noindent Furthermore, due to the common ranking assumption, \begin{align}\label{thm:commonranking:ineq4}
        v_i(h) \geq v_i(g) \geq v_i(g'), \text{ for all } h \in X_i \text{ and all } g' \in X'_j,
    \end{align}
    since every item $h \in X_i$ has been allocated before $g$, whereas every item $g' \in X'_j$ has been allocated after $g$.
    Similarly, \begin{align}\label{thm:commonranking:ineq5}
         v_i(h') \geq v_i(g') \text{ for all } h' \in X'_i \text{ and all } g' \in X'_j, 
    \end{align}
   \noindent since each item $h' \in X'_i$ was allocated while agent $i$ was still active, whereas every item $g' \in X'_j$ was allocated after agent $i$ became inactive (recall that $g$ was the last item added to $A_j$ before $i$ became inactive). At this point, we divide the analysis into two cases, based on the size of the set $X_i$.
   
   \noindent {\it Subcase 2a.} Suppose first that $X_i \neq \emptyset$, and let $h^* \in X_i$. Then we know that $v_i(h^*) \geq v_i(g)$ by \eqref{thm:commonranking:ineq4}. Since $|A_i \setminus h^*|=|X_i \cup X_i' \setminus h^*| = k-1$ and $|X_j'| \leq k-1$, by combining (\ref{thm:commonranking:ineq4}) and (\ref{thm:commonranking:ineq5}), we obtain 
   \begin{align*}
       v_i(A_i\setminus h^*) \geq v_i(X'_j).
   \end{align*}

    \noindent In conjunction with $v_i(h^*) \geq v_i(g)$, this implies that
    \begin{align}\label{thm:commonranking:ineq6}
        v_i(A_i) \geq v_i(X_j' \cup g).
    \end{align}
   
   \noindent Adding inequality (\ref{thm:commonranking:ineq3}) with (\ref{thm:commonranking:ineq6}) yields
   \begin{align}
        v_i(A_i) \geq \frac{1}{2} (v_i(X_j) + v_i(X'_j \pplus g))  = \frac{1}{2} v_i(A_j),
   \end{align}
   which guarantees that $i$ is $1/2$-EF w.r.t. agent $j$. By Lemma~\ref{lemma: ef-implies-efx}, this is also $1/2$-\efx.

   \noindent {\it Subcase 2b.} Otherwise, suppose $X_i = \emptyset$. Then $X_j = \emptyset$, otherwise $j$ would receive $g$ as (at least) her second item before $i$ receives any item, which cannot hold. This implies that $g$ is the highest-valued item in $A_j$ for all agents. 
   Let $h^1_i$ denote the highest valued item in $A_i$. Notice that in this case, $A_i = X'_i$ and $A_j = g \pplus X'_j$.
   The key observation in this case is that any rational flip in the final allocation involving agent $i$ w.r.t. $j$, must involve the item $g$ from the bundle $A_j$. Let $(r, g)$ represent any such rational flip, for some $r \in A_i$, where $v_i(g) > v_i(r)$. 
   Combining this with inequality (\ref{thm:commonranking:ineq5}), and since $A_i = X'_i$, $A_j = g \pplus X'_j$, it follows that
   \begin{align*}
       v_i(h) \geq v_i(g') \text{ for all } h \in A_i \pplus g \mminus r \text{ and all } g'\in A_j \mminus g \pplus r,
   \end{align*}
   which implies that
   \begin{align*}
       v_i(A_i \pplus g \mminus r) \geq v_i(A_j \mminus g \pplus r).
   \end{align*}
   Thus, this guarantees that agent $i$ satisfies the exact \efx condition w.r.t. agent $j$.

In conclusion, under all scenarios we examined, for any agents $i, j$, agent $i$ is either $2/3$-EF w.r.t. $j$ (Case~1) or $1/2$-EF w.r.t. $j$ (Subcase 2a) or \efx w.r.t. $j$ (Subcase 2b).  
Thus the $1/2$-\efx condition is satisfied. 
\end{proof}

We complement the previous result with an example that shows a lower bound on the performance of Algorithm~\ref{alg:ece-k}. Although the example is not tight, it does not leave too much room for improvement either (in fact it is tight w.r.t. Case 1 within the proof of Theorem~\ref{thm:1/2-common ranking}).

\begin{theorem}
\label{exm:2/3lb}
Algorithm~\ref{alg:ece-k} cannot guarantee a better than a $2/3$-\efx approximation for additive valuations, when the agents agree on the ranking of the items w.r.t. their value.
\end{theorem}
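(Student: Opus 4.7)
I would exhibit a concrete instance with $n=2$, $k=3$, and six items under a common (hence ordered) valuation on which Algorithm~\ref{alg:ece-k} outputs an allocation whose \efx ratio tends to $2/3$ from above. The design principle is to make Case~$1$ of the proof of Theorem~\ref{thm:1/2-common ranking} essentially tight---so that the envier ends with bundle value $\approx 2y$ and the envied with $\approx 3y$---while ensuring there is a rational flip whose two endpoints have nearly equal values, so that the flip barely changes the ratio.

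Concretely, I would set $v(g_1)=y$, $v(g_2)=v(g_3)=y-\delta$, $v(g_4)=v(g_5)=y-2\delta$, and $v(g_6)=0$, for small $\delta>0$ with $y>2\delta$; these values are non-increasing in $g_1,\ldots,g_6$, so the instance is ordered. Tracing Algorithm~\ref{alg:ece-k}: agent~$1$ picks $g_1$; agent~$2$, the unique source thereafter, picks $g_2$ and then $g_3$; agent~$1$ now envies and picks $g_4$, equalizing bundle values at $2y-2\delta$; by the tiebreak rule agent~$1$ continues, picks $g_5$, and becomes inactive with $A_1=\{g_1,g_4,g_5\}$ of value $3y-4\delta$; agent~$2$ is then forced to take $g_6$, ending with $A_2=\{g_2,g_3,g_6\}$ of value $2y-2\delta$.

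Agent~$2$ envies agent~$1$, and the critical rational flip is $(g_2,g_1)$. After this flip, $v(A_2 \pplus g_1 \mminus g_2)=2y-\delta$ and $v(A_1 \pplus g_2 \mminus g_1)=3y-5\delta$, so the \efx ratio it induces is $(2y-\delta)/(3y-5\delta)$, which is strictly above $2/3$ for every $\delta>0$ but tends to $2/3$ as $\delta\to 0$. A brief enumeration of the remaining rational flips---all of which involve the zero-valued $g_6$ and thus swap a zero for a value of order $y$---yields ratios that converge to $3/2$, confirming that $(g_2,g_1)$ is the bottleneck. Consequently, for every $\gamma>2/3$, a sufficiently small $\delta$ prevents the output from being $\gamma$-\efx.

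The main obstacle is the tiebreak after the fourth round, when both bundles have size~$2$ and value $2y-2\delta$: Algorithm~\ref{alg:ece-k}'s stated rule (``prefer smaller bundles'') does not discriminate between the two agents there. Since the theorem concerns what the algorithm can \emph{guarantee}, it suffices to exhibit one admissible execution---any deterministic tiebreak such as lowest index delivers the run above---and the symmetric execution in which agent~$2$ picks $g_5$ instead leads, by an identical analysis, to a matching $2/3$ bound.
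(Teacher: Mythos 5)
Your proposal is correct and essentially mirrors the paper's proof: the same six-item, two-agent instance structure with near-equal values around a common level, the same resulting allocation $\{g_1,g_4,g_5\}$ versus $\{g_2,g_3,g_6\}$, and the same critical flip of the two nearly-equal top items driving the ratio to $2/3$. The only difference is that the paper perturbs the values ($1+2\epsilon,\,1+\epsilon,\,1,\,1-2\epsilon,\,1-3\epsilon,\,0$) so that bundle values never tie and the execution is fully forced, whereas your instance creates a tie after the fourth pick --- which you correctly dispose of by checking that both resolutions yield the same bound.
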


\begin{proof}
    We use an example with 6 items and 2 agents.
    The table below shows the valuations for the items $\{g_z\}_{z \in [6]}$ for agents $1, 2$, where $0 < \epsilon \ll 1$ is arbitrarily small. We assume the agents have identical valuations, denoted as $v (\cdot)$.

    \begin{center}
    \begin{tabular}{l|c c c c c c c c c}
        & $g_1$ & $g_2$ &$g_3$ & $g_4$ & $g_5$ & $g_6$\\
        \hline
         $v(\cdot)$&$\cellshade 1 + 2\epsilon$ &$1 + \epsilon$  &$1$ &$\cellshade 1 - 2\epsilon$ &$\cellshade 1 - 3\epsilon$ &0\\
    \end{tabular}        
    \end{center}

    \noindent It is easy to verify that the allocation produced by Algorithm~\ref{alg:ece-k} for this instance consists of the bundles $\{g_1, g_4, g_5\}$ (shown shaded in the table above) and $\{g_2, g_3, g_6\}$. W.l.o.g., let $A_1 = \{g_2, g_3, g_6\}$ and $A_2 = \{g_1, g_4, g_5\}$. Then, $v(A_1) = 2 + \epsilon$ and $v(A_2) = 3 - 3\epsilon$. This implies that agent $1$ envies agent $2$. Observe that agent $1$ envies agent $2$ even after the rational flip $(g_2, g_1)$. If $A_{1}'$ and $A_{2}'$ are the bundles of agents $1$ and $2$, respectively, after performing the rational flip, then $v(A_{1}') = 2 + 2\epsilon $ and $v(A_{2}') = 3 - 4\epsilon$, 
    with the envy ratio approaching     \begin{align*}
        \frac{v(A_1')}{v(A_2')} =  \frac{2 + 2\epsilon}{3 - 4\epsilon} \approx \frac{2}{3}.
    \end{align*}
Hence, the algorithm cannot produce a better than a $(2/3 + \delta)$ approximation for any $\delta>0$.  
\end{proof}

\subsection{Agreement on the top $n$ items}
\label{subsec:top-n}

In this section, we relax further the common ranking assumption on the items. We feel that an even more natural scenario than the  case of ordered valuations is that the agents only agree on what are the most valuable items. In particular, we will focus on instances where the set of the $n$ most valuable items is the same for all agents, without having to agree on the ranking of these $n$ items. 
We denote by $T^n$ the set of these top items. 
This setting was recently studied in \cite{markakis23improved}.

Given the negative results of Theorem~\ref{thm:ece-negative}, Algorithm~\ref{alg:ece-k} is not expected to perform well. To overcome this obstacle, we modify the envy cycle elimination procedure, so that it becomes more appropriate for our constrained model. The main problem with Algorithm~\ref{alg:ece-k} is that as soon as an agent acquires a bundle of size $k$, then she is essentially removed from the process and she is not considered again for receiving any additional items. This can create a problem if in the subsequent iterations, the other agents receive items that may create rational flips. In order to make the algorithm more robust and more tailored to our setting, we introduce the following modifications.

\begin{itemize}[noitemsep,leftmargin=*]
    \item \emph{Item swap operations}. We allow certain agents who have already acquired a bundle of size $k$ to swap their least valued item from their bundle with their  most preferred item from the pool of currently unallocated items, provided the swap strictly improves their bundle.
    \item \emph{Privileged agents}. We maintain a set of privileged agents, $P$. For an agent $i$ to enter the set $P$ it must hold that $i$ already has a bundle of size $k$, she is unenvied by agents in $N\setminus P$, and she cannot currently improve her allocation by an item swap.  Such agents remain in $P$ until they find a profitable swap (with items that may become available in the sequel). The algorithm also maintains the invariant that there is no envy from an agent in $N\setminus P$ to an agent in $P$.  
\end{itemize}

The algorithm is described as Algorithm~\ref{alg: algorithm_general}. To briefly describe its main steps, the main intuition is that in each iteration we give priority to the privileged agents so as to avoid creating detrimental flips for them. In particular, in the beginning of each iteration, every privileged agent $j$, examined in a certain order, is given the chance to do an item swap. If this succeeds, $j$ stops being privileged and is kicked out of $P$ along with all other reachable nodes in the updated envy graph, when restricted to $P$. 
If, on the other hand, no privileged agent had any beneficial swap, then the algorithm finds an unenvied agent $i$ in $N\setminus P$, by using envy cycle elimination on the envy graph restricted to $N\setminus P$. If $i$ already has $k$ items and does not have any beneficial swap, then $i$ is added to $P$. Otherwise, either $i$ performs an item swap or she has less than $k$ items and receives her most preferred item from the unallocated ones. Figure~\ref{fig:graph_of_general_approx_efx} provides an illustrative example of the envy graph during the algorithm’s execution.

To see why the algorithm needs to give first priority to the privileged agents, suppose that it did not do so. Then the agents of $P$ may already envy some other agent $j\in N\setminus P$ in the beginning of the current iteration. If $j$ is selected to receive the next item, without first checking that the agents in $P$ do not have a profitable item swap, this can definitely create bad flips that would violate the (approximate) EFFX condition.

\begin{algorithm}[h!t]
\caption{Envy cycle elimination algorithm with swaps for allocating bundles of size $k$}
\begin{algorithmic}[1]
\Procedure{EnvyCycleEliminationWithSwaps}{$N, M$}

\For{$i \in N$}
    \State $A_i \gets \emptyset$ 
\EndFor

\State $S \gets M$ \Comment{Initialize pool of available items}
\State $P \gets \emptyset$ \Comment{Set of privileged agents}

\While{$S \neq \emptyset$}
    \State If $P\neq \emptyset$, build envy graph $G_P = (P, \{(i, j) \mid i, j \in P, \; v_i(A_i) < v_i(A_j)\})$
    
    \State flag $\gets$ False \Comment{Tracks whether a privileged agent left $P$}

    \For{$p$ in topological order of $G_P$}
        \State $g \gets \operatorname{argmin}_{g' \in A_p} v_p(g')$ \Comment{Least valued item owned by $p$}
        \State $g^* \gets \operatorname{argmax}_{g' \in S} v_p(g')$ \Comment{Most valued item in pool $S$}

        \If{$v_p(g) < v_p(g^*)$}
            \State $A_p \gets A_p \setminus \{g\} \cup \{g^*\}$
            \State $S \gets S \setminus \{g^*\} \cup \{g\}$
            \Comment{Agent $p$ swaps her least preferred item}
            % \State $P \gets P \setminus \{p\}$
            \State flag $\gets$ True

            \State Update $G_P$; remove from $P$ node $p$ and any node reachable from $p$
            \State \textbf{break}
        \EndIf
    \EndFor

    \If{not(flag)}
        \State Build envy graph $G_{N \setminus P} = (N \setminus P, \{(i, j) \mid v_i(A_i) < v_i(A_j)\})$

        \State $j \gets$ \Call{FindUnenviedAgent}{$G_{N \setminus P}$}
        \Statex \Comment{Agent $j$  is \emph{locally unenvied} in $N\setminus P$; ties break in favor of smaller bundles}

        \State $g^* \gets \operatorname{argmax}_{g' \in S} v_j(g')$

        \If{$|A_j| < k$}
            \State $A_j \gets A_j \cup \{g^*\}$
            \State $S \gets S \setminus \{g^*\}$
            \Comment{Agent $j$ \emph{gets} one item if bundle not full}
        \Else
            \State $g \gets \operatorname{argmin}_{g' \in A_j} v_j(g')$
            \If{$v_j(g) < v_j(g^*)$}
                \State $A_j \gets A_j \setminus \{g\} \cup \{g^*\}$
                \State $S \gets S \setminus \{g^*\} \cup \{g\}$
                \Comment{Agent $j$ \emph{swaps} to improve her bundle}
            \Else
                \State $P \gets P \cup \{j\}$
                \Comment{Agent $j$ \emph{passes} and becomes privileged}
            \EndIf
        \EndIf
    \EndIf
\EndWhile
\State \Return{$(A_1,A_2,\ldots, A_n)$}
\EndProcedure
\end{algorithmic}
\label{alg: algorithm_general}
\end{algorithm}

To state the main result of this subsection, we need to introduce one more parameter. Given the set $T^n$ of the common top $n$ items, let $\rho$ be the maximum ratio between the most valuable and the least valuable item in $T^n$, where the maximum is taken over all agents, i.e., $\rho = \max_{i \in [n]} \max_{g, g'\in T^n}\frac{v_i(g)}{v_i(g')}$.

\begin{theorem}
    \label{thm: statement_of_common_top_n}
    Algorithm~\ref{alg: algorithm_general} runs in polynomial time and for $k\geq 2$, if all agents agree on the set of top-$n$ items, it returns a $\min\{1/3, 1/(\rho+1)\}$-EF allocation. More precisely, for any pair of agents $i, j$, either $i$ satisfies $1/(\rho+1)$-EF and \efOne w.r.t. $j$ or $1/3$-EF w.r.t. $j$. For $k=1$, it returns an \efx allocation.
\end{theorem}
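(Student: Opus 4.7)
The plan decomposes into (i) polynomial runtime and well-definedness, (ii) the trivial $k=1$ case, (iii) a structural invariant on top-$n$ items for $k\geq 2$, and (iv) a pairwise case analysis producing the two guarantees.

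For polynomial runtime, each outer-loop iteration performs either a \emph{get} (strictly decreasing $|S|$, at most $kn$ such steps) or a \emph{swap}. I would bound the number of swaps by the observation that once agent $p$ swaps out item $g$, the minimum $v_p$-value in $A_p$ remains $\geq v_p(g)$ thereafter, because $v_p(A_p)$ is monotone non-decreasing along subsequent swaps and envy-cycle resolutions; therefore $p$ will never swap $g$ back in, yielding at most one ``entering'' per $(p,g)$ pair and hence $O(kn^2)$ swaps total. Well-definedness is immediate: whenever $S\neq\emptyset$, some agent has $|A_j|<k$ and is thus not in $P$, so $N\setminus P$ is nonempty and a locally unenvied agent can always be found via cycle elimination on $G_{N\setminus P}$. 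The $k=1$ case follows immediately from the same monotonicity: no profitable swap of a singleton is ever possible, so the algorithm reduces to standard envy cycle elimination and each agent receives a single item from $T^n$. For any envied pair $(i,j)$ the unique rational flip exchanges the two singletons completely and eliminates the envy, giving \efx.

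For $k\geq 2$, the key structural invariant I would establish is that \emph{after the first $n$ iterations, each agent holds exactly one item from $T^n$ and the pool contains none}. In those initial iterations $P$ is empty and the smallest-bundle tie-breaking forces a distinct agent to be selected each round to pick her favorite available item, which must lie in $T^n$; this distributes $T^n$ one-per-agent. The invariant is preserved thereafter because any swap removes only the swapper's least-valued item, and since she already holds her unique top-$n$ item, which dominates every non-top-$n$ item by definition of $T^n$, the swapped-out item must be non-top-$n$; and by induction the pool contains no top-$n$ item that could be swapped or gotten in.

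Given this invariant, fix a pair $(i,j)$ and let $g_i=A_i\cap T^n$, $g_j=A_j\cap T^n$. Since $g_j$ dominates every non-top-$n$ item, $g_j=\arg\max_{g\in A_j}v_i(g)$, and every item in $A_j\setminus g_j$ is non-top-$n$. The split into the two guarantees turns on how $j$'s bundle was finalized. In the favorable case, an EF1-style inequality $v_i(A_i)\geq v_i(A_j\setminus g_j)$ emerges from the envy-cycle invariants at the last iteration in which $j$'s bundle changed, combined with Lemma~\ref{lemma: agents-improve}-style monotonicity and the invariant that $N\setminus P$ never envies $P$; combining with $v_i(g_j)\leq\rho\cdot v_i(g_i)\leq\rho\cdot v_i(A_i)$ gives $v_i(A_j)\leq(\rho+1)v_i(A_i)$, and a rational flip $(a,g_j)$ for a suitable low-valued $a\in A_i$ witnesses exact \efOne. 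The unfavorable case---where $j$'s bundle was refreshed via swaps after $i$'s bundle stabilized and the clean envy-cycle inequality fails---uses a triangle-type chain tracking $v_i(A_j)$ along $j$'s swap trajectory to yield the $1/3$-EF bound. The main obstacle I anticipate is the bookkeeping across $j$'s transitions in and out of $P$ and along cascades of swaps; the key levers are bundle-value monotonicity, the top-$n$ structural invariant, and the $P$-related invariants built into the algorithm design, and chaining them carefully at the last-modification time of $A_j$ should cleanly produce the dichotomy.
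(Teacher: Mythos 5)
Your skeleton matches the paper's: the same supporting lemmas (bundle-value monotonicity, the top-$n$ ``one per bundle, never dropped'' invariant, the no-envy-from-$N\setminus P$-to-$P$ invariant, a DAG/topological order on $P$), the same trivial $k=1$ case, and a two-way dichotomy yielding $1/(\rho+1)$ with \efOne in one branch and $1/3$ in the other. However, the core quantitative argument is left unexecuted, and the two steps that actually make the bounds work are missing. First, the engine behind bounding the part of $A_j$ acquired after $i$ starts envying it is a \emph{priority/declination} argument: whenever an item $g$ is added to $A_j$ while $i$ envies its owner, Lemma~\ref{lem:invariant} forces $i\in P$, so $i$ had priority (either via the topological order of $G_P$ or because $P$ is served first) and declined $g$; hence every item $i$ holds is worth at least $v_i(g)$, giving $v_i(g)\le v_i(A_i)/k$ and, summed over the at most $k-1$ such items, $\frac{k-1}{k}v_i(A_i)\ge v_i(X_j')$. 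You never state this, yet both your ``EF1-style inequality'' and the \efOne witness (via the averaging argument on $i$'s least item) depend on it; it does not ``emerge from the envy-cycle invariants at the last iteration in which $j$'s bundle changed,'' since it must hold for \emph{every} post-envy addition. Second, the paper's case split is on whether $i$'s bundle was empty at the time the last ``unenvied'' item $g_r$ entered $A_j$, and the $1/3$ bound comes from three separate dominations $v_i(A_i)\ge v_i(X_j)$, $v_i(A_i)\ge v_i(g_r)$, $v_i(A_i)\ge v_i(X_j')$. The middle one is the genuinely delicate step: one assumes $v_i(B_i)<v_i(g_r)$, deduces that $j$ obtained $g_r$ via a swap from some agent $z$, and derives a contradiction either from $i$ not envying $z$ or from $g_r$ being a top-$n$ item that swaps never drop. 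Your ``triangle-type chain along $j$'s swap trajectory'' does not identify this argument or the three quantities involved, so the $1/3$ branch is not established.

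Two smaller points. Your runtime bound rests on the claim that once $p$ swaps out $g$, the minimum $v_p$-value in $A_p$ stays at least $v_p(g)$ ``because $v_p(A_p)$ is monotone non-decreasing along \ldots envy-cycle resolutions''; monotonicity of the \emph{total} bundle value under a cycle resolution does not control the \emph{minimum} item value of the new bundle, so $g$ could in principle be swapped back in. The paper instead bounds swaps per (agent, held-bundle) pair. Also, your favorable-case chain $v_i(g_j)\le\rho\,v_i(g_i)\le\rho\,v_i(A_i)$ is fine once $v_i(A_i)\ge v_i(A_j\setminus g_j)$ is in hand, but as noted that inequality is exactly what remains to be proved.
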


The remainder of Section~\ref{subsec:top-n} is dedicated to the proof of Theorem~\ref{thm: statement_of_common_top_n}. 
We first present some useful lemmas that will guide us for the proof.
The following lemma captures an easy but important observation, concerning the invariant that the algorithm maintains throughout its execution.

\begin{lemma}
\label{lem:invariant}
    An invariant of the algorithm is that at the end of each iteration, there is no envy from an agent in $N\setminus P$ towards an agent in $P$. Furthermore, at any point during the algorithm, if an agent~$j$ is allowed to select an item when agent~$i$ envies her, then it must be that $i \in P$.
    \label{lemma: common_top_n_local_envy}
\end{lemma}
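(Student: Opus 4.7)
The plan is to prove both claims by induction on the iterations of the outer while-loop, treating the first statement as the inductive invariant and deducing the second as an immediate consequence. The base case is immediate since $P = \emptyset$ initially. For the inductive step, I would assume the invariant holds at the start of the iteration and examine the two branches the algorithm may take.

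The harder branch is the one in which some privileged $p$ performs an item swap and, together with every node reachable from $p$ in the (updated) envy graph $G_P$, is removed from $P$; write $R$ for this set. Only the bundle $A_p$ changes, and since the swap requires $v_p(g) < v_p(g^*)$, the value $v_p(A_p)$ strictly increases, so the set of agents that $p$ envies can only shrink. To verify the updated invariant, I would split the would-be enviers in the new $N \setminus P$ into three categories: (i) for $i$ already in the old $N \setminus P$, by the inductive hypothesis $i$ envied no old-$P$ member, and since no bundle other than $A_p$ changed while $p$ left $P$, $i$ still envies no new-$P$ member; (ii) for $p$ itself, any $r$ in the new $P$ with $p$ envying $r$ would be an out-neighbor of $p$ in the updated $G_P$, and thus in $R$, contradicting $r \notin R$; (iii) for any $q \in R \setminus \{p\}$, an envy edge $q \to r$ into some $r \in$ new $P$ would yield a path $p \to \cdots \to q \to r$, again placing $r$ in $R$. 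The main obstacle here is precisely this bookkeeping: arguing that the reachability set from $p$ is exactly the right set to evict, neither too small (or the invariant breaks at $p$ or at intermediate nodes) nor too large.

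In the other branch, an agent $j$ is picked as a source of $G_{N \setminus P}$. If $j$ receives a new item (when $|A_j| < k$) or performs a profitable swap (when $|A_j| = k$), the set $P$ is unchanged and only $A_j$ changes, with $j \in N \setminus P$ throughout; for any fixed $q \in P$, neither $A_q$ nor any bundle in $N \setminus P \setminus \{j\}$ changed, so the invariant is preserved. If instead $j$ becomes privileged (the proposed swap is not profitable), then $j$ moves from $N \setminus P$ to $P$, and preserving the invariant reduces to verifying that no agent in the remaining $N \setminus P$ envies $j$, which is exactly the condition that $j$ is a source of $G_{N \setminus P}$.

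Finally, the second statement follows immediately: whenever an agent $j$ is about to select or swap an item, either $j \in P$ and any envier of $j$ lies in $P$ by the invariant, or $j$ was chosen as a source of $G_{N \setminus P}$, so no $i \in N \setminus P$ envies $j$, again forcing any envier into $P$.
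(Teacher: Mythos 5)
Your proof is correct and follows essentially the same route as the paper's: an induction over iterations where the invariant is preserved because (a) an agent joins $P$ only when she is a source of $G_{N\setminus P}$, and (b) after a privileged agent's swap, she and everything reachable from her in the updated $G_P$ are evicted, which is exactly your reachability case analysis spelled out in more detail than the paper's terse argument. The deduction of the second statement from the invariant is also identical to the paper's.
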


\begin{proof}
    The invariant is easy to see by the construction of the algorithm and by induction. Initially $P=\emptyset$, hence there is nothing to prove. During the algorithm, an agent enters $P$ only if she is unenvied by $N\setminus P$. Furthermore, while in $P$, if she performs an item swap, and this may create envy from $N\setminus P$, she is removed from $P$, along with all other agents that are reachable from her in $P$ (in the updated envy graph $G_P$). This way, we maintain that there is no envy from $N\setminus P$ towards $P$.  
    
    To see the second statement, suppose agent~$i$ envies agent~$j$.
    If $j\in P$, then by the invariant, we must have that $i\in P$. If  $j\not\in P$, then $j$ became eligible to select an item, because she was locally unenvied in $G_{N\setminus P}$. Therefore again, it must hold that $i \in P$.
\end{proof}

Before stating the next lemmas, we introduce some notation. For each agent $i \in [n]$, let $g_i^\ell$ denote her $\ell$-th most preferred item for some $\ell \in [kn]$, i.e., the item that ranks $\ell$-th in her preference ordering over all items. The proof of the first lemma is deferred to Appendix~\ref{app:modified_ece}.

\begin{lemma}
    Let $i \in [n]$, and let $B_i \neq \emptyset$ be a bundle held by agent~$i$ at some point after the first $n$ items have been allocated. Then it holds that $v_i(B_i) \geq v_i(g_i^n)$. Moreover, when all agents agree on the set of top-$n$ items, no two such items ever appear in the same bundle.
    \label{lemma: common_top_n_min_value}
\end{lemma}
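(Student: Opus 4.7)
The plan is to analyze what happens in the first $n$ iterations of Algorithm~\ref{alg: algorithm_general} and then propagate the resulting guarantees throughout the rest of the execution via monotonicity and an invariant argument.

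For the first statement, I would begin by observing that in the first $n$ iterations no agent has yet gathered $k$ items (since $k \geq 2$), so the privileged set $P$ stays empty, no swap branch can fire, and each iteration hands a single item to an unenvied agent in $G_{N \setminus P} = G_N$. Agents with empty bundles are never envied (their bundle value is $0$ for everyone, and values are non-negative), so they are always sources of $G_N$; combined with the tie-breaking rule of \textsc{FindUnenviedAgent} that favors smaller bundles, this forces the algorithm to hand each of the first $n$ items to a distinct agent. When agent $i$ makes her single pick at some iteration $t_i \leq n$, at most $n-1$ items have been removed from $S$, so at least one of $g_i^1, \dots, g_i^n$ is still available, and her favorite choice in $S$ has value at least $v_i(g_i^n)$. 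It then suffices to prove that $v_i(B_i)$ never decreases during the rest of the run: swap operations strictly increase value by their defining condition $v_p(g) < v_p(g^*)$; pickups add a non-negative value; and cycle-resolutions on $G_{N \setminus P}$ rotate bundles along a directed cycle so that each participating agent strictly prefers her new bundle. Combining these facts yields $v_i(B_i) \geq v_i(g_i^n)$ for every non-empty bundle $B_i$ that $i$ ever holds after iteration $n$.

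For the second statement, under the common top-$n$ assumption, whenever $T^n \cap S \neq \emptyset$ the favorite item in $S$ for \emph{any} agent $p$ lies in $T^n$, because every item of $T^n$ strictly dominates any item outside $T^n$ in the ranking of every agent. In the first $n$ iterations, the pool still contains at least $n - (t_i - 1) \geq 1$ elements of $T^n$ when $i$ picks, so each of these $n$ picks lies in $T^n$; since $|T^n| = n$, all of $T^n$ is distributed, exactly one per agent, by the end of iteration $n$. From that point on, the invariant to maintain is: every bundle contains exactly one item of $T^n$, and $T^n \cap S = \emptyset$. Envy-cycle resolutions merely permute bundles, so the invariant is trivially preserved. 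For any subsequent swap in which agent $p$ would replace $g \in A_p$ with $g^* \in S$, if we had $g \in T^n$ then the top-$n$ agreement gives $v_p(g) \geq v_p(g^*)$ (since $g^* \notin T^n$ by the invariant), contradicting the swap condition $v_p(g) < v_p(g^*)$; hence $g \notin T^n$, and $g^* \notin T^n$ already, so the multiset of $T^n$-items in $A_p$ and in $S$ is unchanged. For pickups into not-yet-full bundles, the new item likewise comes from $S \setminus T^n$. Therefore no bundle ever accumulates two items of $T^n$.

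The main obstacle I expect is bookkeeping rather than anything conceptual: one must re-establish the value-monotonicity statement for Algorithm~\ref{alg: algorithm_general} (Lemma~\ref{lemma: agents-improve} is formally stated only for Algorithm~\ref{alg:ece-k}, and the new privileged-agent machinery together with item swaps requires a separate check), and one must carefully verify that both the privileged-agent swap at the top of the main loop and the active-agent swap at the bottom cannot corrupt the $T^n$ invariant. Both checks ultimately reduce to the same observation combining the swap condition with the top-$n$ agreement, so packaging them into a single auxiliary swap-lemma should streamline the final write-up.
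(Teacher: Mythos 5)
Your proposal is correct and follows essentially the same route as the paper: the first $n$ iterations hand each agent a distinct item worth at least $v_i(g_i^n)$, value-monotonicity of gets/swaps/cycle-resolutions preserves the bound, and the top-$n$ items are distributed one per agent and can never be the least-valued item dropped in a swap (modulo the tie-breaking convention the paper relegates to a footnote, so your claim of \emph{strict} domination should be softened to weak domination plus that convention). Your explicit re-verification of Lemma~\ref{lemma: agents-improve} for Algorithm~\ref{alg: algorithm_general} and your phrasing of the second part as the invariant $S \cap T^n = \emptyset$ are just slightly more careful packagings of the paper's argument, not a different approach.
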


\begin{figure}[tbh]
  \centering
  \begin{tikzpicture}[scale=1, >=Stealth,
      every node/.style={circle, fill=black, minimum size=4.5pt, inner sep=0pt},
      every path/.style={->, line width=0.2mm, shorten >=1pt, shorten <=1pt}
  ]

  \begin{scope}[on background layer]
      \fill[customgray, opacity=1.0] (4.2,2.3) ellipse (4.5 and 0.8);
  \end{scope}

  \node (p1) at (0.7,2.3) {};
  \node (p2) at (1.7,2.3) {};
  \node (p3) at (2.7,2.3) {};
  \node (p4) at (4.3,2.3) {};
  \node (p5) at (5.3,2.3) {};
  \node (p6) at (6.9,2.3) {};
  \node (p7) at (7.9,2.3) {};

  \draw (p1) to (p2);
  \draw (p2) to (p3);
  \draw[->, line width=0.2mm, shorten >=3pt, shorten <=3pt] 
    (p3) -- node[midway, fill=customgray, inner sep=1pt] {\small $\cdots$} (p4);
  \draw (p4) to (p5);

  \draw[->, line width=0.2mm, shorten >=3pt, shorten <=3pt]
    (p5) -- node[midway, fill=customgray, inner sep=1pt] {\small $\cdots$} (p6);

  \draw (p6) to (p7);

  \draw[bend left=20] (p2) to (p5);
  \draw[bend right=20] (p1) to (p4);

  \begin{scope}[xshift=-1cm]

    \node (a1) at (1.6, 0.0) {};
    \node[draw=none, fill=none, inner sep=0pt] at (1.5, -0.3) {$p_1$};
    \node (a2) at (2.3, -0.8) {};
    \node (a3) at (3.2, -0.2) {};
    \node (a4) at (3.0, 0.8) {};
    \node (a5) at (4.2, 0.45) {};
    \node (a6) at (4.6, -0.4) {};
    \node (a7) at (3.8, -1.3) {};

    \node[fill=white, inner sep=1pt] (dotsB) at (3.8, -0.2) {\small $\cdots$};

    \draw[->, dotted, bend left=15] (a2) to (a3);
    \draw[->, dotted, bend left=15] (a3) to (a4);
    \draw[bend left=15] (a1) to (a4);
    \draw[->, dotted, bend left=15] (a4) to (a5);
    \draw[->, dotted, bend left=15] (a5) to (a6);
    \draw[->, dotted, bend left=15] (a6) to (a7);
    \draw[->, dotted, bend left=15] (a7) to (a2);

    \node[fill=white, inner sep=1pt] (dotsA) at (5, -0.2) {\small $\cdots$};

    \node (b1) at (5.6, 0.0) {};
    \node[draw=none, fill=none, inner sep=0pt] at (5.5, -0.3) {$p_2$};
    \node (b2) at (6.6, 0.6) {};
    \node (b3) at (7.3, -0.2) {};
    \node (b4) at (7.8, -0.6) {};
    \node (b5) at (6.8, -1.1) {};
    \node (b6) at (8.4, -1.3) {};
    \node (b7) at (8.8, -0.1) {};
    \node (b8) at (8.0, 0.7) {};

    \node[fill=white, inner sep=1pt] (dotsB) at (9.3, -0.2) {\small $\cdots$};

    \draw[bend left=15] (b4) to (b5);
    \draw[bend left=15] (b5) to (b1);
    \draw[bend right=15] (b2) to (b5);
    \draw[bend right=15] (b3) to (b2);
    \draw[bend left=15] (b3) to (b8);

    \draw[bend right=15] (b6) to (b7);
    \draw[bend right=15] (b7) to (b8);
  \end{scope}

  \draw[dashed, bend right=15] (p3) to (a1);
  \draw[dashed, bend right=15] (p6) to (b1);

  \node[draw=none, fill=none, left] at (0.2,3.2) { $G_P$};
  \node[draw=none, fill=none, left] at (0.6,1) { $G_{N \setminus P}$};

  \end{tikzpicture}
  \caption{Illustration of the envy graph during an execution of Algorithm~\ref{alg: algorithm_general}.
    The set of privileged agents \( P \) corresponds to the nodes of the DAG \( G_P \), shown at the top of the figure inside the gray ellipse. The edges of \( G_P \) represent envy among privileged agents and are shown as solid arrows. These edges follow a topological order, since \( G_P \) is acyclic by Lemma~\ref{lemma:DAG}. At the bottom, the agents in \( N \setminus P \) form the graph \( G_{N \setminus P} \), which may contain envy cycles—shown as dotted directed edges. Dashed arrows from \( G_P \) to \( G_{N \setminus P} \) represent envy from privileged agents toward non-privileged ones. These edges can exist (e.g., from nodes in \( G_P \) to \( p_1 \) and \( p_2 \)), but the reverse cannot occur, as established by Lemma~\ref{lem:invariant}. Agent \( p_1 \) is \emph{locally unenvied} in \( G_{N \setminus P} \), since she is only envied by agents in \( P \). In contrast, agent \( p_2 \) is not locally unenvied. As a result, \( p_1 \) may be selected in the next iteration, while \( p_2 \) will not.}
    \label{fig:graph_of_general_approx_efx}
\end{figure}
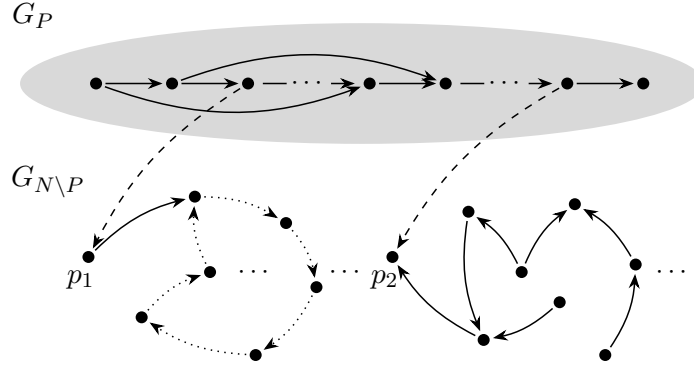

The following lemma guarantees that the algorithm can process the agents in set $P$ following their topological order in the graph $G_P$. 

\begin{lemma}\label{lemma:DAG}
    The envy graph $G_P$ is a directed acyclic graph (DAG) at every step of the algorithm.
\end{lemma}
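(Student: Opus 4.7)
The plan is to prove the statement by induction on the iterations of the main \textbf{while} loop of Algorithm~\ref{alg: algorithm_general}. The key observation is that $G_P$ changes only in two ways during the algorithm: (i) a new agent $j$ is added to $P$ (when $j$ is locally unenvied in $G_{N\setminus P}$, has a full bundle, and cannot improve via a swap); or (ii) an agent $p\in P$ performs a beneficial item swap and is removed from $P$ together with all nodes reachable from $p$ in the current $G_P$. Bundles of the remaining agents never change as a side effect of these operations. Initially $P=\emptyset$, so $G_P$ is trivially acyclic, which gives the base case.

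For the inductive step, I would first handle case (ii), which is the easy direction. When $p$ swaps, only $p$'s bundle is modified, so for any two agents $q,r\in P\setminus\{p\}$ the values $v_q(A_q)$ and $v_q(A_r)$ are unchanged, and hence the envy relations within $P\setminus\{p\}$ are identical to those in the old $G_P$. Thus the new $G_P$ is an induced subgraph of the old $G_P$ (after additionally deleting nodes reachable from $p$), and an induced subgraph of a DAG is a DAG.

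The main step is case (i). I would argue that the newly added agent $j$ has \emph{no outgoing edges} in the updated $G_P$, i.e., she is a sink. This is exactly where Lemma~\ref{lem:invariant} is crucial: at the moment $j$ is selected, $j$ belongs to $N\setminus P$, and the invariant guarantees that no agent of $N\setminus P$ envies any agent of $P$. In particular, $j$ does not envy any $p\in P$, so in the new $G_P$ (over $P\cup\{j\}$) there is no edge $j\to p$ for any $p\in P$. Since adding a sink to a DAG yields a DAG (the sink can be placed last in any topological order), the acyclicity of $G_P$ is preserved.

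The only real obstacle is the appeal to the invariant of Lemma~\ref{lem:invariant}, which must itself be checked to hold at the moment $j$ joins $P$; but this is immediate from the preceding lemma, since no bundles of old $P$-members change during this iteration, and $j$ was selected as locally unenvied in $G_{N\setminus P}$, so no agent of $N\setminus P\setminus\{j\}$ envies $j$ either. Combining the two cases closes the induction and proves that $G_P$ remains a DAG throughout the execution of Algorithm~\ref{alg: algorithm_general}, which in turn justifies processing the privileged agents in topological order in the inner \textbf{for} loop.
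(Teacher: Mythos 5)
Your proof is correct and follows essentially the same route as the paper's: the base case $P=\emptyset$, the observation that a newly added agent is a sink in $G_P$ by the invariant of Lemma~\ref{lem:invariant}, and the fact that node removals preserve acyclicity. Your write-up is merely more explicit about why the edges among persisting members of $P$ are unchanged.
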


\begin{proof}
Initially it is a DAG since it is an empty graph ($P=\emptyset$). Given the invariant from Lemma~\ref{lem:invariant}, that there is no envy from $N\setminus P$ to $P$, then any time we add a node from $N\setminus P$ to $P$, this node cannot participate in a cycle since it has no outgoing edges towards other nodes of $P$. Then, any time we remove a node from $P$ again, this maintains the DAG property.
\end{proof}

Finally, the last lemma shows that Algorithm~\ref{alg: algorithm_general} terminates in polynomial time. The proof is deferred to Appendix~\ref{app:modified_ece}. 

\begin{lemma}
\label{lem:polynomial}
    Algorithm~\ref{alg: algorithm_general} always terminates after a polynomial number of iterations.
\end{lemma}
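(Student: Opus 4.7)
The plan is to classify every outer-loop iteration into one of three types---a \emph{swap} by a privileged agent in $P$, a \emph{receive} event in which a non-privileged unenvied agent with a non-full bundle takes an item from the pool, or a \emph{pass/swap} event in which a non-privileged unenvied agent with a full bundle either swaps an item or is added to $P$---and to bound the total number of each type by a polynomial in $n$ and $k$.

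To control the number of swaps, I would introduce a rank-sum potential for every agent $i$, namely $\Phi_i = \sum_{g \in A_i} r_i(g)$, where $r_i(g) \in \{1,\ldots,kn\}$ denotes the position of $g$ in agent $i$'s preference order. Each swap by $i$ replaces some item with one of strictly lower rank, so $\Phi_i$ decreases by at least $1$. The only operations that can \emph{increase} $\Phi_i$ are the at most $k$ receive events for $i$, each contributing at most $kn$. Hence the total cumulative decrease of $\Phi_i$, and therefore the number of swaps by $i$, is at most $k^2n$; summing over agents gives $O(k^2n^2)$ swaps in total. Receive events are bounded directly by $kn$, since each one strictly increases $\sum_i |A_i| \leq kn$.

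The delicate part will be bounding the number of pass events, since an agent can leave $P$ without having performed a swap herself: whenever some $p \in P$ swaps, every node reachable from $p$ in $G_P$ is evicted along with $p$. A single swap nevertheless evicts at most $n$ agents in total (including $p$), so the number of departures from $P$ is bounded by $n \cdot (\text{total swaps}) = O(k^2n^3)$. The number of joins to $P$ is at most the number of departures plus $n$ (accounting for agents possibly still in $P$ at termination), yielding $O(k^2n^3)$ pass events. Combining everything, the outer loop runs for $O(k^2n^3)$ iterations.

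Finally, each iteration performs only polynomial-time work: constructing $G_P$ and $G_{N\setminus P}$ takes $O(n^2)$ time; by Lemma~\ref{lemma:DAG}, $G_P$ is acyclic and admits a topological ordering computable in $O(n^2)$; finding a locally unenvied agent in $G_{N\setminus P}$ via standard envy-cycle elimination is polynomial, following the same argument used for Algorithm~\ref{alg:ece-k}; and checking the most profitable swap for a given agent takes $O(kn)$. Multiplying the per-iteration cost by the iteration bound proves that Algorithm~\ref{alg: algorithm_general} terminates after a polynomial number of iterations, with the main non-trivial step being the factor-$n$ charging that ties pass events to swaps.
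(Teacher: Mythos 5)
Your decomposition of the iterations into \emph{get}, \emph{swap} and \emph{pass} events, the bound of $kn$ on gets, and the charging of pass events to swaps (each join to $P$ must be preceded by a departure, departures are triggered only by swaps, and each swap evicts at most $n$ agents from $P$) all mirror the structure of the paper's own argument and are sound. Where you differ is in how the number of swaps is bounded: the paper argues informally that an agent can consider exchanging each item at most once ``each time she holds a specific bundle,'' whereas you introduce the rank-sum potential $\Phi_i=\sum_{g\in A_i}r_i(g)$, which is a cleaner and more quantitative way to make that count.

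However, the potential argument as stated has a genuine gap. You claim that the only operations that can increase $\Phi_i$ are the at most $k$ receive events. This ignores the envy-cycle resolutions that the \textsc{FindUnenviedAgent} subroutine must perform on $G_{N\setminus P}$ whenever that graph has no source: there, entire bundles change ownership (the proof of Lemma~\ref{lemma: common_top_n_min_value} relies on exactly this, noting that ``bundles may change ownership, but their internal contents remain intact''). In a cycle resolution agent $i$ receives a bundle she values strictly more, but a higher-valued bundle need not have a smaller rank-sum: trading $\{g_i^{9},g_i^{10}\}$ with values $(2,1)$ for $\{g_i^{1},g_i^{20}\}$ with values $(100,0)$ strictly increases $v_i$ while increasing $\Phi_i$ from $19$ to $21$. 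Hence $\Phi_i$ can jump up by as much as $k\cdot kn$ at each cycle resolution, and your bound of $k^2n$ swaps per agent does not follow. The gap is repairable: bound the number of cycle resolutions separately (each resolution strictly decreases the number of edges of the envy graph, while each get or swap adds at most $n$ new edges, so the total number of resolutions is polynomial) and add the resulting increases of $\Phi_i$ to the budget; with that extra step your potential argument closes and still yields a polynomial iteration count.
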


\noindent We are now ready to prove our main theorem.

\begin{proof}[Proof of Theorem~\ref{thm: statement_of_common_top_n}]

Let $A_i$ and $A_j$ denote the final bundles assigned to agents $i$ and $j$, respectively, at the end of the algorithm. Note that $|A_i| = |A_j| = k$. 
For $k=1$, the proof is trivial. For $k\geq 2$, we fix a pair of agents $i, j$ and we will examine the envy of agent $i$ towards $j$. We will show that either (a) $i$ is $1/(\rho+1)$-EF and \efOne w.r.t. $j$ or (b) $i$ is $1/3$-EF w.r.t. $j$. 

Let $(g_r)_{r \in [k]}$ be the items of the finalized bundle $A_j$ ordered according to the time they were allocated to $A_j$, i.e., $g_r$ became part of $A_j$ before $g_{r+1}$ for all $r \in [k-1]$. We focus on the item $g_r \in A_j$ with the maximum index $r$ such that at the time that $g_r$ was added to $A_j$, agent $i$ did not envy the owner of $A_j$ (before the addition of $g_r$).
This means that for the goods $g_1,\dots, g_{r-1}$, agent $i$ may or may not envy the owner of $A_j$ at the time of their addition. But by the definition of $r$, $i$ envies the bundle $A_j$ right before the addition of $g_{r+1},\dots, g_k$.
We partition $A_j\setminus\{g_r\}$ into two sets: $X_j=\{g_1,...,g_{r-1}\}$, and $X'_j=\{g_{r+1},...,g_k\}$. Let $B_i$ denote the bundle held by agent $i$ at the time when $g_r$ was allocated.

Assume that $v_i(A_i) < v_i(A_j)$; otherwise, the claim holds trivially. At this point, we distinguish between two cases, based on whether $B_i$ is empty or not.\\

\noindent    \textbf{Case 1: }$B_i = \emptyset$. Then, by the algorithm’s tie-breaking rule (which favors agents with fewer items), agent $j$ must have received $g_r$ as her first item. Thus, $r=1$, and so $X_j = \emptyset$, $X_j' = \{g_2,\ldots, g_k\}$, and $A_j = \{g_r\} \cup X_j'$.
Next, due to Lemma~\ref{lemma: common_top_n_min_value}, and under the assumption that $v_i(g_i^n)/v_i(g_i^1) \geq 1/\rho$ we have that
\begin{equation}
   v_i(A_i) \geq v_i(g_i^n) \geq \frac{1}{\rho}   v_i(g_1).
   \label{alg: com-top-n: case1_ineq_1_true}
\end{equation}

Next, consider the allocation of items $g_{r+1}, \ldots, g_{k}$ (i.e., the items in $X_j'$). Observe that agent $i$ envies $A_j$ during the allocation of these items. Recall that $g_r$ was the last item allocated to $A_j$ before $i$ began to envy $A_j$. Thus, at the time each item $g_{r+\ell}$ is allocated, agent $i$ belongs\footnote{This does not imply that $i$ remains in $P$ continuously throughout the allocations of $g_{r+1}$ and $g_k$, but rather that $i$ is in $P$ precisely at the moment each of these items is allocated.} to set $P$, due to Lemma~\ref{lemma: common_top_n_local_envy}. Since $i \in P$, we have two cases:

\begin{itemize}
    \item[(1)] If the owner of $A_j$ belonged to $P$ as well, then (between the two of them) $i$ should have had higher priority to receive the item $g_{r+\ell}$, since $i$ appears earlier in the topological order of $G_P$.
    \item[(2)] If the owner of $A_j$ did not belong to $P$, then $i$ still had priority, as all agents in $P$ have priority over those not in $P$.
\end{itemize}

In both cases, $i$ had the opportunity to receive item $g_{r+\ell}$ but declined it. This must be because the least valuable item in $i$'s current bundle was worth more than $v_i(g_{r+\ell})$. Consequently, we have that in the end, 
$$
\frac{v_i(A_i)}{k} \geq v_i(g_{r+\ell}) \quad \text{for all } \ell \in \{1, \ldots, k - r\}.
$$
 Summing over all such $\ell$, we obtain

\begin{align}
    \frac{k-1}{k} v_i(A_i) \geq \frac{|X_j'|}{k}v_i(A_i) = \sum_{\ell=1}^{k-r} v_i(A_i)/k \geq  \sum_{\ell=1}^{k-r}v_i(g_{r+\ell}) = v_i(X_j').
    \label{alg: com-top-n: case1_ineq_2_true}
\end{align}

\noindent Combining inequalities (\ref{alg: com-top-n: case1_ineq_1_true}) and (\ref{alg: com-top-n: case1_ineq_2_true}), we conclude that

\begin{equation}
    v_i(A_i)\geq \frac{k}{(1+\rho)k  - 1} v_i(A_j).
    \label{alg: com-top-n: case1_ineq_3_true}
\end{equation}

\noindent Hence, in this case agent $i$ is $1/(\rho+1)$-EF w.r.t.$j$ (in fact, inequality (\ref{alg: com-top-n: case1_ineq_3_true}) provides a slightly stronger guarantee).

Moreover, in this case $i$ is also \efOne w.r.t. $j$. To see this, let $g$ be $i$'s least valuable item in $A_i$. Then, by an averaging argument, we have that 

\begin{align}
    v_i(g) \leq \frac{v_i(A_i)}{k}.
    \label{ineq: averaging_argument}
\end{align}

\noindent From inequality (\ref{alg: com-top-n: case1_ineq_2_true}) it follows that 
\begin{align*}
    \frac{k-1}{k}v_i(A_i) &\geq v_i(X_j')\\
    v_i(g_1) +  \frac{k-1}{k}v_i(A_i) &\geq v_i(A_j)\\
    v_i(g_1) - v_i(g)+ \frac{k-1}{k}v_i(A_i) &\geq v_i(A_j) - v_i(g_1)\\
    v_i(A_i \cup g_1 \setminus g) &\geq v_i(A_j \setminus g_1 \cup g), 
\end{align*}
where the second-to-last inequality holds because $v_i(g) < v_i(g_1)$ and the last inequality holds due to the averaging argument in (\ref{ineq: averaging_argument}). Therefore, the definition of \efOne is satisfied.\\

\noindent    \textbf{Case 2: }$B_i \neq \emptyset.$  At the time $g_r$ was allocated, agent $i$ did not envy the owner of $g_r$, so 
\begin{equation}
    v_i(B_i) \geq v_i(X_j).
    \label{alg: com-top-n: case2_ineq_1_true}
\end{equation}

\noindent    We now show that $v_i(B_i) \geq v_i(g_r)$. Suppose, for the sake of contradiction, that 
\begin{equation}
    v_i(B_i) < v_i(g_r)
    \label{alg: com-top-n: case2_ineq_2_false}
\end{equation}
    
Then, the owner of $B_i$ up until that point did not have access to $g_r$, i.e., $g_r$ was not in $S$ nor was any other item that is at least as valuable as $g_r$ when forming $B_i$, or else, by Lemma~\ref{lemma: agents-improve} she would have acquired an item of at least that value. This implies that agent $j$ received $g_r$ due to a swap from another agent $z$, who had held it earlier. Let $B_{z}$ be the bundle held by $z$ immediately prior to this swap, i.e., $g_r \in B_z$. Observe that, due to the algorithm, agent $z$ was allowed to swap $g_r$ because one of the following held:
\begin{enumerate}
    \item[(1)] Agent $z$ was unenvied by $i$ and hence $v_i(B_i) \geq v_i(B_{z}) \geq v_i(g_r)$,  contradicting inequality (\ref{alg: com-top-n: case2_ineq_2_false}).

    \item[(2)] Agent $z$ was envied by $i$ but locally unenvied. Hence, by Lemma~\ref{lemma: common_top_n_local_envy},  $i\in P$. Moreover, since $B_i \neq \emptyset$, Lemma~\ref{lemma: common_top_n_min_value} implies that $v_i(B_i) \geq v_i(g_i^n)$. Therefore, due to inequality (\ref{alg: com-top-n: case2_ineq_2_false}), it follows that $g_r \in \{g_i^1, \ldots, g_i^{n-1}\}$. Given the assumption that all agents agree on the set of top-$n$ items, $g_r \in \{g_z^1, \ldots, g_z^n\}$, and so, by Lemma~\ref{lemma: common_top_n_min_value}, $g_r$ is the top item of $B_z$. But this contradicts the swap rule: top items are never dropped in swaps—the dropped item is always the $k$-th (least preferred) item in one's bundle.
\end{enumerate}
Thus, in both subcases, we derive a contradiction, and conclude that $v_i(B_i) \geq v_i(g_r)$.
By Lemma~\ref{lemma: agents-improve}, it follows that

\begin{equation}
    v_i(A_i) \geq v_i(g_r).
    \label{alg: com-top-n: case2_ineq_2_true}
\end{equation}

Next, consider the allocation of items $g_{r+1}, \ldots, g_{k}$ (i.e., the items in $X_j'$). Using an argument similar to that in Case 1 for the same set of items (see inequality (\ref{alg: com-top-n: case1_ineq_2_true})), we obtain

\begin{align}
    v_i(A_i) \geq v_i(X_j').
    \label{alg: com-top-n: case2_ineq_3_true}
\end{align}

\noindent Combining inequalities (\ref{alg: com-top-n: case2_ineq_1_true}), (\ref{alg: com-top-n: case2_ineq_2_true}) and (\ref{alg: com-top-n: case2_ineq_3_true}), we conclude that $v_i(A_i)\geq \frac{1}{3}v_i(A_j)$, implying that agent $i$ is $1/3$-EF w.r.t. $j$.

To complete the analysis, it remains to establish that Algorithm~\ref{alg: algorithm_general} terminates in a polynomial number of steps, which follows from Lemma~\ref{lem:polynomial}.
\end{proof}

\subsection{Bounded ratio within the top $n$ items}
In this final part of Section~\ref{sec:ece}, we drop the earlier assumptions on agreements between agents on the ranking of the items. Instead, let $T_i^n$ denote the set of the top-$n$ items according to agent $i$'s valuation, for each $i \in [n]$. We focus on instances that are $\rho$-bounded with respect to these individualized sets $T_i^n$.

\begin{definition}[$\rho$-bounded instances w.r.t. $T_i^n$ for $i\in \text{[}n\text{]} $]
    Given a set of valuations $\{v_i(\cdot)\}_{i \in [n]}$ and $kn$ items, then $\rho= \max_{i \in [n]}  {v_i(g^1_i)}/{v_i(g_i^n)}$. 
\end{definition}

By exploiting the proof of Theorem~\ref{thm: statement_of_common_top_n}, we can have the following corollary showing that Algorithm~\ref{alg: algorithm_general} guarantees a $1/(\rho+2)$-EF allocation.

\begin{corollary}
    For $k\geq 2$, an allocation returned by Algorithm~\ref{alg: algorithm_general} is $1/(\rho+2)$-EF for $\rho$-bounded instances w.r.t. $T_i^n$ for $i\in[n]$. For $k=1$, the algorithm returns an \efx allocation.
\end{corollary}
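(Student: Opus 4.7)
The plan is to adapt the proof of Theorem~\ref{thm: statement_of_common_top_n} to the relaxed setting where agents do not share a common top-$n$ set, but each of them has a $\rho$-bounded ratio within her own $T_i^n$. The case $k=1$ is immediate: with singleton bundles the only rational flip between $A_i = \{a\}$ and $A_j = \{b\}$ (when $v_i(b) > v_i(a)$) is $(a,b)$ itself, and performing it trivially eliminates the envy, so the allocation is \efx regardless of the algorithm. For $k\geq 2$, I would revisit the two cases in the proof of Theorem~\ref{thm: statement_of_common_top_n} and identify exactly which steps relied on agreement on the top-$n$ set versus which used only $\rho$-boundedness.

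In Case~1 of the original proof ($B_i = \emptyset$), the appeal to agreement appears in the chain $v_i(A_i) \geq v_i(g_i^n) \geq v_i(g_1)/\rho$. This inequality remains valid under $\rho$-boundedness alone: $v_i(g_1) \leq v_i(g_i^1)$ because $g_i^1$ is agent $i$'s favorite item, and by the new definition of $\rho$ we have $v_i(g_i^1) \leq \rho\, v_i(g_i^n)$, so altogether $v_i(g_1) \leq \rho\, v_i(g_i^n)$. The first statement of Lemma~\ref{lemma: common_top_n_min_value} also does not rely on the agreement, being a direct consequence of Lemma~\ref{lemma: agents-improve}. The subsequent averaging step that yields $\frac{k-1}{k} v_i(A_i) \geq v_i(X_j')$ is an argument about the priority rule of the algorithm and is independent of any agreement. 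Thus Case~1 carries over unchanged and delivers the same $1/(\rho+1)$-EF bound.

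In Case~2 ($B_i \neq \emptyset$), the critical step is where the original proof derives the strong conclusion $v_i(A_i) \geq v_i(g_r)$ through a contradiction whose subcase (2) explicitly invokes the agreement: knowing $v_i(B_i) \geq v_i(g_i^n)$ forces $g_r \in \{g_i^1, \ldots, g_i^{n-1}\}$, and the agreement transfers this to $g_r \in T_z^n$ for the donor agent $z$, contradicting the swap rule. Without the agreement this route is closed, so I would replace it with the weaker but direct bound $v_i(A_i) \geq v_i(g_r)/\rho$, which follows from $v_i(B_i) \geq v_i(g_i^n) \geq v_i(g_i^1)/\rho \geq v_i(g_r)/\rho$ (the last inequality using $v_i(g_i^1) \geq v_i(g_r)$, since $g_i^1$ is $i$'s top item) and Lemma~\ref{lemma: agents-improve}. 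The other two inequalities, $v_i(A_i) \geq v_i(X_j)$ (from $i$ not envying the owner of $A_j$ at the time $g_r$ was allocated) and $v_i(A_i) \geq v_i(X_j')$ (by the same priority argument used in Case~1), continue to hold without the agreement. Summing the three bounds gives $v_i(A_j) = v_i(X_j) + v_i(g_r) + v_i(X_j') \leq (\rho + 2)\, v_i(A_i)$, i.e., $1/(\rho+2)$-EF.

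The main obstacle is precisely this Case~2 adaptation: one must accept the extra factor of $\rho$ when bounding $v_i(A_i)$ against $v_i(g_r)$ and carefully verify that no other hidden dependence on the common top-$n$ assumption remains anywhere in the proof (in particular, in the two ancillary inequalities on $X_j$ and $X_j'$). Combining the two cases yields $\min\{1/(\rho+1),\, 1/(\rho+2)\} = 1/(\rho+2)$-EF for all $k\geq 2$, matching the corollary. Since the algorithm itself is unchanged, polynomial running time is inherited from Lemma~\ref{lem:polynomial}.
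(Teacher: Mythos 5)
Your proposal is correct and follows essentially the same route as the paper: Case~1 is reused verbatim for the $1/(\rho+1)$ bound, and in Case~2 the single agreement-dependent step (the contradiction forcing $v_i(A_i)\geq v_i(g_r)$) is replaced by the weaker chain $v_i(A_i)\geq v_i(g_i^n)\geq v_i(g_i^1)/\rho\geq v_i(g_r)/\rho$, which together with the unchanged bounds on $X_j$ and $X_j'$ gives $v_i(A_j)\leq(\rho+2)\,v_i(A_i)$. Your additional checks that the first part of Lemma~\ref{lemma: common_top_n_min_value} and the priority argument for $X_j'$ do not depend on the common top-$n$ assumption are exactly the verifications the paper's proof implicitly relies on.
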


\begin{proof}
    The proof proceeds similarly to the proof of Theorem~\ref{thm: statement_of_common_top_n}. The analysis for Case 1 remains unchanged and yields a $1/(\rho+1)$-EF guarantee.

    For Case 2, the only difference lies in inequality~(\ref{alg: com-top-n: case2_ineq_2_true}). Since we have dropped the assumption that agents agree on the set of top-$n$ items, we can no longer ensure that agent $z$ retains item $g_r$; that is, although $g_r \in \{g_i^1,\ldots, g_i^{n-1}\}$, it is not necessarily true that $g_r \in \{g_z^1,\ldots, g_z^{n}\}$.

    However, by Lemma~\ref{lemma: common_top_n_min_value}, we obtain the following bound:
    \begin{align}\label{alg: com-top-n: rho-bounded_ineq2_true}
        v_i(A_i) \geq v_i(g_i^n) \geq \frac{1}{\rho} v_i(g_i^1) \geq \frac{1}{\rho}v_i(g_r).
    \end{align}

    \noindent Combining inequalities~(\ref{alg: com-top-n: case2_ineq_1_true}),~(\ref{alg: com-top-n: rho-bounded_ineq2_true}), and~(\ref{alg: com-top-n: case2_ineq_3_true}), we conclude that in Case 2,
    \begin{equation}
        v_i(A_i) \geq \frac{1}{\rho+2} v_i(A_j).
    \end{equation}

    Therefore, the allocation returned is $1/(\rho+2)$-EF.
\end{proof}
\section{Approximate fairness and Pareto optimality}
\label{sec:PO}

In this section, we focus on the existence of $\gamma$-\efOne\ allocations that are also efficient. Our main result establishes that it is always possible to guarantee allocations that are both $1/2$-\efOne\ and Pareto optimal, by leveraging the well-known \emph{Maximum Nash Welfare} (MNW) solution, adapted to accommodate our cardinality constraints.  Pareto optimality (PO) is a fundamental efficiency criterion, ensuring that no reallocation can make all agents weakly better off and at least one agent strictly better off. Formally, an allocation $\aaa$ is Pareto optimal if there is no allocation $\mathcal{B}$ such that for all $i \in [n]$, $v_i(B_i) \geq v_i(A_i)$, and for some agent $j \in [n]$, $v_j(B_j) > v_j(A_j)$.

Among the Pareto optimal allocation rules, one stands out significantly in fair division allocations. The maximum Nash welfare solution already guarantees EF1 and PO solutions in the original unconstrained model~\cite{caragiannis2019unreasonable}. Under matroid constraints it can provide a $1/2$-EF1 and PO allocation~\cite{cookson2025constrained,wang2024fairness}, while under budget constraints it provides a $1/4$-EF1 and PO guarantee~\cite{WuLG25}.

The {maximum Nash welfare} rule selects an allocation of $k$-sized bundles that maximizes the product of values: $\aaa \in \arg\max \prod_{i=1}^n v_i(A_i)$. In the fringe cases\footnote{This is possible e.g., in the extreme case where $v_i(g)>0$ only of a single item $g \in [kn]$.} where every allocation yields a Nash welfare of $0$, the rule finds a maximal set $S \subset [n]$ of agents that can get a positive value with $k$-sized bundles, and subject to that, it maximizes the Nash welfare of the agents in $S$. For the sake of completeness, we formally defined the $k$-bundle version of the MNW using the similar definition of~\cite{cookson2025constrained} for generally constraint bundles. 

\begin{definition}[$k$-bundle maximum Nash welfare]
For an allocation $\aaa$ of $k$-sized bundles let $P(\aaa=\{i \in N: v_i(A_i)>0\}$.
An allocation $\mathcal{A}$ is a $k$-bundled maximum Nash welfare allocation if (1) it maximizes the number of agents receiving positive value $P(\aaa)$, and subject to that, (2) maximizes the product $\prod_{i \in P(\aaa)} v_i(A_i)$. 
\end{definition}

The main result of this section shows that MNW allocation rule always return allocations that are $1/2$-\efOne. Combined with the fact that MNW is PO, this establishes the existence of allocations that are both $1/2$-\efOne and PO. This is the first positive result for the \efOne and PO regime for general additive valuations. The proof follows an idea by~\cite{cookson2025constrained} used to prove the existence of $1/2$-EF1 and PO allocations for base-orderable matroid constraints.

\begin{theorem}\label{thm:PO:HalfEFF1}
   Any $k$-bundle MNW-optimal allocation is $1/2$-\efOne.
\end{theorem}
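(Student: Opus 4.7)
My proof plan is to show, for each pair of agents $(i,j)$ with envy in an MNW-optimal allocation $\aaa$, that a carefully chosen rational flip certifies the $1/2$-\efOne condition. The argument combines Pareto optimality (automatic from MNW) with the MNW product inequality applied to that flip, plus a bijection-based exchange argument.

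Set-up and PO. Suppose $v_i(A_i) < v_i(A_j)$ (otherwise the condition holds vacuously). Since any MNW allocation is Pareto optimal, every rational flip $(a,b) \in A_i \times A_j$ (with $v_i(b) > v_i(a)$) must also satisfy $v_j(b) > v_j(a)$; otherwise swapping $a$ and $b$ would strictly help $i$ without harming $j$, contradicting PO.

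Choice of flip and MNW. Pick the rational flip $(a^*, b^*)$ minimizing the exchange ratio $(v_j(b)-v_j(a))/(v_i(b)-v_i(a))$; write $\Delta^*_i = v_i(b^*)-v_i(a^*) > 0$ and $\Delta^*_j = v_j(b^*)-v_j(a^*) > 0$. Since the swap $(a^*, b^*)$ cannot increase the Nash welfare,
\[
(v_i(A_i)+\Delta^*_i)(v_j(A_j)-\Delta^*_j) \leq v_i(A_i)\, v_j(A_j),
\]
which rearranges to $\Delta^*_i\, v_j(A_j) \leq \Delta^*_j (v_i(A_i) + \Delta^*_i)$.

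The crux (exchange step). Fix any bijection $\pi : A_i \to A_j$ and let $R = \{a \in A_i : v_i(\pi(a)) > v_i(a)\}$. By the minimality of $(a^*,b^*)$, every $a \in R$ satisfies $\Delta^*_i(v_j(\pi(a))-v_j(a)) \geq \Delta^*_j(v_i(\pi(a))-v_i(a))$. Summing over $R$: on the left, since the $\pi(a)$'s for $a \in R$ are distinct elements of $A_j$ and $v_j(a) \geq 0$, we get $\sum_{a \in R}(v_j(\pi(a))-v_j(a)) \leq v_j(A_j)$; on the right, non-rational pairs contribute non-positively to the telescoping sum $\sum_{a \in A_i}(v_i(\pi(a))-v_i(a)) = v_i(A_j)-v_i(A_i)$, so $\sum_{a \in R}(v_i(\pi(a))-v_i(a)) \geq v_i(A_j)-v_i(A_i)$. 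Hence
\[
\Delta^*_i\, v_j(A_j) \geq \Delta^*_j (v_i(A_j) - v_i(A_i)).
\]

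Conclusion. Combining with the MNW inequality and cancelling $\Delta^*_j > 0$ yields $v_i(A_i) + \Delta^*_i \geq v_i(A_j) - v_i(A_i)$, i.e., $v_i(A_j) - \Delta^*_i \leq 2 v_i(A_i)$. Setting $B^*_i = A_i \cup b^* \setminus a^*$ and $B^*_j = A_j \cup a^* \setminus b^*$, this says $v_i(B^*_j) \leq 2 v_i(A_i) \leq 2 v_i(B^*_i)$, establishing $1/2$-\efOne via the flip $(a^*,b^*)$.

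The main obstacle I expect is the exchange step: the per-flip MNW inequality couples $v_i$ and $v_j$ through the ratio $\Delta^*_j/\Delta^*_i$, and extracting a clean global bound on $v_i(A_j) - v_i(A_i)$ requires the bijection summation together with the PO-driven sign control on the $v_j$-side. A secondary technical issue is the fringe case where $v_j(A_j) - \Delta^*_j$ would drop to zero (so $j$ exits the MNW support set after the swap); this must be handled separately by invoking the support-maximization part of the $k$-bundle MNW definition or by choosing $(a^*,b^*)$ to avoid this degeneracy.
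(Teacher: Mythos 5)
Your proof is correct and follows essentially the same route as the paper's: both select the rational flip minimizing the exchange ratio $(v_j(b)-v_j(a))/(v_i(b)-v_i(a))$, use a matching between $A_i$ and $A_j$ to lower-bound the aggregate $v_i$-gain by $v_i(A_j)-v_i(A_i)$ and upper-bound the aggregate $v_j$-gain by $v_j(A_j)$, and then invoke the local Nash-welfare swap inequality for that flip. Your direct (non-contradiction) phrasing with an arbitrary bijection, and the explicit Pareto-optimality observation that $\Delta_j^*>0$, are only cosmetic departures from the paper's greedy partial matching $R^*$; the fringe case you flag resolves exactly as you suggest, via the support-maximization clause of the $k$-bundle MNW definition (and in fact your displayed swap inequality holds trivially or vacuously in each degenerate subcase).
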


\begin{proof}
Let $\aaa=(A_1,...,A_n)$ be a $k$-bundled MNW allocation. This is that $\aaa$ yields the maximum positive Nash welfare if this is possible, or it maximizes the Nash welfare of all non-zero valued agents, subject on minimizing the number of zero valued agents.

Assume for the sake of contradiction that there exists a pair of agents $i,j$ such that (1) $i$ envies $j$, i.e., $v_i(A_i) < v_i(A_j)$, and (2) for all pairs $g^* \in A_j$, $g \in A_i$ such that $v_i(g^*) > v_i(g)$: 
\begin{align}\label{eq:EFF1_NASH_1_contradiction}
    v_i(A_i) + v_i(g^*) -v_i(g) < \frac{1}{2}\left(v_i(A_j) +v_i(g) - v_i(g^*) \right)
\end{align}

We will focus now on the rational flips between the two bundles $A_i$ and $A_j$, from the perspective of the envious agent $i$. Let $R$ denote the set of all rational flips, i.e., $R=\{(g,g^*): g \in A_i, g^* \in A_j, v_i(g^*) > v_i(g)\}$.

We will construct three  sets $A_i^*$, $A_j^*$ and $R^*$ iteratively, using the following procedure: At first all sets are empty, i.e., $A_i^*=A_j^*=R^*=\emptyset$. Then, iteratively we select $g^* \in \arg\max_{g \in A_j\setminus A_j^*} v_i(g)$ and we find an item $g \in A_i\setminus A_i^*$ such that $v_i(g^*)>v_i(g)$. If there are multiple such items we select the maximum w.r.t to $v_i(\cdot)$. Then we allocate: $g \in A_i^*$, $g^* \in A_j^*$ and $(g,g^*) \in R^*$.
The procedure ends when for some $g^* \in \arg\max_{g \in A_j\setminus A_j^*} v_i(g)$ we cannot find any $g \in A_i\setminus A_i^*$ such that $v_i(g^*) > v_i(g)$.

Observe that the above process implies that $|A_i^*|=|A_j^*|$. Also, note that for any items $g \in A_i\setminus A_i^*$ and $g^* \in A_j\setminus A_j^*$ it must be that $v_i(g) \geq v_i(g^*)$, due to the stopping conditions in the above process.  
Hence,

\begin{align}
    v_i(A_i\setminus A^*_i) - v_i(A_j\setminus A_j^*) = \sum_{g \in A_i\setminus A_i^*} v_i(g) - \sum_{g^* \in A_j\setminus A_j^*} v_j(g^*) \geq 0.\nonumber
\end{align}

This implies that
\begin{align}\label{eq:EFF1_NASH_1}
    v_i(A_j) - v_i(A_i) = v_i(A^*_j) - v_i(A^*_i) + v_i(A_j\setminus A_j^*) - v_i(A_i \setminus A_i^*) \leq  v_i(A^*_j) - v_i(A^*_i).
\end{align}

Hence,

\begin{align}\label{eq:EFF1-NASH2}
    \sum_{(g,g^*) \in R^*} v_i(g^*) -v_i(g) &= v_i(A^*_j) - v_i(A^*_i) \nonumber \\
    & \geq v_i(A_j) - v_i(A_i) \nonumber \\
    & \geq v_i(A_j) + 3\cdot\max_{(g,g^*) \in R}\{ v_i(g^*) - v_i(g) \}.
\end{align}

The top inequality is due to~(\ref{eq:EFF1_NASH_1}). The last inequality is due to (\ref{eq:EFF1_NASH_1_contradiction}), i.e., the assumption that the allocation is not $1/2$-\efOne.
Observe also that:

\begin{align}\label{eq:EFF1-NASH3}
    \sum_{g,g^* \in R^*} v_j(g^*) - v_i(g) = v_i(A^*_j) - v_i(A^*_i) \leq v_i(A^*_j) \leq v_i(A_j).
\end{align}

Let $(g,g*) = \arg\min_{(g,g^*) \in R} \frac{v_j(g^*)-v_j(g)}{v_i(g^*)-v_j(g)}$. Then 

\begin{align}\label{eq:EFF1-NASH4}
    \frac{v_j(A_j)}{v_i(A_i) + v_i(g^*) - v_i(g)} 
    &\geq \frac{v_j(A_j)}{v_i(A_i) + 3\cdot v_i(g^*) - 3\cdot v_i(g)} \nonumber \\
    &> \frac{\sum_{(g',g'') \in R^*} v_i(g'') - v_i(g')}{\sum_{(g',g'') \in R^*} v_i(g'') - v_i(g')} \nonumber \\
    & \geq \frac{v_j(g^*)-v_j(g)}{v_i(g^*)-v_i(g)}.
\end{align}

The first inequality is due to $\frac{x}{y+a} > \frac{x}{y+b}$ for any $x\geq0, y\geq 0, b>a$ and $a\geq 0$, and that $v_i(g^*)-v_i(g)>0$. The second inequality is due to (\ref{eq:EFF1-NASH3}) for the nominator and (\ref{eq:EFF1-NASH2}) for the denominator.
Observe now that we can re-write (\ref{eq:EFF1-NASH4}) as:

\begin{align}
    \frac{v_j(A_j)}{v_i(A_i) + v_i(g^*) - v_i(g)} > \frac{v_j(A_j) -v_j(A_j) + v_j(g^*)-v_j(g)}{v_i(A_i) -v_i(A_i) + v_i(g^*)-v_i(g)} = \frac{v_j(A_j) - v_j(A_j \cup g \setminus g^*)}{v_i(A_i \cup g^* \setminus g) -v_i(A_i)}.
\end{align}

By rearrangement, we get

\begin{align}
        v_j(A_j)\left(v_i(A_i \cup g^* \setminus g) -v_i(A_i)\right) &> \left(v_j(A_j) - v_j(A_j \cup g \setminus g^*)\right)v_i(A_i \cup g^* \setminus g) \nonumber \\
        & \iff \nonumber \\
        v_j(A_j)\cdot v_i(A_i)&<v_i(A_i \cup g^* \setminus g)\cdot v_j(A_j \cup g \setminus g^*). \label{eq:MNWcontradicted}
    \end{align}

Consider now the alternative allocation $\aaa'_{-\{i,j\}}=\aaa$ and $A'_i=(A_i \cup g^* \setminus g)$, $A'_j=(A_j \cup g \setminus g^*)$; if $\prod_{i=1}^n v_i(A_i) > 0$ then clearly $\aaa'$ has a strictly higher Nash welfare, contradicting the maximality of $\aaa$. Therefore, we focus on the case $\prod_{i=1}^n v_i(A_i) = 0$.  If any of $v_i(A_i)=0$ or $v_j(A_j)=0$ holds, then inequality~(\ref{eq:MNWcontradicted}) implies that both $v_i(A_i \cup g^* \setminus g) > 0$ and $v_j(A_j \cup g \setminus g^*)>0$, and as such, $\aaa'$ has strictly more non-zero valued agents, again contradicting the maximality of $A$. Finally if $\prod_{i=1}^n v_i(A_i) = 0$ and both $v_j(A_j)>0$, $v_i(A_i)>0$ then $\aaa'$ yields a strictly higher Nash welfare for the non-zero valued agents, contradicting the maximality of $\aaa$.
\end{proof}

It is well known that the maximum Nash welfare always returns PO allocations (see e.g., \cite{caragiannis2019unreasonable}). The following theorem is given for the sake of completeness and proves this for the case of $k$-sized bundles.

\begin{theorem}\label{thm:PO:MNWisPO}
Any $k$-bundle MNW-optimal allocation is PO.
\end{theorem}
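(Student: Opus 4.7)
The plan is to proceed by contradiction, mirroring the classical argument that the (unconstrained) MNW rule is Pareto optimal, but being careful about the two-tier definition adopted here (first maximize $|P(\mathcal{A})|$, then maximize the product over $P(\mathcal{A})$). First I would suppose for contradiction that there exists a $k$-bundled MNW allocation $\mathcal{A}=(A_1,\dots,A_n)$ and another feasible allocation $\mathcal{B}=(B_1,\dots,B_n)$ of $k$-sized bundles such that $v_i(B_i)\geq v_i(A_i)$ for all $i\in N$, with $v_j(B_j)>v_j(A_j)$ for at least one $j\in N$. Note that $\mathcal{B}$ is a legitimate competitor in the definition of $k$-bundle MNW, since it respects the same cardinality constraints.

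The next step is to compare the two support sets $P(\mathcal{A})$ and $P(\mathcal{B})$. Since $v_i(B_i)\geq v_i(A_i)$ for all $i$, every agent with strictly positive value under $\mathcal{A}$ still has strictly positive value under $\mathcal{B}$, i.e., $P(\mathcal{A})\subseteq P(\mathcal{B})$. I would then split the argument into two cases based on where the strictly improved agent $j$ lies.

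In the first case, $j\notin P(\mathcal{A})$, so $v_j(A_j)=0$ and $v_j(B_j)>0$, hence $j\in P(\mathcal{B})\setminus P(\mathcal{A})$. This gives $|P(\mathcal{B})|>|P(\mathcal{A})|$, contradicting the first tier of the MNW definition (which maximizes the number of positively-valued agents). In the second case, $j\in P(\mathcal{A})\subseteq P(\mathcal{B})$, so $P(\mathcal{B})\supseteq P(\mathcal{A})$. If the inclusion is strict, we again contradict tier~one. Otherwise $P(\mathcal{B})=P(\mathcal{A})$, and on this common support every agent weakly improves while $j$ strictly improves, so
\[
\prod_{i\in P(\mathcal{B})} v_i(B_i) \;>\; \prod_{i\in P(\mathcal{A})} v_i(A_i),
\]
contradicting the second tier of the MNW definition. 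Since every case leads to a contradiction, $\mathcal{A}$ must be Pareto optimal.

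The argument is essentially routine; the only subtlety—and where I would focus attention when writing it up—is the bookkeeping around agents with zero value, making sure the two-tier lexicographic nature of the $k$-bundle MNW rule rules out both the case where the Pareto improvement activates a previously zero-valued agent and the case where it strictly increases the product among already-positive agents. No quantitative estimates or constructions are needed beyond this case analysis.
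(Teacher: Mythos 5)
Your proposal is correct and follows essentially the same contradiction argument as the paper's proof: compare the supports $P(\mathcal{A})$ and $P(\mathcal{B})$, derive a contradiction with the first tier if the support grows, and with the second tier (the product over the common support) otherwise. Your version merely spells out the zero-value bookkeeping slightly more explicitly than the paper does.
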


\begin{proof}
    Suppose, for the sake of contradiction, there exists an allocation \(\mathcal{A}'\) such that \(v_i(A_i') \geq v_i(A_i)\) for all \(i \in N\), and \(v_j(A_j') > v_j(A_j)\) for some $j \in N$. Let $P(\mathcal{A}) = \{ i \in N : v_i(A_i) > 0 \}$ and $P(\mathcal{A}') = \{ i \in N : v_i(A_i') > 0 \}$. If \(|P(\mathcal{A}')| > |P(\mathcal{A})|\), then \(\mathcal{A}'\) contradicts the maximality of \(P(\mathcal{A})\).  
If \(|P(\mathcal{A}')| = |P(\mathcal{A})|\), then the strict improvement implies
\[
\prod_{i \in P(\mathcal{A}')} v_i(A_i') > \prod_{i \in P(\mathcal{A})} v_i(A_i),
\]
contradicting the optimality of \(\mathcal{A}\) under the MNW rule, and the theorem follows.
\end{proof}

In the following theorem, we present an example showing that this bound is asymptotically tight, even for some notable special cases.

\begin{theorem}
    There exist instances where a $k$-bundle MNW-optimal allocation is no better than $1/2$-\efOne.
\end{theorem}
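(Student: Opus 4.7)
The plan is to exhibit, for each bundle size $k \geq 3$, an explicit instance whose $k$-bundle MNW-optimal allocation yields a rational-flip envy ratio that approaches $1/2$ from above as $k \to \infty$. Hence for every $\gamma > 1/2$, picking $k$ large enough gives an instance where the MNW allocation is not $\gamma$-\efOne, which is precisely what the theorem asks for.

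The family I would use has $n = 2$ agents and $2k$ items split into two groups $\{a_1, \ldots, a_k\}$ and $\{b_1, \ldots, b_k\}$, with
\begin{align*}
v_1(a_l) = 1, \quad v_1(b_l) = 0, \quad v_2(a_l) = 2k - 1, \quad v_2(b_l) = k.
\end{align*}
Both agents strictly prefer every $a$-item to every $b$-item, so this instance also falls under the ordered-valuations special case singled out earlier in the paper.

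The first step is to identify the MNW allocation. Writing $x$ for the number of $a$-items in $A_1$, one has $v_1(A_1) = x$ and $v_2(A_2) = (k-x)(2k-1) + xk$, so the Nash product reduces to the concave quadratic $g(x) = x\bigl[k(2k-1) - (k-1)x\bigr]$, whose unconstrained maximizer $x^{\ast} = k(2k-1)/(2(k-1))$ strictly exceeds $k$. Consequently $g$ is strictly increasing on $\{1, \ldots, k\}$, and the unique MNW-optimal allocation is $A_1 = \{a_1, \ldots, a_k\}$, $A_2 = \{b_1, \ldots, b_k\}$. A brief sanity check on the fringe-case clause of MNW is needed: every $x \geq 1$ yields both agents positive value, so the maximization genuinely reduces to maximizing $g(x)$ over $\{1, \ldots, k\}$.

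Next, I would verify the envy and compute the best-flip ratio. Agent 2 envies agent 1 since $v_2(A_1) = k(2k-1) > k^2 = v_2(A_2)$. By the symmetry within each item group, every rational flip for agent 2 has the form $(b_l, a_m)$ with the same gain $\Delta_2 = k - 1$, yielding post-flip values
\begin{align*}
v_2(A_2') = k^2 + k - 1, \qquad v_2(A_1') = 2k^2 - 2k + 1.
\end{align*}
The resulting envy ratio $(k^2 + k - 1)/(2k^2 - 2k + 1)$ lies strictly above $1/2$ for every $k$ (since $4k > 3$) but tends to $1/2$ as $k \to \infty$; hence for any fixed $\gamma > 1/2$, choosing $k$ large enough makes the ratio strictly smaller than $\gamma$.

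The main obstacle is the MNW verification. Intuitively one might expect MNW to balance valuable $a$-items across the two bundles, and indeed this is what happens when $v_2(b)/v_2(a)$ drops below the critical threshold $(k-1)/(2k-1)$. The point of choosing the specific ratio $v_2(a)/v_2(b) = (2k-1)/k$ is that it sits just above this threshold, which keeps $x = k$ the unique interior maximizer of the concave quadratic $g$, while still permitting the post-flip envy ratio to approach $1/2$. This balancing between MNW-strictness and envy-tightness is the delicate step that the construction is designed to address.
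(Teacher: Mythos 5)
Your proposal is correct and follows essentially the same route as the paper: a two-agent instance with $2k$ items in two groups, where agent~1 values only the first group and agent~2 values both groups at (roughly) a $2{:}1$ ratio, so that the unique MNW allocation hands all first-group items to agent~1 and the best rational flip for agent~2 yields an envy ratio tending to $1/2$. The paper uses values $(c,0)$ and $(2,1)$ where you use $(1,0)$ and $(2k-1,k)$, but the structure of the argument — concavity of the Nash product in the number of first-group items given to agent~1, uniqueness of the maximizer at the boundary, and the asymptotic flip-ratio computation — is the same.
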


\begin{proof}

Consider the instance with two agents and $2k$ items, namely $g_1,...,g_{2k}$, partitioned into two sets $G_1=\{g_1,..,g_k\}$ and $G_2=\{g_{k+1},..,g_{2k}\}$, with the following valuation functions:
For any $c > 0$, $v_1(g_\ell)=c$ for $\ell \in [k]$ and $v_1(g_\ell)=0$ otherwise. Similarly, $v_2(g_\ell)=c$ for $g \in [k]$ and $v_2(g_\ell)=1$ otherwise (shown also in Table~\ref{tab:MNWlowerbound}).

\begin{table}[tb]
    \centering
    \begin{tabular}{c|cccccccc}
    & $g_1$ & $g_2$ & $\cdots$ & $g_k$ & $g_{k+1}$ & $\cdots$ & $g_{2k}$ \\ \hline
$v_1(\cdot)$ & $c$ & $c$ & $\cdots$ & $c$ & $0$ & $\cdots$ & $0$ \\
$v_2(\cdot)$ & $2$ & $2$ & $\cdots$ & $2$ & $1$ & $\cdots$ & $1$ \\
\end{tabular}
    \caption{An upper bound instance on \efOne for a maximum Nash welfare solution.}
    \label{tab:MNWlowerbound}
\end{table}

We will show that the maximum Nash welfare is achieved only by the allocation $A_1= (g_1,...,g_k)$ and $A_2 = (g_{k+1},...,g_{2k})$. Let $z$ be the number of items in $G_1=\{g_1,...,g_k\}$ allocated to agent $1$. Then agent $1$ receives $k-z$ items  from the second group $G_2=\{g_{k+1},...,g_{2k}\}$. Consequently, agent 2 receives the remaining $k-z$ items from $G_1$ and $z$ from  $G_2$. The resulting Nash welfare is therefore 

\begin{align}
    (cz)\cdot(2(k-z)+z). \label{eq:NMconstruction:1}
\end{align}

The first derivative of~(\ref{eq:NMconstruction:1}) w.r.t to $z$ is $2c(k-z)$ which is equal to $0$ only when $z=k$, hence the allocation $(A_1,A_2)$ is the unique allocation maximizing the Nash welfare. Let $a \in A_1$ and $b \in A_2$, then: 

\begin{align}
    v_2(A_2 \setminus b \cup a) = (k-1) + 2 = k+1 < \left(\frac{k+1}{2k-1} + \epsilon\right) 2k-1 =  2(k-1) + 1 =  v_1(A_1 \setminus a \cup b)\nonumber
\end{align}

for any $\epsilon>0$, hence this allocation is at most $\frac{k+1}{2k-1}$-\efOne, which approaches $1/2$-\efOne as the bundles size $k$ increases.
\end{proof}

Note that the example precludes the possibility of a better approximation for the Nash welfare algorithm for many reasonable assumptions. First, the construction works even for $n=2$ and naturally extends to any number of agents. Second, it holds even in the case where the agents \emph{agree on the rankings} of the items -- this comes at a stark contrast with the case where all the agents \emph{agree and on value} of the item where the same mechanism returns 1-\efx and PO allocations~\cite{Bog2024}. Moreover, by setting $c=3$ the impossibility result holds even when the agents are normalized\footnote{meaning that $\sum_{g \in M} v_i(g)=C$ for some constant $C$ and for all $i \in N$.} and by setting $c=2$ it continues to hold when each agent's valuation is restricted to at most three distinct values. Again, this comes in contrast with the binary case where 1-\efx is achievable by the maximum Nash welfare solution. 

Another rule that always returns PO allocation is the leximin criterion. The \emph{leximin} rule seeks to maximize fairness by prioritizing the worst-off agent, then recursively applying the same principle to the remaining agents: it first maximizes the value of the least well-off agent, then of the second least well-off among the rest, and so on. Leximin looks somewhat promising since when $n=2$ or when all agents are identical, it guarantees \efx allocations, although it cannot guarantee \efOne allocations for general additive valuations. Here we show that unfortunately, this rule fails to guarantee \efx allocations even under the common ranking assumption, and cannot yield any $\gamma$-\efOne allocation for any $\gamma \in O(1/k)$.

Finally, we explore the  \emph{social welfare} (SW) maximizing rule, which aims to maximize the total value, selecting an allocation $\aaa \in \arg\max \sum_{i=1}^n v_i(A_i)$, and is arguably the most common efficiency criterion. Probably less striking, this rule also fails to guarantee any approximation better than $O(1/k)$.

In the following, we expand an example from~\cite{Bog2024} to show that both the leximin and social welfare maximizing rules cannot yield any approximation significantly better than the trivial bound of $1/k$. Interestingly, the same example returns an $1/2$-\efOne approximation for the MNW rule, but we have omitted it since we were able to establish that with a simpler example. The proof is deferred to Appendix~\ref{app:PO}.

\begin{theorem}\label{thm:PO:SWleximin-inapprox}
There exists a leximin-optimal allocation that is $\left(\frac{2}{k} + o\left(\frac{1}{k^2}\right)\right)$-\efOne. 
Moreover, there exists a social welfare-optimal allocation that is $\frac{1}{k - 1}$-\efOne.
\end{theorem}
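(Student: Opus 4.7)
The plan is to exhibit two explicit bad instances, each with $n=2$ and $|M|=2k$. For the social welfare claim, I partition the items into $G_1=\{g_1,\ldots,g_k\}$ and $G_2=\{g_{k+1},\ldots,g_{2k}\}$ and set $v_1(g)=c$ on $G_1$ and $v_1(g)=0$ on $G_2$ for any fixed $c>1$, together with $v_2(g)=1$ on $G_1$ and $v_2(g)=\epsilon$ on $G_2$ for an arbitrarily small $\epsilon>0$. A single-item exchange argument then shows that the unique SW-maximising allocation with $|A_i|=k$ is $A_1=G_1$, $A_2=G_2$: swapping any item of $G_1$ for any item of $G_2$ between the two agents decreases SW by $c-1+\epsilon>0$. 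Then $v_2(A_2)=k\epsilon$ while $v_2(A_1)=k$, so agent~$2$ envies agent~$1$, and every rational flip for agent~$2$ trades a $G_2$-item (value $\epsilon$) for a $G_1$-item (value $1$). A direct computation gives the envy-after-flip ratio $\frac{(k-1)\epsilon+1}{(k-1)+\epsilon}$, which tends to $\frac{1}{k-1}$ from above as $\epsilon\downarrow 0$.

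For the leximin claim, I take $m=k/2+1$ (or $\lfloor k/2\rfloor+1$ when $k$ is odd), designate $\{g_1,\ldots,g_m\}$ as the ``high'' items and the remaining $2k-m$ as ``low'', and set $v_1(g)=1$ on high items and $v_1(g)=\delta$ on low items, $v_2(g)=M$ on high items and $v_2(g)=1$ on low items, with $\delta$ tiny and $M$ a large parameter to be tuned. The key step is to verify that leximin is forced to hand all $m$ high items (plus $k-m$ low ones) to agent~$1$, leaving agent~$2$ with only low items. Since agent~$1$'s value essentially equals the number of high items she holds, any allocation in which she receives $\ell<m$ high items has a sorted value vector whose least entry is $\approx\ell<m$, strictly lexicographically worse than the candidate whose least entry is $\approx m$; and for $\ell=m$ the sorted vector is invariant under any redistribution of low items, so no ``lateral'' swap can improve it either.

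Once leximin-optimality is pinned down, the envy of agent~$2$ is computed directly: $v_2(A_2)=k$ and $v_2(A_1)=mM+(k-m)$, and swapping any high item $g\in A_1$ for any low item $g'\in A_2$ yields
\begin{equation*}
\frac{v_2(A_2\pplus g\mminus g')}{v_2(A_1\pplus g'\mminus g)}=\frac{k+M-1}{(m-1)M+k-m+1}=\frac{1}{m-1}+\frac{k(m-2)}{(m-1)\bigl((m-1)M+k-m+1\bigr)}.
\end{equation*}
For $m=k/2+1$ the leading term equals $\frac{2}{k}$, and choosing $M$ to grow faster than $k^2$ (say $M=k^3$) pushes the second term into $o(1/k^2)$. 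The main obstacle I anticipate is precisely the leximin-optimality verification above, because leximin compares entire sorted value vectors rather than scalars, so one has to rule out both alternatives with $\ell<m$ and any reshuffle within the low items; the bivalued structure of both valuations reduces this to a one-parameter argument but is the only step that requires actual work. The social welfare part, by contrast, is essentially immediate from the single-item exchange argument.
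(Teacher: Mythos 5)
Your proof is correct, and it takes a genuinely different route from the paper's. The paper expands a single $n$-agent instance from \cite{Bog2024} with four item types and a tunable parameter $x$: the $n-2$ auxiliary agents are forced by leximin to absorb the type-$c$ items, and the residual two-agent subproblem on the $a$- and $b$-items yields the $\approx 2/k$ leximin bound (for $x>1$, via an exchange argument showing agent $j$ must take all $a$-items) and the exact $1/(k-1)$ social-welfare bound (at $x=0$). You instead give two separate two-agent instances, which suffices since the statement is an existence claim. Your social-welfare instance is essentially the paper's residual subproblem in isolation; taking $\epsilon=0$ there gives the ratio $1/(k-1)$ exactly rather than only in the limit $\epsilon\downarrow 0$ (zero values cause no difficulty in the \efOne comparison), matching the stated constant. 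The leximin instance is where you genuinely diverge: by arranging that agent~$2$'s value is at least $k$ in \emph{every} feasible allocation (each of her $k$ items is worth at least $1$ to her) while agent~$1$'s value is capped at $m+(k-m)\delta<k$, the leximin verification collapses to ``maximize agent~$1$'s value,'' which forces $\ell=m$ in one line and avoids both the auxiliary agents and the exchange argument; moreover the extra parameter $M$ (e.g.\ $M=k^3$) pushes the correction term to a genuine $o(1/k^2)$, whereas the paper's own bound $\frac{2k-1}{k^2-2k+2}$ carries a $\Theta(1/k^2)$ correction. One small fix: for odd $k$ take $m=\lceil k/2\rceil+1$ rather than $\lfloor k/2\rfloor+1$, since the latter makes the leading term $\frac{1}{m-1}=\frac{2}{k-1}=\frac{2}{k}+\Theta(1/k^2)$, while the former gives $\frac{2}{k+1}<\frac{2}{k}$ so the claimed bound holds verbatim (and $m\le k$ still holds for $k\ge 3$).
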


We have also examined the existence of \efx allocations under the common ranking assumption. A prior example by~\cite{Bog2024} shows that exact \efOne allocations are not possible even under the common ranking assumption for both SW and MNW. For the leximin rule, we provide the following example showing that leximin cannot return \efx allocations, even in the common ranking regime. The proof is deferred to Appendix~\ref{app:PO}.

\begin{theorem}\label{thm:PO:LeximinFailsEfxSameRanking}
    There is an instance where no leximin allocation is also \efx, even for $n=3$ and under the common ranking assumption.
\end{theorem}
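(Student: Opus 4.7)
The plan is to exhibit a small explicit counterexample with $n=3$ agents and $k=3$ items per agent (so $9$ items in total) in which every leximin-optimal allocation fails \efx while still respecting the common ranking assumption. The guiding intuition is that leximin rewards balancing values across agents, and under the common ranking restriction this balancing forces one ``scaled-up'' agent to receive a top item together with the two least valuable items; if those trash items are strictly smaller than but close to the next-to-lowest item, a rational flip with vanishingly small marginal gain will exist, and the envy gap will be too large for such a flip to close.

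Concretely, I would take two identical agents $1, 2$ with $v_1=v_2$ given by $v(g_1)=v(g_2)=v(g_3)=10$, $v(g_4)=v(g_5)=v(g_6)=5$, $v(g_7)=2$, $v(g_8)=v(g_9)=1$, and a ``stretched'' third agent with $v_3$ equal to $v$ except that $v_3(g_j)=100$ for $j\in\{1,2,3\}$. All three agents rank the items as $g_1 \succeq g_2 \succeq \cdots \succeq g_9$ (with the natural ties), so the common ranking assumption holds.

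The first main step is to pin down the leximin tuple by a short enumeration. If agent~$3$ receives no top item, her value is at most $15$. If agent~$3$ receives two or more top items, the remaining six items cannot be split into two triples both of value at least $17$. If agent~$3$ receives exactly one top item but keeps some item from $\{g_4,\dots,g_7\}$ (rather than $g_8$ and $g_9$), the remaining six items still do not admit a split with minimum at least $17$. The only way to reach minimum $17$ is therefore agent~$3$ taking one of $\{g_1,g_2,g_3\}$ together with $g_8$ and $g_9$ (so her bundle has $v_3$-value $102$), and agents~$1, 2$ splitting the remaining six items of $v_1$-values $\{10,10,5,5,5,2\}$ into the triples $(\{10,5,2\},\{10,5,5\})$ of values $17$ and $20$. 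The leximin tuple is thus uniquely $(17, 20, 102)$, and every leximin optimum has the form (up to the symmetry between agents~$1$ and $2$) $A_1=\{g_x, g_y, g_7\}$, $A_2=\{g_{x'}, g_{y'}, g_{y''}\}$, $A_3=\{g_\tau, g_8, g_9\}$, with $\{g_\tau, g_x, g_{x'}\}=\{g_1,g_2,g_3\}$ and $\{g_y, g_{y'}, g_{y''}\}=\{g_4, g_5, g_6\}$.

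To expose the \efx violation, agent~$3$ envies $A_1$ because $v_3(A_1) = 100+5+2 = 107 > 102 = v_3(A_3)$, and the exchange $(g_9, g_7)$ is a rational flip for her since $v_3(g_7)=2 > 1 = v_3(g_9)$. Yet the post-flip values satisfy $v_3(A_3 \pplus g_7 \mminus g_9) = 103 < 106 = v_3(A_1 \pplus g_9 \mminus g_7)$, so envy is not eliminated. The same argument applies to every leximin optimum by symmetry, so no leximin allocation can be \efx. The main obstacle I anticipate is keeping the leximin case analysis brief yet complete; I plan to organize it around the number of top items in agent~$3$'s bundle and the identity of her ``trash'' items, which reduces it to the handful of subcases listed above.
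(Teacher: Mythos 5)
Your proposal is correct and follows essentially the same route as the paper: exhibit an explicit $n=3$, $k=3$ instance with a common ranking, pin down the (essentially unique) leximin value vector by a short case analysis, and then display one rational flip that fails to eliminate envy; your instance (two identical agents plus one agent with scaled-up top items) is arguably cleaner and more symmetric than the paper's, and your enumeration checks out, though the subcase where agent~$3$ keeps one item from $\{g_4,\dots,g_7\}$ and one of $\{g_8,g_9\}$ leaves a residual multiset of total value $36$, so it is not ruled out by a totals argument alone and needs the explicit observation that $\{10,10,5,5,5,1\}$ admits no split with both triples worth at least $17$. The one thing the paper does that you omit is to verify that the instance actually admits an \efx allocation, which makes the separation meaningful rather than vacuous; your instance does admit one, e.g.\ $A_1=\{g_1,g_4,g_8\}$, $A_2=\{g_2,g_5,g_9\}$, $A_3=\{g_3,g_6,g_7\}$, so adding one line to that effect would make your argument match the paper's in full.
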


\section{Conclusions}
We have studied a constrained model of fair division, where all agents must receive the same number of items. Motivated by the work of \cite{Bog2024}, which introduced notions of envy-freeness up to rational flips, we studied further these criteria from the angle of approximation algorithms. Our results reveal that, as in the unconstrained model, the concept of \efx is again quite challenging and it remains an open question to come up with bounded approximation guarantees for general additive valuations. At the same time, \efOne is a more tractable criterion, which is also more compatible with some notions of efficiency, such as the Nash welfare, as discussed in Section~\ref{sec:PO}.

We conclude with a  discussion on extending this framework beyond additive valuations. We note that the particular notion of a rational flip, that we use here and in \cite{Bog2024} for defining  \efOne and \efx, seems to be suitable only for additive valuations. In particular, additivity ensures that if an agent $i$ envies another agent $j$, then there is always at least one rational flip available that can improve the value for agent $i$. But with non-additive valuations, it is possible that agent $i$ envies agent $j$, yet there is no flip that could improve $i$.  Notably, simply exchanging a single pair of items between bundles can have unexpected effects on an agent’s valuation of a bundle. This is in contrast to the standard setting, and the notions of EF1 and EFX, where for any monotone valuation, the thought experiment of removing an item from the bundle of agent $j$ results in a better situation for agent $i$. Therefore, under our constrained setting, one would need to adapt or extend  the notion of a rational flip for non-additive functions to ensure the existence of {\it corrective} actions in the underlying thought experiment. Developing this further  remains an interesting direction for future work.

\section*{Acknowledgements}
This work has been partially supported by project MIS 5154714 of the National Recovery and Resilience Plan Greece 2.0 funded by the European Union under the NextGenerationEU Program.
It has also been supported by the H.F.R.I call ``Basic research Financing (Horizontal support of all Sciences)'' under the National Recovery and Resilience Plan Greece 2.0 funded by the European Union - NextGenerationEU (H.F.R.I. Project Number: 15877). This research was partially undertaken during the first author’s internship in Archimedes Unit at Athena Research Center.

\bibliographystyle{ACM-Reference-Format}
\bibliography{mybibliography}
\newtheorem*{customtheorem}{Theorem}
\newtheorem*{customlemma}{Lemma}

\appendix

\section{Envy cycle elimination in the unconstrained setting}
\label{app:ece}

For the sake of completeness, in Algorithm~\ref{alg:ece-standard} we present the classic algorithm by \cite{lipton2004approximately} for allocating items by finding an unenvied agent in each round via envy cycle elimination.

\begin{algorithm}[ht]
\caption{Envy cycle elimination algorithm}\label{alg:ece-standard}
\begin{algorithmic}[1] 
\Procedure{EnvyCycleElimination}{$N,M$}

\vspace{0.5em}

\For{$i \in N$}
    \State {$A_i \gets \emptyset$}
\EndFor

\vspace{0.5em}

\While{there exist unallocated items}
    \State Let $G$ be the envy graph, i.e., $G = (N, \{(i, j) \mid v_i(A_i) < v_i(A_j)\})$
    \State $j \gets$ \Call{FindUnenviedAgent}{$G$} \Comment {Guaranteed to exist; ties break arbitrarily}
    \State {Agent $j$ adds to $A_j$ her favorite unallocated item}
    \If {the envy graph has no source vertex}
        \State{Remove envy cycles until a source vertex is created}
    \EndIf
\EndWhile

\vspace{0.5em}
\State \Return {$(A_1, A_2,\ldots,A_n)$}
\EndProcedure
\end{algorithmic}
\end{algorithm}

\section{Missing proofs from Section~\ref{sec:RR}}
\label{app:RR}

\subsection{Proof of Theorem~\ref{thm:RR:Eff1}}

\begin{customtheorem}[\ref{thm:RR:Eff1}]
    The allocation computed by the Round-Robin algorithm satisfies \efOne.
\end{customtheorem}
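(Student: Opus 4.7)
The plan is to fix an arbitrary pair of agents $i,j$ and verify the \efOne condition from $i$'s perspective. Let $a_r$ denote the item picked by $i$ in round $r$ and $b_r$ the item picked by $j$ in round $r$, for $r\in[k]$, so that $A_i=\{a_1,\ldots,a_k\}$ and $A_j=\{b_1,\ldots,b_k\}$. The analysis splits on whether $i$ precedes or follows $j$ in the fixed Round-Robin order.

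If $i$ picks before $j$ in each round, then at the moment $i$ selects $a_r$ greedily the item $b_r$ is still in the pool, so $v_i(a_r)\geq v_i(b_r)$. Summing over $r$ gives $v_i(A_i)\geq v_i(A_j)$, so $i$ does not envy $j$ and \efOne holds trivially.

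The substantive case is when $j$ picks before $i$ in every round. The key structural observation is a one-step ``shifted'' inequality: when $i$ picks $a_r$ in round $r$, the item $b_{r+1}$ (which $j$ will take only in round $r+1$) is still available, so by greedy maximality $v_i(a_r)\geq v_i(b_{r+1})$ for $r=1,\ldots,k-1$. Summing these $k-1$ inequalities and rearranging yields
\[
v_i(A_i)-v_i(a_k)\;\geq\;v_i(A_j)-v_i(b_1),
\]
which is exactly $v_i(A_i\cup b_1\setminus a_k)\geq v_i(A_j\cup a_k\setminus b_1)$. Thus the pair $(a_k,b_1)$ — i.e., swapping $i$'s last-round pick with $j$'s first-round pick — is the natural candidate flip that eliminates envy.

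The last step is to certify that $(a_k,b_1)$ is actually a \emph{rational} flip whenever $i$ envies $j$. From $v_i(A_j)>v_i(A_i)$ together with the shifted inequality $v_i(A_i)-v_i(a_k)\geq v_i(A_j)-v_i(b_1)$ we immediately get $v_i(b_1)>v_i(a_k)$, which is precisely rationality. The only genuinely non-mechanical part of the argument is identifying that the correct flip to analyze is not a per-round swap $(a_r,b_r)$ but rather the extremal pair $(a_k,b_1)$; once this choice is made, the shifted inequality delivers both rationality and envy elimination in one stroke.
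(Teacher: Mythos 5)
Your proposal is correct and follows essentially the same route as the paper's proof: the shifted inequality $v_i(a_r)\geq v_i(b_{r+1})$, the candidate flip $(a_k,b_1)$, and the observation that envy forces $v_i(b_1)>v_i(a_k)$ (so the flip is rational) are exactly the steps in the paper's argument. No gaps.
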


\begin{proof}
    Fix a pair of agents $i,j$. We will analyze the envy of agent $i$ towards $j$.
    Consider the bundles $A_i = (a_1, a_2,\ldots, a_k)$ and $A_j = (b_1, b_2,\ldots, b_k)$ allocated to agents $i$ and $j$, respectively, by the Round-Robin algorithm, where for each $\ell\in [k]$, the items $a_\ell$ and $b_\ell$ were allocated in the $\ell$-th round of the algorithm. 
    
    If $i$ precedes $j$ in the Round-Robin order, then $i$ always picks her favorite remaining item before $j$ and therefore there is no envy towards $j$. 
    Hence, the only interesting case is when $j$ precedes $i$ in the Round-Robin order. 
    In this case, we know that when agent $i$ picks an item at round $\ell$, it is at least as valuable as what $j$ picked at around $\ell+1$. Therefore, we have $v_i(a_\ell) \geq v_i(b_{\ell+1})$ for all $\ell \in \{1,\ldots,k-1\}$. Hence, 
    \begin{align}\label{ef1: ineq1}
        v_i(A_i \mminus a_k) \geq v_i(A_j \mminus b_1)
    \end{align}
    
    If $v_i(A_i) \geq v_i(A_j)$ then once again agent $i$ does not envy agent $j$ and \efOne is not violated.
    Otherwise, if $v_i(A_i) < v_i(A_j)$, then due to (\ref{ef1: ineq1}) it must hold that 
    \begin{align}\label{ef1: ineq2}
        v_i(b_1) > v_i(a_k)
    \end{align}
    
     Combining (\ref{ef1: ineq1}) with (\ref{ef1: ineq2}) yields
    \begin{align*}
        v_i(A_i \mminus a_k \pplus b_1) \geq v_i(A_j \mminus b_1 \pplus a_k)
    \end{align*}
    which implies that the \efOne condition is satisfied, regarding the envy of $i$ towards $j$. This concludes the proof.    
\end{proof}

\section{Missing proofs from Section~\ref{sec:ece}}
\label{app:modified_ece}

\subsection{Proof of Lemma~\ref{lemma: common_top_n_min_value}}

\begin{customlemma}[\ref{lemma: common_top_n_min_value}]
    Let $i \in [n]$, and let $B_i \neq \emptyset$ be a bundle held by agent~$i$ at some point after the first $n$ items have been allocated. Then it holds that $v_i(B_i) \geq v_i(g_i^n)$. Moreover, when all agents agree on the set of top-$n$ items, no two such items ever appear in the same bundle.
\end{customlemma}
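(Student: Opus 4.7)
The plan is to establish two claims in sequence: (i) once agent~$i$'s bundle is non-empty, its value to $i$ is at least $v_i(g_i^n)$, and this lower bound persists from that point onward; and (ii) under the common top-$n$ assumption, each bundle contains at most one item from $T^n$ throughout the execution.

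To prove (i), I would first show that in the first $n$ iterations of Algorithm~\ref{alg: algorithm_general} each of the $n$ agents picks exactly one item. The argument rests on three structural observations: for $k \geq 2$ no agent can be in $P$ during the first $n$ iterations (since her bundle is never full), so no swap operations occur; any agent holding an empty bundle is a source of the envy graph because an empty bundle has value $0$ to every other agent; and the tie-breaking rule in \textsc{FindUnenviedAgent} favors smaller bundles, so an empty-bundle source is picked whenever one exists. Since at the start of iteration $t \leq n$ at most $t - 1 < n$ items have been allocated, an empty-bundle agent still exists. Consequently agent~$i$ makes her first pick at some iteration $\tau_i \leq n$ (she cannot receive her first bundle through envy-cycle elimination, since an empty bundle is never part of a cycle), and at that moment at most $n - 1$ items have been allocated, leaving at least one item of $\{g_i^1,\ldots,g_i^n\}$ in the pool. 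Her pick therefore has value at least $v_i(g_i^n)$, and Lemma~\ref{lemma: agents-improve} propagates this bound to every subsequent bundle held by~$i$.

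For (ii), I would strengthen the statement to the invariant maintained from the end of iteration $n$ onward: \emph{each bundle contains exactly one item from $T^n$, and the pool $S$ contains no $T^n$ item}. The base case follows from the proof of (i): under common top-$n$, every item of $T^n$ is strictly more valuable than every non-$T^n$ item for every agent, so each of the $n$ first picks selects an item of $T^n$, and together they exhaust $T^n$. To carry the invariant forward I analyze the three operation types separately. Picks after iteration $n$ draw from $S$ (which holds no $T^n$ item) and thus add only non-$T^n$ items. Envy-cycle eliminations merely rotate bundles among agents, preserving the per-bundle count. For a swap performed by agent~$p$, the invariant guarantees that $A_p$ contains exactly one $T^n$ item and $k - 1 \geq 1$ non-$T^n$ items; since every $T^n$ item is strictly more valuable to $p$ than any non-$T^n$ item, the minimum-valued item of $A_p$ lies outside $T^n$, while the swap-in item, coming from $S$, is also outside $T^n$. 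Hence both $T^n$-counts are preserved.

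The most delicate step is the swap case, since it simultaneously uses the invariant (the $T^n$ item sits in a full bundle of size $k \geq 2$ together with at least one non-$T^n$ item) and the strict dominance of $T^n$ items granted by the common top-$n$ assumption to rule out the swap-out of any $T^n$ item. For $k = 1$ the pool is empty after iteration $n$ and no further operation modifies any bundle, so the invariant is trivially maintained.
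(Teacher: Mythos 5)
Your proposal is correct and follows essentially the same route as the paper: the first $n$ iterations each serve an empty-bundle agent (so everyone's first pick is worth at least $v_i(g_i^n)$, propagated by Lemma~\ref{lemma: agents-improve}), and the separation of top-$n$ items is preserved because cycle resolutions only permute bundles and a swap never discards the unique $T^n$ item in a bundle. The only substantive delta is that you assert strict dominance of $T^n$ items over the rest, whereas the paper allows ties at the boundary and resolves them with an explicit tie-breaking assumption in a footnote; your argument should be weakened to ``at least as valuable'' plus that same tie-breaking convention.
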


\begin{proof}
    During the first $n$ iterations of the algorithm, there are no privileged agents ($P=\emptyset$) and the tie-breaking rule prioritizes agents with fewer items. As a result, in each of these rounds, an agent with an empty bundle is selected and allocated her most preferred item from the remaining pool $S$. Since there are $n$ agents and $n$ top-ranked items, each agent receives at worst her $n$-th most preferred item. Therefore, by Lemma~\ref{lemma: agents-improve}, it follows that after the first $n$ items have been allocated, any bundle $B_i$ held by agent $i$ satisfies $v_i(B_i) \geq v_i(g_i^n)$, for all $i \in [n]$.

    For the second part of the lemma, we show that no two top-$n$ items are ever placed in the same bundle. This is clearly true at the end of the first $n$ rounds, as each top-$n$ item is allocated to a distinct agent. We now argue that this invariant is maintained throughout the algorithm. To see this, note that the only way that such an item could end up in a bundle with another item from $T^n$ is that some agent who holds already an item from $T^n$ becomes privileged and drops it by performing an item swap. Consider the first time that this happens. Since we know that in an item swap, an agent \emph{exchanges} her least valued item with her most preferred item from $S$, and since at that point, there is at most one item from $T^n$ in her bundle, it follows that it will not be removed in an item swap\footnote{In the extreme case where for an agent $j$, the items not in $T^n$ have the same value as the item that $j$ holds from $T^n$, we assume that she will use a tie-breaking rule that will choose to swap an item not from $T^n$.}. Therefore, inductively, all agents will maintain throughout the algorithm an item from $T^n$ in their bundle.

    Finally, observe that during possible envy-cycle eliminations, bundles may change ownership, but their internal contents remain intact—no items are moved between bundles.
    Hence, throughout the algorithm, no two top-$n$ items are ever placed in the same bundle.
\end{proof}

\subsection{Proof of Lemma~\ref{lem:polynomial}}

\begin{customlemma}[\ref{lem:polynomial}]
    The algorithm always terminates after a polynomial number of iterations.
\end{customlemma}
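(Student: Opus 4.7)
The plan is to decompose the iterations of the while loop into three disjoint types and bound each separately. I would classify each iteration according to its terminating action as: (i) a \emph{give} iteration, where a non-privileged agent with a non-full bundle is handed a new item from $S$; (ii) a \emph{swap} iteration, where either a privileged agent (Type~A, selected in the topological sweep of $G_P$) or a non-privileged full agent (Type~C) exchanges her least-preferred item with the best available one in $S$; or (iii) a \emph{pass} iteration, where a non-privileged full agent is added to $P$. Envy-cycle resolutions within $G_{N\setminus P}$ happen inside \textsc{FindUnenviedAgent} and are handled in polynomial time by the classical argument of~\cite{lipton2004approximately}, so they do not contribute additional outer iterations.

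Give iterations are trivially bounded: each strictly decreases $|S|$ by one, so at most $kn$ of them occur. For swap iterations, the key is a combinatorial potential. For each agent $i$, let $r_i(g)\in\{1,\ldots,kn\}$ denote the rank of item $g$ in agent $i$'s decreasing-value ordering, and define $\Psi_i=\sum_{g\in A_i} r_i(g)$. A swap by $i$ replaces her bundle's least-preferred item (a high-rank item) by the strictly better-valued $g^{*}\in S$ (a lower-rank item), so $\Psi_i$ strictly decreases by at least $1$, while $\Psi_j$ is unchanged for all $j\neq i$. For pass iterations, I would charge each pass to a subsequent privileged swap: once agent $j$ joins $P$, her only way out is either her own future swap or a swap by an agent that precedes $j$ in the topological order of $G_P$ such that $j$ becomes reachable from that agent (cf.\ line removing $p$ and all nodes reachable from $p$ in the algorithm). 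Thus the number of pass iterations is at most $n$ plus the number of privileged swap iterations.

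The main obstacle is that permutations of bundles caused by envy-cycle resolutions on $G_{N\setminus P}$ can increase $\Psi_i$ for non-privileged agents, potentially re-enabling further swaps. The plan here is an amortization argument: cycle resolutions happen only inside \textsc{FindUnenviedAgent}, whose number of internal operations is polynomial per invocation by the standard analysis, and each single resolution can inflate $\sum_i\Psi_i$ by at most $O(k^{2}n^{2})$ (the span of $\Psi$). Since each swap consumes at least one unit of this potential and the total amount of potential that cycle resolutions and give iterations can inject is polynomial in $n$ and $k$, the number of swap iterations is $\operatorname{poly}(n,k)$. Combining the $O(kn)$ bound on give iterations, the polynomial bound on swap iterations, and the linear-in-swaps bound on pass iterations, the total number of outer iterations is polynomial, and since each iteration performs only polynomially many elementary operations (updating $G_P$ or $G_{N\setminus P}$, a topological sort, and a single swap or allocation), Algorithm~\ref{alg: algorithm_general} terminates in polynomial time overall.
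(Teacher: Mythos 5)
Your decomposition into \emph{give}, \emph{swap}, and \emph{pass} iterations mirrors the paper's, and your bounds on gives (at most $kn$, since $|S|$ strictly decreases) and on passes (charged against privileged swaps) are sound. The gap is in the swap bound, and it is a genuine one: your amortization is circular. You bound the number of swaps by the total potential injected into $\sum_i \Psi_i$, and you bound that injection by (number of envy-cycle resolutions) times $O(k^2n^2)$; but your only bound on the number of cycle resolutions is ``polynomially many per invocation of \textsc{FindUnenviedAgent},'' and the number of invocations is exactly the number of outer iterations you are trying to bound. Writing $T$ for the number of swaps, your inequalities yield something of the form $T \le O(k^2n^2) + O(k^2n^2)\cdot\mathrm{poly}(n,k)\cdot(kn+T)$, which is vacuous because each resolution may inject $\Theta(k^2n^2)$ units of potential while each swap consumes only one unit; the recursion does not close. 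The sentence ``the total amount of potential that cycle resolutions and give iterations can inject is polynomial in $n$ and $k$'' is precisely the claim that needs proof, and the concern is not hypothetical: a cycle resolution hands an agent a strictly preferred bundle, but that bundle can consist of one top-ranked item plus many bottom-ranked ones, so $\Psi_i$ can jump up by $\Theta(k^2n)$ and re-enable many further swaps.

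The paper takes a different and safer route for the swap count: it relies on value monotonicity ($v_i(A_i)$ never decreases, by Lemma~\ref{lemma: agents-improve}, including across cycle resolutions) together with the observation that, while an agent holds a given bundle, each swap strictly improves her value and she can consider exchanging each of the $kn$ items at most once, so the swaps per bundle-holding episode are bounded by $kn$ and the episodes themselves are polynomially many. To repair your argument you would need either (i) to show that cycle resolutions cannot increase $\Psi_i$ (false, as noted), or (ii) an independent polynomial bound on the total number of cycle resolutions over the whole run (e.g., via the standard edge-counting argument that each resolution strictly decreases the number of envy edges while each get/swap adds at most $n-1$ of them) \emph{combined} with a potential whose per-swap decrease dominates the per-resolution increase. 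As written, neither is supplied, so the polynomial bound on swaps does not follow.
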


\begin{proof}
We begin by observing that as long as $S \neq \emptyset$, there exists at least one agent whose bundle has fewer than $k$ items. This is because the total number of items is finite and equal to $|M| = nk$, and thus if all bundles were full, $S$ would be empty. Moreover, according to the algorithm, an agent may \emph{get} an item from the pool $S$ only when her current bundle is not full; otherwise, she may either \emph{swap} or \emph{pass}.

Now, consider a round in which the algorithm selects an agent $j$ with bundle $B_j$. Depending on the state of $j$ and the contents of her bundle, one of the following cases applies:

\begin{description}
    \item[Case 1:] $j \in N \setminus P$ and $|B_j| < k$. In this case, agent $j$ performs a \emph{get} operation (receives an item from $S$), reducing the size of the pool. Since $S$ is finite, this can happen only a finite number of times.

    \item[Case 2:] $j \in N \setminus P$ and $|B_j| = k$, and $j$ performs a \emph{swap}. This operation replaces her least valued item with a more preferred one from $S$, strictly increasing her value. As the set of possible bundles is finite and preferences are fixed, an agent cannot perform infinitely many strictly improving swaps—eventually, she will hold her top-$k$ items and no longer have an incentive to swap.

    \item[Case 3:] $j \in P$ and $|B_j| = k$, and $j$ performs a \emph{swap}. The algorithm allows agents in $P$ to become active again if a beneficial swap is possible. When this happens, $j$ strictly increases her value and leaves $P$. However, each such swap corresponds to an improvement, and as previously argued, no agent can improve indefinitely.

    \item[Case 4:] $j \in N \setminus P$ and $|B_j| = k$, and $j$ performs a \emph{pass} operation, becoming privileged and joining the set $P$. In this case, no change occurs in $S$, but the number of active agents (those in $N \setminus P$) decreases. This process can continue only until all agents with full bundles have either performed a swap or passed and joined $P$. Eventually, the only remaining agents in $N \setminus P$ will be those whose bundles are not yet full. At that point, the algorithm must select one of these agents, who will then perform a \emph{get} operation, removing an item from $S$ and thereby progressing the algorithm toward termination.

\end{description}

Since all operations either reduce the number of items in the pool $S$ (gets), reduce the number of possible improving actions (swaps) or reduce the number of active agents (passes), the algorithm terminates after a finite number of steps.

Furthermore, the algorithm runs in polynomial time with respect to the parameters $k$ and $n$. This follows from the observation that all operations fall into one of three categories: operations that strictly increase an agent’s value (\textit{get} and \textit{swap}), and those that do not yield any value improvement (\textit{pass}). Between two successive value-increasing operations, there can be at most $n$ consecutive \textit{pass} operations, as each agent may pass at most once before a value-improving step occurs. Additionally, each time an agent holds a specific bundle, the number of \textit{swap} operations she can perform is bounded by $kn$, since each agent can only consider exchanging each item at most once, and there are $kn$ items in total. Finally, the number of \textit{get} operations is exactly $kn$, as each item is allocated once.

\end{proof}

\section{Missing proofs from Section~\ref{sec:PO}}
\label{app:PO}

\subsection{Proof of Theorem~\ref{thm:PO:SWleximin-inapprox}}

\begin{customtheorem}[\ref{thm:PO:SWleximin-inapprox}]
There exists a leximin-optimal allocation that is $\left(\frac{2}{k} + o\left(\frac{1}{k^2}\right)\right)$-\efOne. 
Moreover, there exists a social welfare-optimal allocation that is $\frac{1}{k - 1}$-\efOne.
\end{customtheorem}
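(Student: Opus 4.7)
The plan is to give two separate explicit counterexamples, one for each part of the statement. The SW part admits a clean direct construction, while the leximin part extends the instance from~\cite{Bog2024} that was used to rule out exact \efOne for leximin.

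For the social welfare bound, I would take $n=2$ and $2k$ items with $v_1(g_j)=M$ for $j\in[k]$ and $v_1(g_j)=1$ for $j>k$, and $v_2(g_j)=M-\epsilon$ for $j\in[k]$ and $v_2(g_j)=1$ for $j>k$, where $M$ is large and $\epsilon>0$ is small. A local swap argument shows that the unique SW-maximizer is $A_1=\{g_1,\dots,g_k\}$, $A_2=\{g_{k+1},\dots,g_{2k}\}$, since swapping any top item of $A_1$ with any bottom item of $A_2$ strictly loses $\epsilon$ in total welfare. Agent~$2$ then envies, and every rational flip exchanges a value-$1$ item of $A_2$ with a value-$(M-\epsilon)$ item of $A_1$, giving the post-flip ratio
\[
\frac{v_2(A_2\cup g^*\setminus g)}{v_2(A_1\cup g\setminus g^*)}=\frac{M+k-1-\epsilon}{(k-1)(M-\epsilon)+1}\xrightarrow[M\to\infty]{}\frac{1}{k-1},
\]
as claimed.

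For the leximin part, I would extend the two-agent, $2k$-item instance from~\cite{Bog2024}. Agent~$1$'s valuation would concentrate almost all its mass on two specific items $g_1,g_2$ (value~$1$ each), with tiny $\epsilon$-weights on the remaining items, which forces $\{g_1,g_2\}\subseteq A_1$ in every leximin-optimal allocation and pins $v_1(A_1)\approx 2$. Agent~$2$'s valuation is then calibrated so that (a) she sees every item of $A_1$ as worth $1$, giving $v_2(A_1)\approx k$; (b) every item of her own bundle is worth slightly less than~$1$ (say, $1-\delta$), giving $v_2(A_2)\approx 2$ (after appropriate ``filler'' design); and (c) leximin still selects this specific allocation. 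The only rational flips exchange a $(1-\delta)$-item of $A_2$ with a $1$-item of $A_1$, gaining agent~$2$ a vanishing amount $\delta$, so the post-flip ratio simplifies to
\[
\frac{2+\delta}{k-\delta}=\frac{2}{k}+\frac{(k+2)\delta}{k(k-\delta)},
\]
which is $\frac{2}{k}+o(1/k^2)$ whenever $\delta=o(1/k)$. Taking $\delta$ strictly positive but sub-$1/k$ preserves rationality of the flip and the leximin optimality, producing the advertised bound.

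The principal obstacle is the leximin construction. Unlike SW, which is linear and admits a local-swap optimality certificate, leximin requires verifying lexicographic optimality of sorted value vectors, which is sensitive to ties at the binding coordinate: one has to rule out not only allocations that strictly increase $v_1(A_1)$ (easy, by agent~$1$'s special valuation) but also allocations that tie on that coordinate and strictly improve $v_2(A_2)$. A second subtlety is that the flip gain must decay faster than $1/k$; otherwise the expansion above only yields a weaker $\Theta(1/k)$ bound such as $3/(k-1)$ instead of the target $2/k+o(1/k^2)$. Balancing these two constraints---rigidity in agent~$1$'s bundle forcing her to the lex-minimum slot, together with near-uniformity in agent~$2$'s valuation at the right scale---is the delicate part of the proof.
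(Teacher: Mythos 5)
Your social-welfare construction is fine: it is a different instance from the paper's (which uses $n$ agents and items of types $a,b,c,d$ with the envied agent valuing the flipped-in item at exactly $1$ against a bundle worth exactly $k-1$), but the large-$M$ two-agent version is a legitimate alternative that reaches $\tfrac{1}{k-1}$ in the limit.

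The leximin half, however, has a genuine gap, and it is exactly at the point you flag as ``delicate.'' Your construction is internally inconsistent: you cannot simultaneously have every item of $A_2$ worth $1-\delta$ to agent~$2$ and $v_2(A_2)\approx 2$ when $|A_2|=k$. If you enforce $v_2(A_2)\approx 2$, then by averaging the least valuable item of $A_2$ is worth at most $2/k$ to agent~$2$, so the flip that exchanges it for a unit-value item of $A_1$ gains $\Theta(1)$, and that flip certifies the allocation to be roughly $\tfrac{3}{k-1}$-\efOne --- strictly better than $\tfrac{2}{k}$, so the claimed bound fails. If instead you enforce that every item of $A_2$ is worth $1-\delta$, then $v_2(A_2)=k(1-\delta)\approx v_2(A_1)$ and the post-flip ratio is near $1$, not $2/k$. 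The missing idea is a different \emph{scaling}: in the paper's instance the envious agent values each item in her own bundle at $x\approx 1$ (total $\approx k$) and each item in the envied bundle at $k-x\approx k$ (total $\approx k^2$). A single rational flip then gains about $k$, roughly doubling her value to $\approx 2k$, while the envied bundle stays at $\approx k^2$, which is what produces
\begin{align*}
\frac{k-2x+kx}{(k-x)(k-1)+x}\;\approx\;\frac{2k}{k^2}\;=\;\frac{2}{k}.
\end{align*}
Leximin is then pinned down not by concentrating agent~$1$'s value on two special items, but by a second agent who values only the $a$-items (at $1$ each, total $k$) so that moving any $a$-item away from her strictly lowers the minimum coordinate. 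Without this two-scale structure, no calibration of $\delta$ in your two-agent setup can push the best flip's ratio below $\Theta(3/k)$, so the target bound of $\tfrac{2}{k}+o(1/k^2)$ is out of reach for your construction.
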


\begin{proof}
We will prove this statement by expanding an example from~\cite{Bog2024}. Consider an instance with $n$ agents, and $kn$ items. The items are of $4$ types: $k$ type $a$ items, $k$ type $b$, $n-2$ type $c$ items, and the remaining are items of type $d$. 

The instance includes two special agents, $i$ and $j$, and a set $L$ of $n-2$ agents with identical valuations. The valuations for agent $i$ are, for some $x \in [0,k]$, $v_i(a)=k-x$, $v_i(b)=x$, $v_1(c)=v_1(d)=0$. For agent $j$ the valuations are $v_j(a)=1$, $v_j(b)=0$, $v_j(c)=\frac{k^2}{n-2}$ and $v_j(d)=0$. Finally, for any agent $\ell \in L$,  the valuations are $v_\ell(c)=\frac{k(k+1)}{n-2}$ and $0$ in all other cases. Hence all agents have total value equal to $k(k+1)$. 

In the following we will show that the leximin criterion must allocate the items as follows: $A_i=\{b,...,b\}$\footnote{For the sake of simplicity, in this example we consider the allocations as multi-sets.}, $A_j=\{a,...,a\}$ and $A_\ell=\{c,d,...,d\}$. This allocation fails to be $\gamma$-\efOne when

\begin{align}
    v_i(A_i \mminus b \pplus a) = (k-1)x + k - x & < \gamma \cdot ((k-x)(k-1) + x)\nonumber = \gamma \cdot v_i(A_j \pplus a \mminus b), \nonumber
\end{align}

i.e., when 
\begin{align}
    \gamma \geq \frac{k-2x+kx}{(k-x)(k-1)+x}. \label{ineq:gamma_NW}
\end{align} Observe that this bound is increasing with $x$. For the leximin criterion, we will show that this allocation holds for every $x>1$, and as such $\gamma < \frac{2k-1}{k^2-2k+2} \leq \frac{2}{k} + o(1/k^2)$.

First, notice that any leximin allocation should give one item $c$ to each agent $\ell \in L$; otherwise some agent would receive zero value, which violates leximin criterion, since there exists at least one allocation (i.e., allocation $\aaa$) that avoids this situation. In this way, we construct $n-2$ bundles $A_\ell=\{c,d,...,d\}$. We are left with $k$ items of type $a$ and $k$ of type $b$, to be allocated between agents $i$ and $j$. Suppose $z > 0$ items of type $a$ are given to agent $i$, and the remaining $k - z$ to agent $j$. Then, agent $i$'s value is $z(k - x) + (k - z)x$, and agent $j$'s value is $k - z$. For any $z > 0$, we have $v_j < k$, and transferring one $a$-item from $i$ to $j$ strictly increases the smaller of the two values, contradicting leximin optimality. Therefore, agent $j$ must receive all $k$ items of type $a$.

For the social welfare maximization rule, we show a different allocation, when $x=0$. 
The bulk of the analysis is that even if we give the maximum possible value to $i$ and $j$, this is at most $k$, while simply giving one type $c$ item to the $n-2$ agents $\ell \in L$ leads to the total value of at least $k(k+1)$. Hence, type $c$ items should be given to type $\ell$ agents. Let $z$ be the number of type $a$ items allocated to agent $i$. Hence $k-z$ is the number of type $a$ items allocated to $j$ and the number of type $b$ items allocated to $i$, and the total welfare, given this parameter is $zk + 2(k-z)$ which is maximized for $z=k$. Therefore, agent $i$ gets the bundle $A'_i=\{a,...,a\}$ and agent $j$ gets the bundle $A'_j=\{b,...,b\}$. 
Thus, 

\begin{align}
    v_j(A'_j \pplus a \mminus b ) = 1 < \gamma\cdot (k-1) =  v_j(A'_i  \pplus b \mminus a).
\end{align}

Thus, the social welfare maximizing rule is $\frac{1}{k-1}$-\efOne in this instance.
\end{proof}

\subsection{Proof of Theorem~\ref{thm:PO:LeximinFailsEfxSameRanking}}

\begin{customtheorem}[\ref{thm:PO:LeximinFailsEfxSameRanking}]
    There is an instance where no leximin allocation is also \efx, even for $n=3$ and under the common ranking assumption.
\end{customtheorem}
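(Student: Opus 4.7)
The plan is to construct an explicit instance with $n=3$ and $k=3$ satisfying the common ranking assumption, enumerate its leximin allocations, and verify that each one admits a rational flip that fails to eliminate envy. Concretely, I would fix the common order $g_1 \succ g_2 \succ \cdots \succ g_9$, set $v_1(g_\ell) = v_2(g_\ell) = 11 - \ell$ for agents $1$ and $2$, and define $v_3(g_1) = 1 + 2\epsilon$, $v_3(g_2) = 1 + \epsilon$, $v_3(g_3) = 1$, and $v_3(g_\ell) = \delta_{\ell - 3}$ for $\ell \geq 4$, with $\delta_1 > \cdots > \delta_6 > 0$ chosen much smaller than $\epsilon$, so that the strict common ranking is preserved.

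The first step is to characterize the leximin utility vector. Agent $3$'s maximum achievable value is $3 + 3\epsilon$, attained only when $A_3 = \{g_1, g_2, g_3\}$; any other bundle gives her at most $2 + 3\epsilon + \delta_1 < 3$. Since agents $1$ and $2$ can always realize value well above $3$, the minimum coordinate of the utility vector is maximized exactly when $A_3 = \{g_1, g_2, g_3\}$. Agents $1$ and $2$ then partition $\{g_4, \ldots, g_9\}$, whose total value is $27$ for each of them; since they are identical, the second-smallest coordinate is maximized only by the split $(13, 14)$. A direct enumeration shows that exactly three triples from $\{g_4, \ldots, g_9\}$ sum to $14$, namely $\{g_4, g_6, g_9\}$, $\{g_4, g_7, g_8\}$, and $\{g_5, g_6, g_8\}$, whose complements sum to $13$. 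Up to swapping agents $1$ and $2$, these are all the leximin allocations.

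The second step is to refute \efx for each of them. Let $i$ denote the agent holding the $13$-valued bundle. Then $v_i(A_3) = 10 + 9 + 8 = 27 > 13 = v_i(A_i)$, and since every item in $A_3$ is strictly preferred by $i$ to every item in $A_i$, every cross pair is a rational flip. I would exhibit the specific rational flip in which $i$ exchanges her least-valued item with $g_3$. A direct computation in each of the three configurations shows that the resulting bundle of $i$ has value at most $19$, while the resulting bundle of agent $3$ retains value at least $21$ to $i$ (for instance, with $A_i = \{g_4, g_7, g_9\}$ the new bundles $\{g_3, g_4, g_7\}$ and $\{g_1, g_2, g_9\}$ have values $19$ and $21$ respectively). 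Hence this rational flip does not eliminate envy, and the allocation is not \efx.

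The main obstacle I anticipate is justifying the uniqueness of $(3+3\epsilon, 13, 14)$ as the leximin vector: one must rule out allocations that sacrifice some of agent $3$'s value in order to push the other two coordinates higher. This is settled by the strict inequality $2 + 3\epsilon + \delta_1 < 3$, which guarantees that any allocation with $A_3 \neq \{g_1, g_2, g_3\}$ has smallest coordinate strictly below $3$, so it is lex-dominated by $(3+3\epsilon, 13, 14)$ regardless of how large the other coordinates become. Once this uniqueness is in hand, the conclusion follows from the three elementary case-wise flip computations described above.
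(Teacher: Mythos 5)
Your construction is correct and the approach is essentially the same as the paper's: an explicit $n=3$, $k=3$ instance with a common ranking, a characterization of the (small set of) leximin allocations, and for each one a single rational flip toward agent $3$'s bundle that fails to eliminate envy. I checked the arithmetic: the three $14$-valued triples and their $13$-valued complements are as you list them, and in each case the flip of $i$'s least-valued item with $g_3$ leaves $i$ with value at most $19$ against at least $21$, so every leximin allocation indeed violates \efx. The one thing the paper's proof does that yours omits is to verify that the instance actually \emph{admits} an \efx allocation; without this, the example could in principle be one where no \efx allocation exists at all, in which case the failure would not be attributable to leximin. Your instance does pass this check --- for example $A_1=\{g_2,g_5,g_7\}$, $A_2=\{g_3,g_4,g_6\}$, $A_3=\{g_1,g_8,g_9\}$ is \efx: agents $2$ and $3$ are envy-free, and every rational flip available to agent $1$ against agent $2$ (there are five, none involving $g_2$) removes the envy --- so you should append this short verification to make the separation meaningful.
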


\begin{proof}
    We construct an instance with 3 agents and 9 items. The table below shows the valuations of the 9 items for agents $1$ and $2$ and $3$.

    \begin{center}
    \begin{tabular}{l|c c c c c c c c c}
        & $g_1$ & $g_2$ &$g_3$ & $g_4$ & $g_5$ & $g_6$ & $g_7$& $g_8$& $g_9$ \\
        \hline
         $v_1(\cdot)$&\cellshade 50& 17 &16 &14 &2 &1 & 0& \cellshade 0& \cellshade 0  \\
         $v_2(\cdot)$&46& 17 &\cellshade 16 &\cellshade 15 &3 &\cellshade 3 & 0& 0& 0  \\
         $v_3(\cdot)$&33 & \cellshade 17 &15 &15 &\cellshade 11 &4 & \cellshade 3& 1& 1  \\
    \end{tabular}        
    \end{center}

    First, note that the allocation $\aaa = (A_1, A_2, A_3)$ (shaded in the table above), where $A_1 = (g_1, g_8, g_9)$, $A_2 = (g_3, g_4, g_6)$, and $A_3 = (g_2, g_5, g_7)$, yields a value vector of $(v_1(A_1),v_2(A_2,v_3(A_3))=(50, 34, 31)$. However, this allocation is not \efx, as $v_3(A_3) = 17 + 11 + 3 = 31 < 15 + 15 + 4 = 34 = v_3(A_2)$, meaning agent 3 envies agent 2. Even after performing a rational flip of items $(g_7, g_6)$, we have
    \begin{align*}
        v_3(A_3 + g_6 - g_7) = 17 + 11 + 4 = 32 < 33 = v_3(A_2 - g_6 + g_7).
    \end{align*}
    
    \noindent Hence, there exists a rational flip that does not eliminate the envy.

    Next, we will demonstrate that the allocation above is the unique leximin allocation. More precisely, we will show that in any allocation $\aaa'$ where $v_3(A'_3) > 31$, the value vector is leximin-dominated by $(50, 34, 31)$.
    This is achieved through a case analysis on $A_3$:
    \begin{description}
        \item[Case 1:] If $g_1 \in A'_3$, assume $v_1(A'_1) \geq 31$. This implies that at least two of $\{g_2, g_3, g_4\}$ are allocated to $A'_1$, leading to $v_2(A'_2) \leq 17 + 3 + 3 < 31$. Now, if $v_2(A'_2) \geq 31$, then at least two of $\{g_2, g_3, g_4\}$ are allocated to $A'_2$, resulting in $v_1(A'_1) \leq 17 + 2 + 1 < 31$.  
        \item[Case 2:] If $g_2, g_3 \in A'_3$, then at least one of the agents 1 or 2 receives a value of at most $14 + 2 + 1 < 31$ or $15 + 3 + 3 < 31$.
        \item[Case 3:] If $g_2, g_4 \in A'_3$, then similarly, either $v_1(A_1') < 16 + 2 + 1$ or $v_2(A_2') < 16 + 3 + 3$.
        \item[Case 4:] If $g_3, g_4 \in A'_3$, then either $v_1(A_1') < 17 + 2 + 1$ or $v_2(A_2') < 17 + 3 + 3$.
        \item[Case 5:] If $g_2, g_5, g_6 \in A'_3$, and $g_1 \in A'_1$, then $v_2(A_2') \leq 16 + 15 = 31$. The vector $(50, 32, 31)$ is leximin-dominated by $(50, 34, 31)$. If $g_2$ is in $A'_2$, then $v_1(A_1') \leq 16 + 14 < 31$.
    \end{description}

    In all remaining allocations, agent 3's value remains less than 31, meaning no other allocation leximin-dominates $(50, 34, 31)$.

    Now, observe that an \efx allocation exists. Consider the allocation $\bbb = (B_1, B_2, B_3)$, where $B_1 = (g_1, g_8, g_9)$, $B_2 = (g_3, g_4, g_7)$, and $B_3 = (g_2, g_5, g_6)$. In this allocation, agent 1 does not envy anyone. Agent 2 envies agent 1. However, any rational flip involves $g_1$ and, hence, eliminates the envy. Agent 3 envies both agent 1 and agent 2. Regarding agent's $3$ envy towards agent $1$, any rational flip involves item $g_1$ and also eliminates the envy. As for the envy towards agent 2, we need to consider the rational flip $(g_5, g_4)$, which minimizes the value gain for agent $3$. (If this flip satisfies \efx, then no other flip will violate it). Indeed, performing this rational flip removes the envy, proving that the allocation $\bbb$ is \efx. 
\end{proof}

\end{document}